\newtheorem{theorem}{Theorem}
\newtheorem{lemma}{Lemma}
\newtheorem{claim}{Claim}
\newtheorem{fact}{Fact}
\newtheorem{corollary}{Corollary}
\newtheorem{definition}{Definition}
\newtheorem{example}{Example}
\newtheorem{observation}{Observation}
\newcommand{\qed}{\hfill $\Box$ \bigbreak}
\newenvironment{proof}{\noindent{\bf Proof.~}}{\qed}
\def\cS{{\cal S}}
\def\cD{{\cal D}}
\def\cF{{\cal F}}
\def\cM{{\cal M}}
\newcommand{\apex}{\mbox{\sc apex}}
\def\cI{{\cal I}}
\def\cD{{\cal D}}
\def\cF{{\cal F}}
\def\cT{{\cal T}}
\def\cM{{\cal M}}
\def\cP{{\cal P}}
\newcommand{\dfs}{\mbox{\sc dfs}}
\newcommand{\weight}{\mbox{\rm weight}}
\newcommand{\parent}{\mbox{\rm parent}}
\newcommand{\SD}{\mbox{\sc Spine-Depth}}
\begin{document}

\title{An Optimal Ancestry Labeling Scheme \\ with Applications to XML Trees and  Universal Posets\thanks{Preliminary results of this paper appeared in the proceedings of the 42nd ACM Symposium on Theory of Computing (STOC), 2010, the 21st ACM-SIAM Symposium on Discrete Algorithms (SODA), 2010, and the 21st ACM Symposium on Parallel Algorithms and Architectures (SPAA), 2009, as part of 
\cite{FK-spaa09,FK09,FK10}. 
This research is supported in part by the ANR project DISPLEXITY, and by the INRIA
project GANG.}}

\author{
Pierre Fraigniaud\\[1ex]
{\small CNRS and Univ. Paris Diderot}\\{\small\sl pierre.fraigniaud@liafa.univ-paris-diderot.fr}
\and
Amos Korman\\[1ex]
{\small CNRS and Univ. Paris Diderot}\\{\small\sl amos.korman@liafa.univ-paris-diderot.fr}
}

\date{}

\maketitle

\begin{abstract}
In this paper we solve the  {\em ancestry}-labeling scheme problem which aims at assigning the shortest possible labels (bit strings) to nodes of rooted trees, so that ancestry queries between any two nodes can be answered  by inspecting their assigned labels only. This problem was  introduced more than twenty years ago by  Kannan et al. [STOC~'88], and is among the most well-studied problems in the field of informative labeling schemes.
We construct an ancestry-labeling scheme for $n$-node trees with
label size  $\log_2 n + O(\log\log n)$ bits, thus matching the
$\log_2 n + \Omega(\log\log n)$ bits lower bound given by Alstrup et al. [SODA~'03].
Our scheme is based on a simplified ancestry scheme that operates extremely well on a restricted set of trees. In particular, for the set of $n$-node trees with depth at most $d$, the simplified ancestry scheme enjoys label size of $\log_2 n+2\log_2 d +O(1)$ bits. Since the depth of most XML trees is at most some small constant, such an ancestry scheme may be of practical use. In addition, we also obtain an {\em adjacency}-labeling scheme 
 that labels $n$-node trees of depth  $d$ with labels of size
$\log_2 n+3\log_2 d +O(1)$ bits.
All our schemes assign the labels in linear time, and
guarantee that any  query can be answered in constant time.

Finally, our ancestry scheme finds applications to the construction of small {\em universal} partially ordered sets (posets).
Specifically, for any fixed integer~$k$, it enables the construction of a universal poset of size $\tilde{O}(n^k)$ for the family of $n$-element posets with tree-dimension at most~$k$.
Up to lower order terms, this bound is  tight thanks to a lower bound of $n^{k-o(1)}$ due to Alon and Scheinerman [Order~'88].

\end{abstract}

\thispagestyle{empty}
\newpage
\setcounter{page}{1}

\section{Introduction}
\label{section:Introduction}

\subsection{Background and motivation}

How to  represent a graph in a compact manner is a fundamental data structure question.
In most traditional graph representations, the names (or identifiers)
given to the nodes serve
merely as pointers to entries in a data structure, and thus reveal no information about the graph structure per se. Hence, in a sense, some memory space is  wasted for the storage of content-less data.
In contrast, Kannan et al. \cite{KNR92} introduced the notion of {\em informative labeling schemes},
which involves a mechanism  for assigning short, yet informative, labels to nodes.
Specifically, the goal of such schemes is to assign  labels to nodes in such a way that allows one to infer
information regarding any two nodes {\em directly from their labels}. 
As explicated below, one important question  in the
framework of informative labeling schemes is how to efficiently encode the ancestry relation in trees.
This  is formalized as follows.

\paragraph{The ancestry-labeling scheme problem:}
Given any $n$-node rooted tree $T$,
label the nodes of~$T$ using the shortest possible labels (bit strings) such that, given any pair of nodes $u$ and $v$ in $T$,
one can  determine whether $u$ is an {\em ancestor} of $v$ in $T$ by
merely inspecting the labels of $u$ and $v$.

\bigskip

The  following
simple ancestry-labeling scheme was suggested in \cite{KNR92}.
Given a rooted $n$-node tree $T$, perform a DFS traversal in $T$ starting at the root,
and provide each node $u$ with a
DFS number, $\dfs(u)$, in the range  $[1,n]$. (Recall, in a DFS traversal, a node is visited before any of its children, thus, the DFS number 
of a node is smaller than the DFS number of any of its descendants). The label of a node $u$ is simply the interval $I(u)= [\dfs(u),\dfs(u')]$, where
 $u'$ is the descendant of $u$ with the largest DFS number.
An ancestry query then amounts to an interval containment query between the corresponding labels: a node $u$ is an ancestor of a node $v$  if and only if $I(v)\subseteq I(u)$.
Clearly, the {\em label size}, namely, the maximal number of bits in a label assigned by this ancestry-labeling scheme to any node in any $n$-node tree, is bounded by $2\log n$ bits\footnote{All logarithms in this paper are taken
in base 2.}.

The $2\log n$ bits scheme of \cite{KNR92} initiated an extensive research \cite{AAKMT01,AKM01,ARSODA02,KM01,KMS02,TZ01}
whose goal was to reduce the label size of ancestry-labeling schemes as much as possible. The main motivation behind these works
lies in the fact that a small improvement in the label size of ancestry-labeling schemes may contribute to a significant improvement in the performances of XML search engines.
Indeed, to implement sophisticated queries,
XML documents  are viewed as labeled trees, and
typical queries over the documents amount to
testing relationships between document items, which correspond to ancestry queries
among the corresponding tree nodes \cite{ABS99,DFFLD99,xsl,xslt}.
XML search engines process such queries using an index structure that summarizes this ancestry information. To allow good performances,  a large portion of the XML index structure
resides in the main memory.
Hence, the length of the labels is a main factor which determines the index size. Thus, due to the enormous size
of the Web data, even a small reduction in the label size may contribute  significantly  to both
 memory cost reduction and performance improvement.
A  detailed explanation regarding this application  can be found in various papers on ancestry-labeling schemes (see, e.g., \cite{AAKMT01,KMS02}).

In \cite{ABR05}, Alstrup et al. proved a lower bound of $\log n + \Omega(\log\log n)$ bits
for the  label size of an ancestry-labeling scheme.  On the other hand, thanks to a scheme by Abiteboul et al. \cite{AAKMT01}, the current state of the art upper bound
is $\log n + O(\sqrt{\log n})$ bits. Thus,
  a large gap is still left between
the best known upper and  lower bounds
 on the label size.
The main result of this paper closes the gap. This is obtained by constructing an ancestry-labeling scheme whose label size matches the aforementioned lower bound.

Our scheme is based on a simplified ancestry scheme that operates extremely well on a restricted set of trees. In particular, for the set of $n$-node trees with depth at most $d$, the simplified ancestry scheme enjoys label size of $\log_2 n+2\log_2 d +O(1)$ bits.
This result can be of independent interest for  XML search engines, as a typical XML tree has extremely small depth (cf. \cite{CKM02,DG2006,MTP06,MBV05}).
For example, by examining  about 200,000
XML documents on the Web, Mignet et al.~\cite{MBV05}
found  that \emph{the average depth of an XML tree
is 4, and that 99\% of the trees have depth at most 8}.
Similarly, Denoyer and Gallinari \cite{DG2006} collected about 650,000 XML trees taken from the Wikipedia
collection\footnote{XML trees taken from the Wikipedia
collection have actually relatively larger depth compared to usual XML trees~\cite{Den09}.}, and found that \emph{the average depth of a node is 6.72}.
 
In addition, our 
 ancestry-labeling scheme on arbitrary trees finds applications in the context of  universal partially ordered sets (posets). Specifically, the bound on the label size translates 
 to an upper bound on the size of the smallest  universal poset  for the family
of all $n$-element posets with tree-dimension at most $k$ (see Section \ref{sec:preliminaries} for the definitions).
It is not difficult to show  that the smallest size of such a universal poset  is at most $n^{2k}$.
On the other hand, it follows from a result by Alon and Scheinerman \cite{AS88} that
  this size is also at least $n^{k-o(1)}$. As we show, it turns out that the real bound is much closer to this lower bound
than to the $n^{2k}$ upper bound.

\subsection{Related work}\label{sec:related}

\subsubsection{Labeling schemes}

As mentioned before, following the $2\log n$-bit ancestry-labeling scheme in \cite{KNR92}, a considerable amount of research has been devoted to improve the upper bound on the label size as much as possible. Specifically, \cite{AKM01} gave a first non-trivial upper bound of $\frac{3}{2}\log n + O(\log\log n)$ bits. In~\cite{KM01}, a scheme with label size $\log n + O(d\sqrt{\log n})$  bits was constructed to detect ancestry only
 between nodes at distance at most $d$ from each other. An ancestry-labeling scheme with  label size of $\log n +O(\log n/\log\log n)$ bits was given in \cite{TZ01}. The current state of the art upper bound of  $\log n + O(\sqrt{\log n})$  bits was given in \cite{ARSODA02} (that scheme
is described in detail in the journal publication~\cite{AAKMT01} joint with \cite{AKM01}).
Following  the aforementioned results on ancestry-labeling schemes for general rooted trees,  \cite{KMS02} gave 
an experimental comparison of different ancestry-labeling schemes  over XML tree instances that appear in ``real life''.


The ancestry relation is the transitive closure of the {\em parenthood} relation. Hence, the following parenthood-labeling scheme problem is inherently related to the ancestry-labeling scheme problem:
given a rooted tree $T$, label the nodes of $T$ in the most compact way such that
 one can  determine whether $u$ is a  {\em parent} of $v$ in $T$ by merely inspecting the corresponding labels.
 The parenthood-labeling scheme problem was also introduced in  \cite{KNR92},
and  a very simple parenthood
scheme was constructed there,  using labels of size
at most $2\log n$ bits. (Actually,  \cite{KNR92} considered  {\em adjacency}-labeling schemes in trees rather than parenthood-labeling schemes, however,
such schemes are equivalent up to a constant number of bits in the label size\footnote{To see this equivalence,
observe that one can
construct a parenthood-labeling  scheme from an adjacency-labeling  scheme in trees,
as follows.
Given a rooted tree $T$, first label the nodes of $T$ using the adjacency-labeling  scheme (which ignores the fact that $T$ is rooted). Then,
for each node $u$, in addition to the label given to it by the adjacency-labeling scheme, add two more bits,
 for encoding $d(u)$, the distance from $u$ to the root, calculated modulo 3. Now the parenthood-labeling scheme follows by observing that
for any two nodes $u$ and $v$ in a tree, $u$ is a parent of $v$ if and only if $u$ and $v$ are adjacent and $d(u)=d(v)-1$ modulo 3.}). By now, the parenthood-labeling scheme problem is almost completely closed thanks to Alstrup  and Rauhe \cite{AR02+}, who constructed a parenthood  scheme for $n$-node trees with label size  $\log n +O(\log^* n)$ bits. In particular, this bound indicates that encoding ancestry in trees is strictly
more costly than  encoding parenthood.

Adjacency labeling schemes where studied for other types of graphs, including, general graphs~\cite{Alstrup-STOC}, bounded degree graphs \cite{Noy-ICALP}, and planar graphs \cite{Gavoille-Planar}.
Informative labeling schemes were also proposed for other graph problems,
including distance \cite{ABR05,GPPR01,T01},
routing \cite{FG01,TZ01}, flow \cite{KKKP04,KK06}, vertex connectivity \cite{HL09,KKKP04,K07}, and nearest common ancestor in trees \cite{AGKR01,Alstrup-NCA,Peleg00:lca}.

Very recently, Dahlgaard et al. \cite{Ancestry-noy} and Alstrup et al. \cite{Alstrup-FOCS} claim to provide asymptotically optimal schemes for the ancestry problem and the adjacency problem on trees, respectively. 

\subsubsection{Universal posets}

When considering infinite posets, it is known that
a countable universal poset for the family of all countable posets exists.
This classical result was proved several times \cite{F53,J56,J57} and, in fact, as mentioned in \cite{HL05}, has
 motivated the whole research area of category theory.
 
We later give a simple relation between the label size of  {\em consistent} ancestry-labeling schemes and the size of 
universal posets for the family
of all $n$-element posets with {\em tree-dimension} at most~$k$ (see Section \ref{sec:preliminaries} for the corresponding definitions).
The $2\log n$-bit ancestry-labeling scheme of \cite{KNR92} is consistent,
 and thus it provides yet another evidence for the existence of a
 universal poset with~$n^{2k}$ elements for the family
of all $n$-element posets with tree-dimension at most $k$.
It is not clear whether the ancestry-labeling schemes in  \cite{AKM01,ARSODA02,KM01,TZ01}  can be somewhat  modified to be consistent
and still maintain the same label size. However, even if this is indeed the case, the   universal poset  for the family
of all $n$-element posets with tree-dimension at most $k$ that would be obtained from those schemes,
 would be of size $\Omega(n^k 2^{k\sqrt{\log n}})$.

The lower bound of
\cite{ABR05}
 implies a lower bound of
$\Omega(n\log n)$ for the number of elements in a universal poset for the family of
$n$-element posets with tree-dimension $1$.
As mentioned earlier, for fixed $k>1$, the result of Alon and Scheinerman \cite{AS88}
implies
 a lower bound of $n^{k-o(1)}$ for the number of elements in a universal poset for the family of
$n$-element posets with tree-dimension at most $k$.

\subsection{Our contributions}

The main result of this paper provides  an  ancestry-labeling scheme for $n$-node  rooted trees,
 whose label size is $\log n + O(\log\log n)$ bits. This scheme assigns the labels to the nodes of any tree in linear time and   guarantees that any ancestry query is answered in constant time.
By doing this, we solve the ancestry-labeling scheme problem which is among the main open problems in the field of informative labeling schemes. 

Our main scheme is based on a simplified ancestry scheme that is particularly efficient on a restricted set of trees, which includes the set of $n$-node trees with depth at most $d$. For such trees, the simplified ancestry scheme enjoys label size of $\log_2 n+2\log_2 d +O(1)$ bits. 
A simple trick allows us to use this latter ancestry-labeling scheme for designing a \emph{parenthood}-labeling scheme for $n$-node
 trees of depth at most $d$ using labels of
 size $\log n +3\log d +O(1)$ bits. Each of these two schemes assigns the labels to the nodes of any tree in linear time. The schemes also guarantee that the corresponding queries are answered in constant time.
 
Our schemes rely on two novel tree-decompositions. The first decomposition, called \emph{spine} decomposition, bears similarities with the classical heavy-path decomposition of Sleator and Tarjan~\cite{Sleator}. It is used for the construction of our simplified ancestry-labeling scheme. Our main ancestry-labeling scheme uses another tree-decomposition, called \emph{folding} decomposition. The spine decomposition of the folding decomposition of any tree has a crucial property, that is heavily exploited in the construction of our main labeling scheme. 

Finally, we  establish a simple relation between compact ancestry-labeling schemes and small universal posets. Specifically,
we show that  there exists a  {\em consistent} ancestry-labeling scheme for $n$-node forests  with label size $\ell$ 
 if and only if, for any integer $k\geq 1$, there exists a universal poset with
$2^{k\ell}$ elements for the family of $n$-element posets with tree-dimension at most $k$.
Using this equivalence, and slightly modifying our ancestry-labeling scheme,
we prove that for any integer~$k$, there exists a universal poset of size $\tilde{O}(n^k)$ for the family
of all $n$-element posets with tree-dimension at most $k$. Up to lower order terms\footnote{The $\tilde{O}$ notation hides polylogarithmic terms.}, this bound is tight.

\subsection{Outline}

Our paper is organized as follows. Section~\ref{sec:preliminaries} provides the essential  definitions, including the definition of the spine decomposition. 
In Section~\ref{sec:bounded} we describe our labeling schemes designed for a restricted family of trees, which includes trees of bounded depth.
The main result regarding the construction of the optimal ancestry-labeling scheme is presented
in Section \ref{sec:main}.
Our result concerning small universal posets appears in Section~\ref{sec:poset}.   Finally, in Section~\ref{sec:conclusion}, we 
conclude our work and introduce some directions for further research on randomized labeling schemes. 

\section{Preliminaries}\label{sec:preliminaries}

Let $T$ be a {\em rooted} tree, i.e., a tree with a designated node $r$ referred as the {\em root} of $T$. A rooted forest is a forest consisting of several rooted trees.
The {\em depth} of a node $u$ in some (rooted) tree $T$
is defined as the smallest number of nodes on the path leading from $u$ to the root.
In particular, the depth of the root is 1. The  depth of a rooted tree is defined as the maximum depth over all its nodes,
and the depth of a rooted forest is defined as the maximum depth over all the trees in the forest.

For two nodes $u$ and $v$ in a rooted tree $T$,
we say that $u$ is an {\em ancestor} of $v$ if $u$ is one of the nodes on the shortest path  in $T$ connecting $v$ and the root $r$.  (An ancestor of $v$  can be $v$ itself; Whenever we consider an ancestor $u$ of a node $v$, where $u\neq v$, we refer to $u$ as a \emph{strict} ancestor of~$v$). 
For two nodes $u$ and $v$ in some (rooted) forest $F$, we say that $u$ is an {\em ancestor} of $v$ in $F$ if and only if 
$u$ and $v$ belong to the same rooted tree in $F$, and $u$ is an ancestor of $v$ in that tree.
 A node $v$ is a {\em descendant} of $u$ if and only if $u$ is an ancestor of $v$. For every non-root node $u$, let $\parent(u)$ denote the parent of $u$, i.e., the ancestor of $u$ at distance~1 from it. 

 The {\em size} of $T$, denoted by $|T|$, is
the number of nodes in $T$.
The {\em weight} of a node $u\in T$, denoted by $\weight(u)$,
is defined as  the number of descendants of $u$, i.e., $\weight(u)$ is the size of the subtree hanging down from $u$.
In particular, the weight of the root is  $\weight(r)=|T|$.

For every integer $n$, let $\cT(n)$ denote the family of all rooted trees of size at most $n$, and  let $\cF(n)$ denote
the family of all forests of rooted trees, were each forest in  $\cF(n)$ has at most  $n$ nodes.


For two integers $a\leq b$, let $[a,b]$ denote the set of integers
$\{a,a+1,\cdots, b\}$. (For $a<b$, we sometimes use the notation $[a,b)$ which simply denotes the set of integers
$\{a,a+1,\cdots, b-1\}$). We refer to this set as an {\em interval}.
For two intervals $I=[a,b]$ and $I'=[a',b']$, we say that $I\prec I'$ if $b<a'$.
The {\em size} of an interval $I=[a,b]$ is $|I|=b-a+1$,
namely, the number of integers
in~$I$.

\subsection{The spine decomposition}\label{sub-spine}

Our ancestry scheme uses a novel decomposition of trees, termed  the {\em spine} decomposition (see Figure~\ref{fig:spine}). This decomposition bears similarities to the classical {\em heavy-path} decomposition of Sleator and Tarjan~\cite{Sleator}. Informally, the spine decomposition is based on a path called {\em spine} which starts from the root, and goes down the tree along the heavy-path until reaching a node whose heavy child has less than half the number of nodes in the tree. 
This is in contrast to the heavy-path which goes down until reaching a node $v$ whose heavy child has less than half the number of nodes in the subtree rooted at $v$. Note that the length of a spine is always at most the length of the heavy-path but can be considerably smaller. (For example, by augmenting a complete binary tree making it  slightly unbalanced, one can create a tree with heavy-path of length $\Omega(\log n)$ while its spine is of length $O(1)$.)

Formally, given a tree $T$ in some forest $F$, we define the {\em spine} of $T$ as the following path $S$. 
Assume that each node $v$ holds its weight  $\omega(v)$ (these weights can easily be computed in linear time). We define the construction of $S$ iteratively. In the $i$th step, assume that the path $S$ contains the vertices $v_1,v_2,\cdots v_i$, where $v_1$ is the root $r$ of $T$~and~$v_j$ is a child of~$v_{j-1}$, for $1<j\leq i$. If the weight of a child of $v_i$ is more than  half  the weight of the root $r$ then this child  is added to $S$ as $v_{i+1}$. (Note, there can be at most one such child of $v_i$.)
Otherwise, the construction of $S$ stops.
(Note that the spine may consist of only one node, namely, the root of $T$.)
 Let $v_1,v_2,\cdots,v_{s}$ be the nodes of the spine $S$ (Node $v_1$ is the root $r$, and $v_s$ is the last node added to the spine).
 It follows from the definition that if $1\leq i<j\leq s$ then $v_i$ is a strict ancestor of $v_j$. The {\em size} of the spine $S$ is~$s$. 
 We split the nodes  of the spine $S$ to two types. Specifically, the root of $T$, namely $v_1$, is called the {\em apex} node, while all other spine nodes, namely, $v_2,v_3,\cdots,v_s$, are called {\em heavy} nodes. (Recall that the weight of each heavy node is larger than half the weight of the apex node).
 
 By removing the nodes in the spine $S$ (and the edges connected to them),
 the tree~$T$ breaks into $s$ forests $F_1, F_2,\cdots, F_{s}$, such that the following properties holds for each $1\leq i\leq s$:

 \begin{itemize}
 \item
{\bf P1.} In $T$, the roots of the trees in $F_i$ are connected to $v_i$;
 \item
{\bf P2.} Each tree in $F_i$ contains at most $|T|/2$ nodes;
 \item
{\bf P3.} The forests $F_i$ are unrelated in terms of the ancestry relation in $T$.
 \end{itemize}

\begin{figure}
\begin{center}
\includegraphics[width=.5\linewidth]{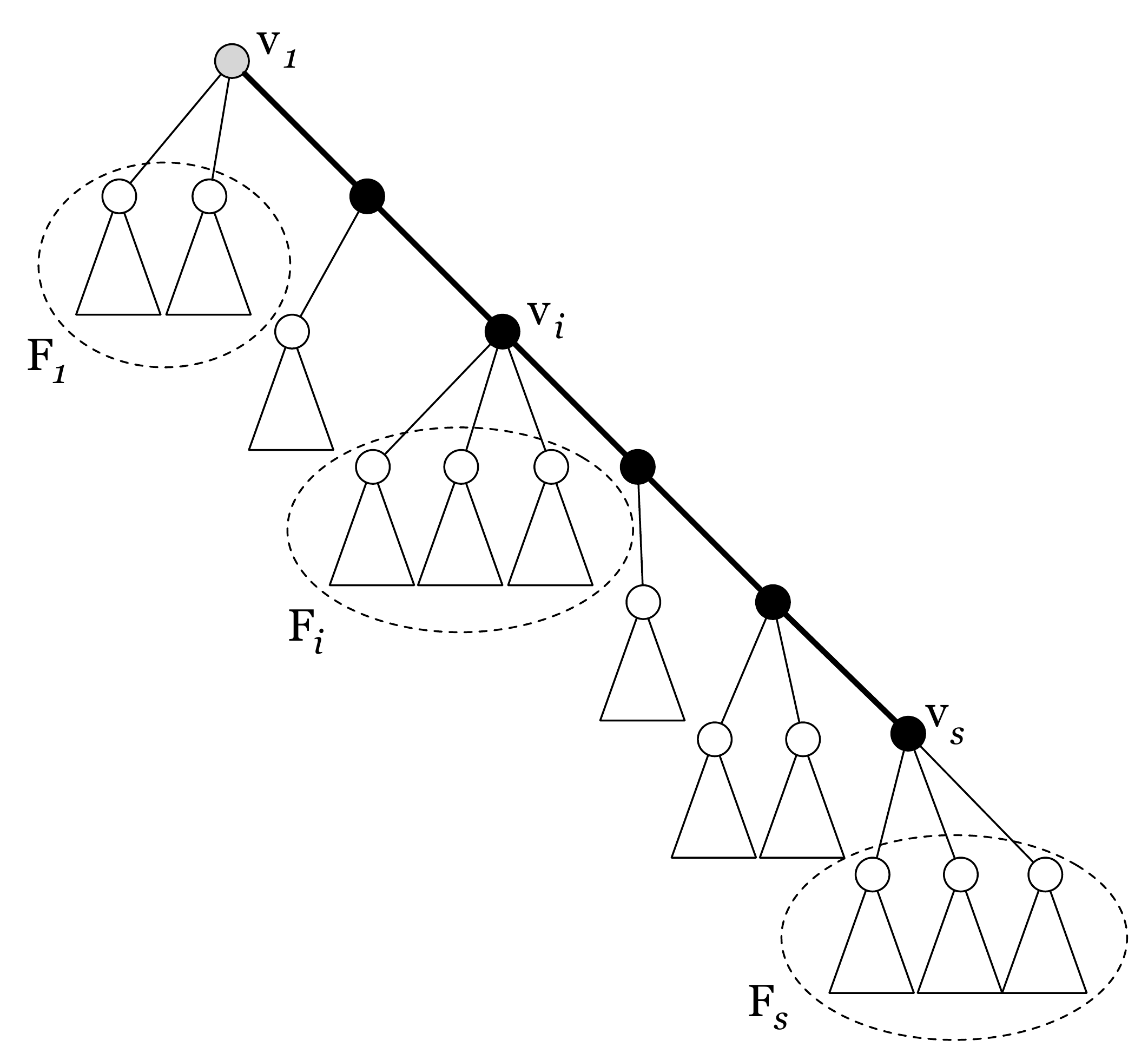}
\caption{Spine decomposition}
\label{fig:spine}
\end{center}
\end{figure}

The spine decomposition is constructed iteratively, where each level of the process follows the aforementioned description. That is, given a forest $F$, after specifying the spine $S$ of each tree $T$ in $F$, we continue to the next level of the process, operating in parallel on the forests  $F_1, F_2,\cdots, F_{s}$.  The recursion  implies that each node is eventually classified as either apex or heavy. 
The {\em depth} of the spine decomposition of a forest $F$, denoted $\SD(F)$ is the maximal size of a spine, taken over all spines obtained in the spine decomposition of $F$. Note that  $\SD(F)$ is bounded from above by the depth of $F$.

For any two integers $n$ and $d$, let $\cF(n,d)$ denote the set of all rooted forests with at most $n$ nodes, whose spine decomposition depth is at most~$d$.

\subsection{Ancestry labeling schemes}\label{subsec:consistent}

An {\em ancestry-labeling
scheme} $( \cM,\cD )$  for a family $\cF$  of forests of rooted trees is
composed of the following components:
\begin{enumerate}
\item A {\em marker} algorithm $\cM$ that
assigns labels (i.e., bit strings) to the nodes of all forests in $\cF$.
\item A  {\em
decoder} algorithm $\cD$ that given any two labels $\ell_1$ and $\ell_2$ in the output domain of $\cM$, returns a boolean value $\cD(\ell_1,\ell_2)$.
\end{enumerate}

These components must satisfy that if $L(u)$ and $L(v)$ denote the labels assigned by the marker algorithm  to
two nodes $u$ and $v$ in some rooted forest  $F\in\cF$, then
\[
\cD(L(u),L(v))=1 \iff \mbox{$u$ is an ancestor of $v$ in $F$.}
\]
It is important to note that the decoder algorithm $\cD$ is independent of the forest $F$. That is, given the labels of two nodes,
the decoder algorithm decides the ancestry relationship between the corresponding nodes without knowing to which forest in $\cF$ they belong.

The most common complexity measure used for evaluating an ancestry-labeling scheme is the {\em label size}, that is,
the maximum number of bits in a label assigned by
$\cM$, taken over all nodes in all  forests in $\cF$.
When considering the {\em query time} of the decoder algorithm, we use the RAM model of computation, and assume that the length of a computer
word is $\Theta(\log n)$ bits.
Similarly to previous works on ancestry-labeling schemes, our decoder algorithm uses only basic RAM operations (which are assumed to take constant time).
Specifically, the basic operations used by our decoder algorithm are the following:
addition, subtraction, multiplication, division,  left/right shifts, less-than comparisons, and extraction of the index of the least significant 1-bit.

Let $\cF$  be a family of forests of rooted trees.
We say that an ancestry-labeling scheme $(\cM,\cD)$ for  $\cF$  is {\em consistent}
if  the decoder algorithm $\cD$ satisfies the following conditions, for any three pairwise different labels $\ell_1,\ell_2$ and $\ell_3$ in the output domain of $\cM$:
\begin{itemize}
\item
{\em Anti-symmetry}: if $\cD(\ell_1,\ell_2)=1$ then $\cD(\ell_2,\ell_1)=0$, and
\item
{\em Transitivity}: if $\cD(\ell_1,\ell_2)=1$ and $\cD(\ell_2,\ell_3)=1$ then $\cD(\ell_1,\ell_3)=1$.
\end{itemize}

Note that by the definition  of an ancestry-labeling scheme $(\cM,\cD)$, the decoder algorithm  $\cD$ trivially satisfies the two conditions above if
 $\ell_i=L(u_i)$ for $i=1,2,3$, and $u_1$, $u_2$ and $u_3$ are different nodes belonging to the same forest in $\cF$.

\subsection{Small universal posets}

The {\em size} of a partially ordered set (poset) is the number of elements in it.
A poset $(X,\leq_X)$ contains a poset $(Y,\leq_Y)$ as an {\em induced suborder}  if there exists an injective mapping
$\phi:Y\to X$ such that for any two elements $a,b\in Y$: we have $$a\leq_Y b \iff \phi(a)\leq_X \phi(b).$$
A poset $(X,\leq)$ is called {\em universal} for a family of posets $\cP$ if $(X,\leq)$  contains every  poset in $\cP$ as an induced suborder. 
If $(X,\leq)$ and $(X,\leq')$  are orders on the set $X$, we say that $(X,\leq')$
is an {\em extension} of $(X,\leq)$ if, for any two elements $x,y\in X$,  $$x\leq y \;  \Longrightarrow  \; x\leq' y.$$
A common way to characterize a poset $(X,\leq)$ is by its {\em  dimension}, that is,
 the smallest number of linear (i.e., total order) extensions
 of $(X,\leq)$  the intersection of which gives rise to $(X,\leq)$ \cite{T92}.  The following fact is folklore, and its proof straightforward (this proof is however stated for the sake of completeness). 
 
  \begin{fact}\label{fact:poset}
 The smallest size of a universal poset for the family of $n$-element posets with dimension at most $k$
 is at most $n^k$.
 \end{fact}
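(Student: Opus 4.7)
The plan is to exhibit a concrete universal poset of size $n^k$, namely the product order $([n]^k,\leq)$ in which $(a_1,\ldots,a_k)\leq(b_1,\ldots,b_k)$ iff $a_i\leq b_i$ for every coordinate $i$. Given an arbitrary poset $P=(X,\leq_P)$ with $|X|\leq n$ and dimension at most $k$, I would first appeal to the definition of dimension to write $\leq_P=\bigcap_{i=1}^{k}\leq_{L_i}$ as the intersection of $k$ linear extensions of $\leq_P$ (if the true dimension is strictly smaller than $k$, I simply repeat one of the extensions to reach exactly $k$; if $|X|<n$, I pad $X$ with extra elements placed anywhere in each $L_i$, which cannot alter the relations among the original elements).

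Next I would turn each linear extension $L_i$ into a rank function $\pi_i\colon X\to[n]$ sending $x$ to its position in $L_i$, and define the map $\phi\colon X\to[n]^k$ by $\phi(x)=(\pi_1(x),\pi_2(x),\ldots,\pi_k(x))$. Injectivity of $\phi$ is immediate from injectivity of any single $\pi_i$.

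The remaining step is the bi-implication required of an induced suborder embedding. I would chain together three equivalences: $x\leq_P y$ iff $x\leq_{L_i}y$ for every $i$ (by the intersection characterization of the dimension), iff $\pi_i(x)\leq\pi_i(y)$ for every $i$ (by the definition of the rank functions), iff $\phi(x)\leq\phi(y)$ in the product order on $[n]^k$ (by definition of the product order). This shows that $\phi$ embeds $P$ as an induced suborder of $([n]^k,\leq)$, so $([n]^k,\leq)$ is universal for the stated family. Since $|[n]^k|=n^k$, the bound follows.

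There is no real obstacle: the argument is folklore precisely because the realizer-by-linear-extensions definition of dimension is tailor-made for the product-order embedding. The only minor bookkeeping is the padding step that normalizes both the number of extensions to exactly $k$ and the ground set to exactly $n$; I would mention this briefly so that the map $\phi$ indeed lands in a single fixed universal poset $([n]^k,\leq)$ independent of the particular $P$ being embedded.
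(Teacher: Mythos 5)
Your proof is correct and follows essentially the same route as the paper: embed the poset into the coordinatewise product order on $[1,n]^k$ by mapping each element to the vector of its positions in the $k$ linear extensions of a realizer. The padding remarks are harmless bookkeeping that the paper handles implicitly.
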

 
 \begin{proof}
 Let $\leq$ be the natural total order defined on the set of integers. We present a universal poset $(U,\preceq)$ for the family of $n$-element posets with dimension at most $k$. The set of elements $U$ is 
 $$U=[1,n]^k=\{u=(u_1,u_2,\cdots,u_k)\mid u_i \in [1,n] \;\mbox{for all}\;  i\in [1,k]\},$$
 and the relation $\preceq$ is defined for two elements $u,v \in U$ by:
 $$
 u \preceq v \iff  u_i\leq v_i, \; \forall i\in[1,k].
 $$
 Clearly $U$ has~$n^k$ elements.
 Now consider any $n$-element  poset $(X,\unlhd)$ with dimension at most $k$. 
 For $i\in [1,k]$, let $(L_i,\leq_i)$ be the  total orders the intersection of which gives rise to $(X,\unlhd)$.
 By the definition of intersection, there exists a collection of injective mappings $\psi_i: X\rightarrow L_i$ such that for any two elements $x,y \in X$, we have $$x \unlhd y \iff \psi_i(x) \leq_i \psi_i(y), \; \forall  i\in [1,k].$$
 For every $i\in[1,k]$, since $(L_i,\leq_i)$ is a  total order, it is isomorphic to $([1,n],\leq)$, that is, there exists an injective and onto mapping $\phi_i:L_i\rightarrow [1,n]$ such that for $a,b\in L_i$, we have $$a\leq_i b \iff \phi_i(a)\leq \phi_i(b).$$ We define the mapping $f:X\rightarrow U$ so that for any $x \in X$, we have the $i$th coordinate $f(x)_i\in [1,n]$ of $f(x)$ be defined as $f(x)_i=\phi_i\circ\psi_i (x)$. The fact that~$f$ preserves the order $\unlhd$, i.e., the fact that, for every $x,y\in X$, $$x\unlhd y \iff  f(x) \preceq f(y).$$ is now immediate.
  \end{proof}

  Another way of characterizing a poset $(X,\leq)$ is by its \emph{tree}-dimension.
A poset $(X,\leq)$ is a {\em tree\footnote{Note that the term ``tree" for ordered sets is used in various senses in the literature, see e.g., \cite{TM77}.}} \cite{B93,W62}
if, for every pair $x$ and $y$ of incomparable elements in $X$, there does not exist an element $z\in X$ such that
$x\leq z$ and $y\leq z$. (Observe that the Hasse diagram  \cite{T92} of a tree poset is a forest of rooted trees).
 The \emph{tree-dimension} \cite{B93} of a poset $(X,\leq)$ is
the smallest number of tree extensions of $(X,\leq)$ the intersection of which gives rise to  $(X,\leq)$.

For any two positive integers $n$ and $k$, let $\cP(n,k)$ denote the family of all $n$-element (non-isomorphic) posets with tree-dimension at most $k$.
The following fact follows rather directly from previous work.

\begin{fact}
Fix an integer $k$ and let
 $M(n)$ denote the smallest size of a universal poset for $\cP(n,k)$.
We have $n^{k-o(1)} \leq M(n)\leq n^{2k}$. 
\end{fact}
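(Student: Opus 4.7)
My plan is to establish the two bounds separately; both should follow rather directly from ingredients already at hand.

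For the upper bound $M(n)\leq n^{2k}$, the key step is to show that every $n$-element tree poset has dimension at most~$2$. Once this is in hand, Fact~\ref{fact:poset} applied with $2k$ in place of $k$ immediately produces a universal poset of size $n^{2k}$ for the family of all $n$-element posets of dimension at most~$2k$, and in particular for $\cP(n,k)$. To see that tree posets have dimension at most~$2$, I would use the DFS-interval encoding already recalled in the introduction: fix any rooted forest $F$ realizing the tree poset, perform a DFS on $F$, and associate with each node $u$ the pair $(a(u),b(u))$, where $a(u)=\dfs(u)$ and $b(u)$ is the largest DFS number appearing in the subtree of $u$. Then $u$ is an ancestor of $v$ in $F$ if and only if $a(u)\leq a(v)$ and $b(v)\leq b(u)$, which exhibits the tree poset as the intersection of the two linear extensions induced by $a$ and by $-b$. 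Ties in the latter can be broken consistently using~$a$, since equal values of $b$ force the two corresponding nodes to share a rightmost descendant and hence to lie on a common root-to-leaf chain. A poset in $\cP(n,k)$, being the intersection of $k$ tree extensions, is then the intersection of at most $2k$ linear extensions, and therefore has dimension at most~$2k$.

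For the lower bound, I would invoke the result of Alon and Scheinerman~\cite{AS88}, which yields a lower bound of $n^{(k-o(1))n}$ on the number of non-isomorphic $n$-element posets of dimension at most~$k$. Since every linear order is vacuously a tree poset, dimension at most~$k$ implies tree-dimension at most~$k$, and the same lower bound therefore applies to $|\cP(n,k)|$. If $U$ is any universal poset for $\cP(n,k)$, each element of $\cP(n,k)$ is realized as an induced suborder on some $n$-subset of $U$, so $|\cP(n,k)|\leq\binom{|U|}{n}\leq|U|^n$. Taking $n$th roots gives $|U|\geq n^{k-o(1)}$.

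Neither direction presents a real obstacle; this is exactly the ``follows rather directly from previous work'' alluded to in the statement. The only point worth flagging is the factor $2$ in the exponent of the upper bound, forced because tree posets can have dimension as large as~$2$ (and not $1$). Closing this multiplicative gap between $n^k$ and $n^{2k}$ is precisely the content of the refined construction developed in Section~\ref{sec:poset}, which combines the main ancestry-labeling scheme of this paper with the equivalence between consistent ancestry schemes and small universal posets.
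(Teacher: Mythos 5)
Your upper bound is correct and is essentially the paper's argument: you show that every tree poset has dimension at most $2$, deduce that tree-dimension at most $k$ implies dimension at most $2k$, and apply Fact~\ref{fact:poset} with $2k$ in place of $k$. The only difference is cosmetic: the paper realizes a tree poset by two DFS traversals (the second reversing the order of trees and of children), while you use the pair formed by the DFS number and the largest DFS number in the subtree, with ties in the second coordinate broken by the first; your tie-breaking argument is sound, since equal right endpoints force the two intervals to be nested.

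The lower bound, however, contains a genuine gap. You attribute to Alon and Scheinerman~\cite{AS88} a bound of $n^{(k-o(1))n}$ on the number of \emph{non-isomorphic} $n$-element posets of dimension at most $k$. That count is false: every isomorphism class of such a poset has a representative realized by $k$ linear orders the first of which is the natural order on $[1,n]$, so the number of classes is at most $(n!)^{k-1}\leq n^{(k-1)n}$, which is smaller than your claimed count for large $n$ (and for $k=1$ there is exactly one class). What \cite{AS88} actually yields is $n^{n(k-o(1))}$ \emph{labeled} posets of dimension at most $k$, hence at least $n^{n(k-o(1))}/n!$ non-isomorphic ones, which is how the paper quotes it. With the correct count, your chain $|\cP(n,k)|\leq{|U|\choose n}\leq |U|^n$ throws away a factor of $n!$ and only gives $M(n)\geq n^{k-1-o(1)}$, falling short of the stated bound by a full factor of $n$ in the exponent. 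The repair is to keep the factorial on both sides: use ${M(n)\choose n}\leq M(n)^n/n!$, so that $M(n)^n/n!\geq |\cP(n,k)|\geq n^{n(k-o(1))}/n!$; the $n!$ cancels and $M(n)\geq n^{k-o(1)}$ follows, which is exactly the paper's derivation.
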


\begin{proof}
The fact that the smallest size of a universal poset  for $\cP(n,k)$  is at most $n^{2k}$ follows from Fact \ref{fact:poset}, and from the
well known fact that the dimension of a poset is  at most twice its tree-dimension\footnote{This follows from the fact that a tree-poset $T=(X,\preceq)$ has dimension at most 2. Indeed, consider the  two linear orders for $T$ obtained as follows. We perform two DFS traversals over the Hasse  diagram of $T$, which is a directed forest~$F$, starting from the root  in each tree in $F$, so that to provide every element $x$ with two DFS numbers, $\dfs_1(x)$ and $\dfs_2(x)$. DFS$_1$ is arbitrary,  and DFS$_2$ reverses the order in which the trees are considered  in DFS$_1$, and in which the children are visited in DFS$_1$, so that $x\preceq y$ if and only if $\dfs_1(x)\leq \dfs_1(y)$ and $\dfs_2(x)\leq \dfs_2(y)$.}.
For the other direction, Alon and Scheinerman showed
 that
 the number of non-isomorphic $n$-element posets with dimension at most $k$ is at least $n^{n(k-o(1))}/n!$~(this result is explicit in the proof of Theorem 1 in \cite{AS88}).
Since the dimension of a poset is at least its tree-dimension, this result of
\cite{AS88} yields also a lower bound  on 
 the number of non-isomorphic $n$-element posets with tree-dimension at most $k$, specifically, we have $$n^{n(k-o(1))}/n!\leq |\cP(n,k)|~.$$ On the other hand,   $$|\cP(n,k)|\leq {M(n) \choose n}$$ by definition of $M(n)$. 
Therefore, by combining the above two inequalities, it directly follows that $M(n)\geq n^{k-o(1)}$.
\end{proof}

\section{Labeling schemes for forests with bounded spine decomposition depth}\label{sec:bounded} 

In this section we construct an efficient ancestry-labeling scheme for forests with bounded  spine decomposition depth.
Specifically, for forests with spine decomposition depth at most $d$, our scheme enjoys label size of $\log n +2\log d +O(1)$ bits. (Note that the same bound holds also for forests with depth at most $d$.) Moreover, our scheme has $O(1)$ query time and $O(n)$ construction time. 

\subsection{Informal description}

Let us first explain the intuition behind our construction. Similarly to the simple ancestry scheme in~\cite{KNR92}, we 
map the nodes of forests  to a  set of intervals $\cI$, in a way that relates the ancestry relation in each forest with the partial order defined on intervals through containment. I.e., a label of a node is simply an interval, and the decoder decides the ancestry relation between two given nodes using the interval containment test on the corresponding intervals.  
While the number of intervals used for the scheme in \cite{KNR92} is $O(n^2)$, we managed to show that, if we restrict our attention to forests with spine decomposition depth bounded by $d$, then one can map the set of such forests to a set of intervals $\cI$, whose size is only $|\cI|=O(nd^2)$. Since a label is a pointer to an interval in~$\cI$, the bound of $\log n +2\log d +O(1)$ bits for the label size follows. In fact, we actually manage to  provide an explicit description of each interval, still using $\log n +2\log d +O(1)$ bits, so that to achieve constant query time. 

\subsubsection{Intuition}

Let $\cF(n,d)$ be the family  of all forests  with at most $n$ nodes and spine decomposition depth at most $d$. The challenge of mapping the nodes of forests in $\cF(n,d)$ to a small set of intervals~$\cI$ is tackled recursively, where the recursion is performed over the number of nodes. That is, for $k=1,2,\cdots,\log n$, 
level $k$ of the recursion deals with forests of size at most $2^k$. When handling the next level of the recursion, namely level $k+1$, the difficult case is when we are given a forest $F$ containing a tree $T$ of size larger than $2^k$, i.e., $2^k<|T|\leq 2^{k+1}$.  Indeed,  trees in $F$ of size at most $2^k$ are essentially handled at level $k$ of the recursion. To map the nodes of Tree $T$, we use the spine decomposition (see Subsection~\ref{sub-spine}).

Recall the spine $S=(v_1,\dots,v_s)$ of $T$ and the forests $F_1,F_2,\cdots, F_s$, obtained by removing $S$ from $T$.
Broadly speaking, Properties P2 and P3  of the spine decomposition give hope that the forests $F_i$, $i=1,2,\cdots,s$, could be mapped relying on the previous level $k$ of the recursion. 
Once we guarantee this, we map the $s$ nodes of the spine $S$ in a manner that respects the ancestry relations. 
That is, the interval associated with a spine node $v_i$ must contain all intervals associated with descendants of $v_i$ in $T$, which are, specifically, all the spine nodes $v_j$, for $j> i$, as well as all nodes in $F_j$, for~$j\geq i$.
Fortunately, the number of nodes on the spine is $s\leq d$, hence we need to deal with only few such nodes. 

The intervals in $\cI$ are classified into $\log n$ {\em levels}. These interval levels correspond to the levels of the recursion in a manner to be described. 
Level $k$ of the recursion  maps forests (of size at most~$2^k$) into $\cI_k$, the set of intervals of level at most $k$. In fact, even in  levels of recursion higher than $k$, the nodes in forests containing only trees of size at most $2^k$ are mapped into $\cI_k$. (In particular, a forest consisting of $n$ singleton nodes is  mapped into $\cI_1$.) 
Regarding level $k+1$, a forest of size at most $2^{k+1}$ contains at most one  tree~$T$, where $2^k<|T|\leq 2^{k+1}$. In such a case, the nodes on the spine $S$ of $T$ are mapped to level-$(k+1)$ intervals, and the forests $F_1,F_2,\cdots, F_s$ are mapped to $\cI_k$. 

As mentioned before, to have the ancestry relations in $T$ correspond to the inclusion relations in $\cI$, the level-$(k+1)$ interval $I(v_i)$ to which 
some spine node $v_i$ is mapped, must contain the intervals associated with nodes which are descendants of $v_i$ in $T$. In particular, $I(v_i)$ must contain all the intervals associated with the nodes in the forests $F_i,F_{i+1},\cdots, F_s$. 
Since the number of such intervals is at least $\sum_{j=1}^i |F_i|$ (note, this value can be close to $2^{k+1}$), the length of $I(v_i)$ must be relatively large. Moreover, since level-1 intervals are many  (at least $n$ because they need to be sufficiently many to handle a forest containing $n$ singleton nodes), and since $\cI$ contains all level-$k$ intervals, for $\log n$ values of $k$, we want the number of level-$k$ intervals to decrease with $k$, so that the total number of intervals in $\cI$ will remain small (recall, we would like to have $|\cI|=O(nd^2)$). 
Summing up the discussion above, in comparison with the set of level-$k$ intervals, we would like the set of level-$(k+1)$ intervals
to contain fewer but wider  intervals. 

\begin{example}\label{example}
Let us consider the example depicted in Figure~\ref{fig:example}. We have a tree $T$ of size roughly $2^{k+1}$, with two spine nodes $v_1$ and $v_2$ and two corresponding forests $F_1$ and $F_2$. We would like to map $v_2$ to some interval $I(v_2)$ and map all nodes in $F_2$ to intervals contained in $I(v_2)$. In addition, we would like to map $v_1$ to some interval $I(v_1)$ containing $I(v_2)$, and map all nodes in $F_1$ to intervals contained in $I(v_1)\setminus I(v_2)$. 
\end{example}

\begin{figure}
\begin{center}
\includegraphics[width=0.7\linewidth]{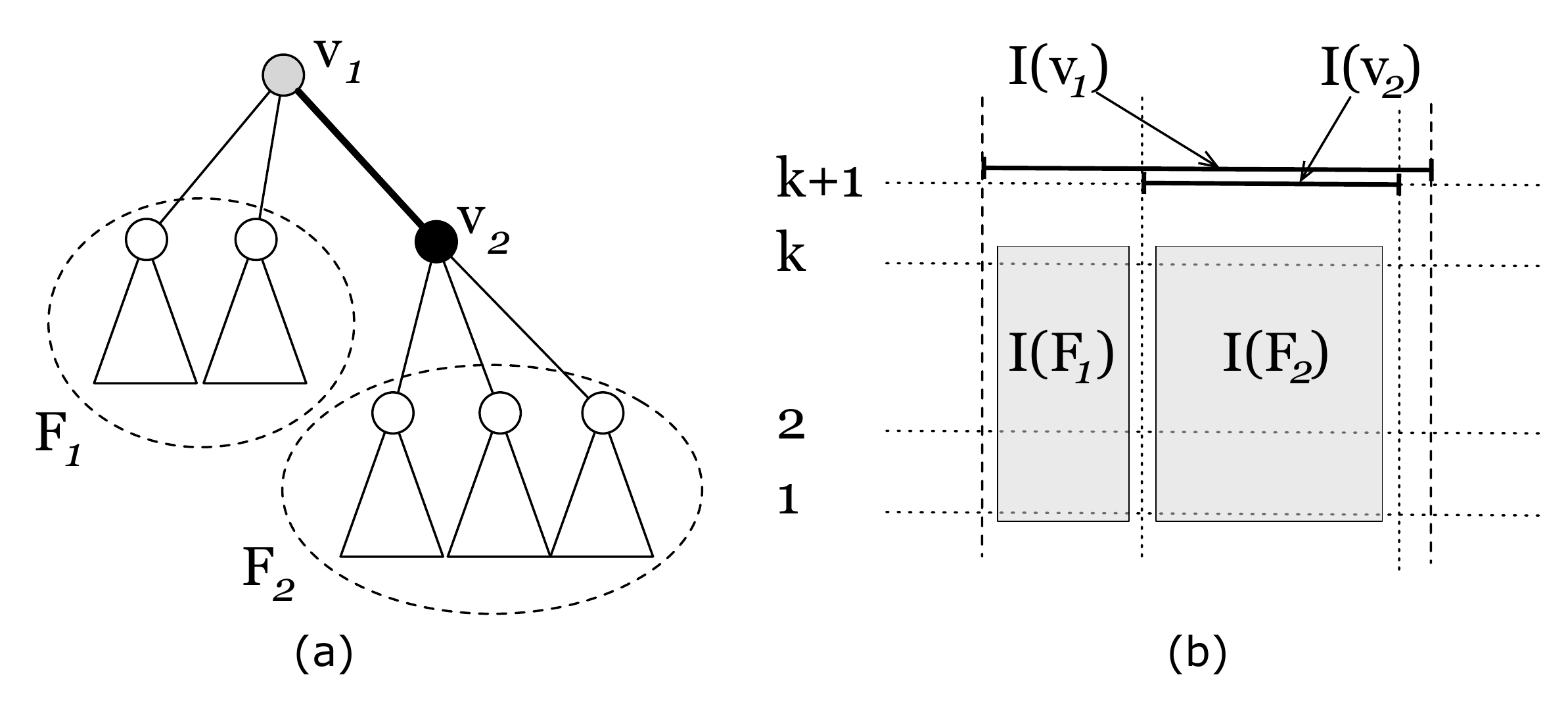}
\caption{Illustration of Example~\ref{example}}
\label{fig:example}
\end{center}
\end{figure}

The  mapping in the above example can be done using the method in \cite{KNR92}. Specifically, $v_1$ is mapped to $I(v_1)=[1,n]$, and $v_2$ is mapped to $I(v_2)=[n-|F_2|+1,n]$. The nodes of $F_2$ are mapped to intervals, all of which are contained in $I(v_2)$, and the nodes of $F_1$ are mapped to intervals which are contained in $I(v_1)\setminus I(v_2)$. Note that this scheme guarantees that all intervals are contained in $[1,n]$. One of the crucial properties making this mapping possible is that fact that the interval $I(v_2)=[n-|F_2|+1,n]$ {\em exists} in the collection of intervals used in \cite{KNR92}, for all possible sizes of $F_2$. Unfortunately,  this property  requires  many intervals of level $k+1$, which is undesirable (the scheme in \cite{KNR92} uses $n^2$ intervals in total). 
In a sense, restricting the number of level-$(k+1)$ intervals  costs us, for example, the inability to use an interval $I(v_2)$ that precisely covers  the set of  intervals associated with $F_2$. In other words, in some cases, $I(v_2)$ must  strictly contain 
$I(F_2):=\bigcup_{v\in F_2} I(v)$.  In particular, we cannot avoid having $|I(v_2)|\geq x+|I(F_2)|$, for some (perhaps large) positive $x$. In addition, the nodes in $F_1$ must be mapped to intervals contained in some range that is outside of $I(v_2)$ (say, to the left of Interval $I(v_2)$), and Node~$v_1$ must be mapped to an interval $I(v_1)$ that contains all these intervals, as well as $I(v_2)$. Hence,  we cannot avoid having $|I(v_1)|\geq x+x'+|I(F_2)|+|I(F_1)|$, for positive integers $x$ and $x'$. Therefore, the total slack (in this case, coming from $x$ and $x'$), does not only propagate over the $s$ spine nodes, but also propagates up the levels. One of the artifacts of this propagation is the fact that we can no longer guarantee that all intervals are contained in $[1,n]$ (as guaranteed by the scheme of \cite{KNR92}). Somewhat surprisingly, we still manage to choose the parameters to guarantee that all intervals in $\cI$ are contained in the range $[1,N]$, where $N=O(n)$.

Being slightly more formal, we introduce a hierarchy of intervals called {\em bins}. A bin $J$ of {\em level $k$} is an interval of length $c_k 2^k$, i.e., $|J|= c_k 2^k$, for some value $c_k$ to be described. Intuitively, the length $c_k 2^k$ corresponds to the smallest length of a bin $J$ for which our scheme enables the proper mapping of any forest of size at most $2^k$ to $J$. It is important to note that this property is {\em shift-invariant}, 
that is, no matter where in $[1,N]$ this bin $J$ is, the fact that its length is at least~$c_k 2^k$ should guarantee that it can potentially contain all intervals associated with a forest of size at most~$2^k$. Because of the aforementioned  unavoidable (non-negligible) slack that propagates up the levels,  we must allow  $c_k$ to increase with $k$. 

\begin{figure}
\begin{center}
\includegraphics[width=0.8\linewidth]{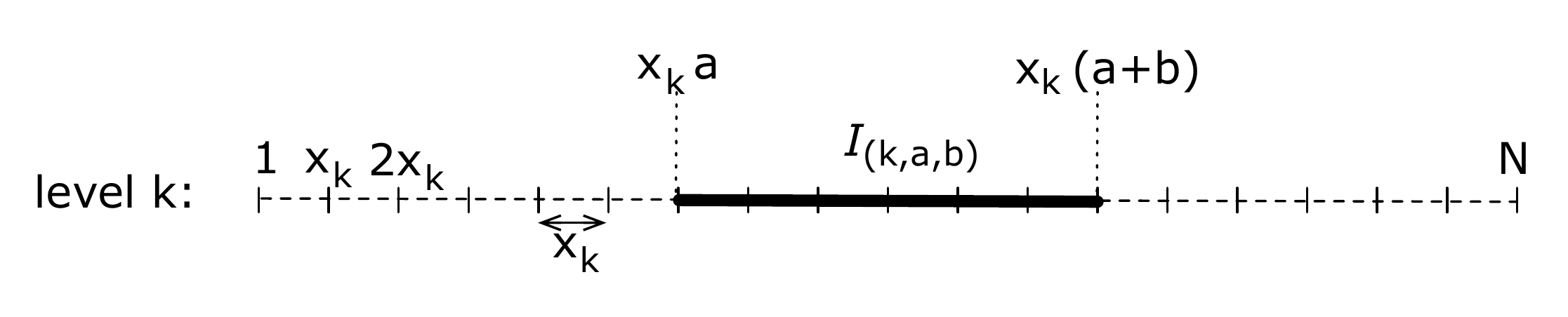}
\caption{A level-$k$ interval $I_{k,a,b}$}
\label{fig:resolution}
\end{center}
\end{figure}

\subsubsection{The intuition behind the tuning of the parameters}

We now describe the set of intervals $\cI$, and explain the intuition behind the  specific choice  of parameters involved. 
Consider a level $k$, and fix a {\em resolution} parameter $x_k$ for interval-level $k$, to be described later. Let $A_k\approx {N}/{x_k}$ and $B_k \approx c_k{2^k}/{x_k}$.
The level-$k$ intervals are essentially all intervals in $[1,N]$ which are of the form:
\begin{equation}\label{eq:I}
 I_{k,a,b}=[a\; x_k,\; (a+b)\; x_k) ~~~\mbox{where}~~~a\in [1,A_k] ~~\mbox{and}~~ b\in [1,B_k].
 \end{equation}
See Figure~\ref{fig:resolution}. The resolution parameter $x_k$ is chosen to be monotonically  increasing with $k$ in a manner that will guarantee fewer intervals of level $k$, as~$k$ is increasing. Moreover, the largest possible length of an interval of level $k$ is $x_k B_k=c_k2^k$, which is the length of a bin sufficient to accommodate the intervals of a tree of size at most $2^k$. This length is monotonically  increasing with the level $k$, as desired.  

\begin{figure}
\begin{center}
\includegraphics[width=0.9\linewidth]{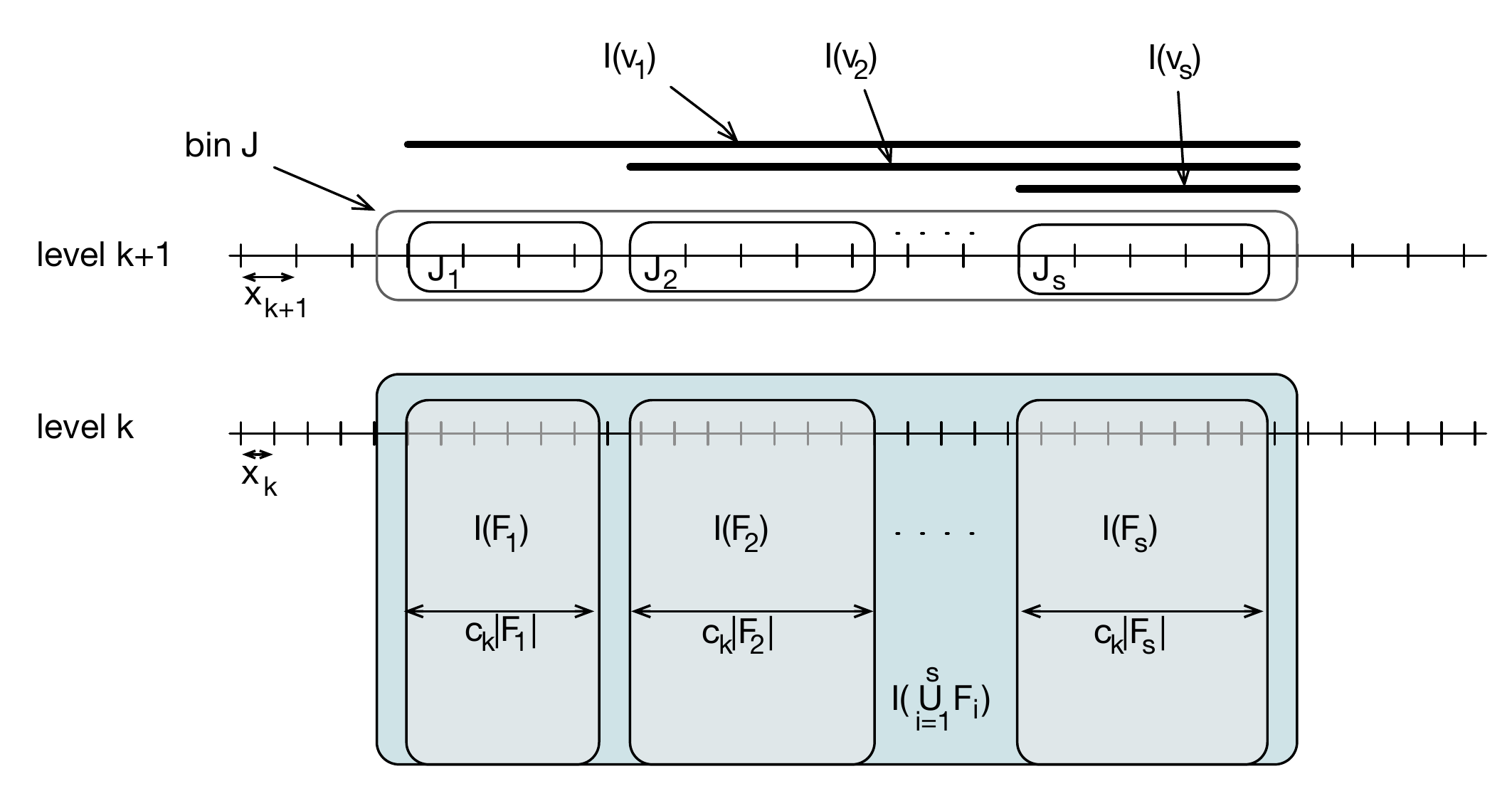}
\caption{Overview of the bins and intervals assignment in level $k+1$}
\label{fig:bins}
\end{center}
\end{figure}

Consider now a bin $J$ of length $c_{k+1}2^{k+1}$ located somewhere in $[1,N]$. This bin $J$ should suffice for the mapping of a tree $T$ of size $2^{k+1}$. By executing the spine decomposition, we obtain the spine nodes $v_1,v_2,\cdots, v_s$ and the forests $F_1,F_2,\cdots F_s$ (see Figure~\ref{fig:spine}). We allocate a level-$(k+1)$ interval $I(v_i)$ to each spine node $v_i$, and a bin $J_i\subseteq J$ to each forest $F_i$, $i=1,\dots,s$, in the same spirit as we did in the specific Example~\ref{example} (see Figure~\ref{fig:example}). 

The general allocation is illustrated in Figure~\ref{fig:bins}. Since $I(v_1)$ is of the form $I_{k+1,a,b}$, and should be included in Bin $J$, and since this interval $I(v_1)$ must contain all intervals assigned to nodes in~$F_1$, Bin $J_1$ is chosen to start at the leftmost multiple of $x_{k+1}$ in Bin $J$. Note that $F_1$ contains trees of size at most $2^k$ each. Hence, by induction on  $k$, each of these trees, $T'$, can be properly mapped to any bin of size $c_k|T'|$. Therefore, setting $J_1$ of size $c_k |F_1|$ suffices to properly map all trees in $F_1$. The bin $J_2$, of size $c_k |F_2|$,  is then chosen to start at the leftmost multiple of $x_{k+1}$ to the right of the end point of $J_1$, in the bin $J$. And so on:  we set $J_i$ of size $c_k |F_i|$, and place it in $J$ so that to start at the leftmost multiple of $x_{k+1}$ to the right of the end point of $J_{i-1}$, $1<i\leq s$. The level-$(k+1)$ intervals associated to the spine nodes are then set as follows. For $i=1,\dots,s$, the interval $I(v_i)$ starts at the left extremity of $J_i$ (which is a multiple of the resolution $x_{k+1}$). All these intervals end at the same point in $[1,N]$, which is chosen as the  leftmost multiple of $x_{k+1}$ to the right of~$J_s$, in~$J$. Putting the right end-point of $J$ at the point where all the intervals of spine nodes end, suffices to guarantee that $J$ includes all the intervals $I(v_i)$, and all the bins $J_i$, for $i=1,\dots,s$. 

Observe that the length of $I(v_s)$ must satisfy $|I(v_s)|\approx c_k|F_s|+x_{k+1}$, where the slack of $x_{k+1}$ comes from the fact that the interval must start at a multiple of the resolution $x_{k+1}$. More generally, for $1\leq i < s$,  the length of $I(v_i)$ must satisfy $$|I(v_i)|\approx c_k |F_i|+x_{k+1}+|I(v_{i+1})|\approx c_k(\sum_{j=i}^s|F_i|)+i \cdot x_{k+1}.$$ Therefore, the length of  $I(v_{1})$ must satisfy $|I(v_1)|\approx c_k(\sum_{i=1}^s|F_i|)+s\cdot x_{k+1}\approx c_k\cdot 2^{k+1}+ s\cdot x_{k+1}$. Now, since $J$ may start at any point between two multiples of the resolution $x_{k+1}$, we eventually get that setting the bin $J$ to be of length $|J| \approx c_k\cdot 2^{k+1}+ (s+1) \cdot x_{k+1}$ suffices. Since $s$ can be at most the spine decomposition depth $d$, we must have $|J|$ be approximately $c_{k+1}2^{k+1} ~\approx~ c_k 2^{k+1}+d\cdot x_{k+1}$. 
 To agree with the latter approximation, we choose the values of $c_k$ so that:
 \begin{equation}\label{eq:ci}
c_{k+1}-c_k ~~\approx~~  \frac{d\cdot x_{k+1}}{2^{k+1}}.
\end{equation}
Ultimately, we would like to map that whole $n$-node forest to a bin of size $c_{\log n} \cdot n$. This bin must fit into $[1,N]$, hence, the smallest value $N$ that we can choose is  $c_{\log n}\cdot n$.
Since we also want the value of $N$ to be  linear in $n$, we choose the $c_k$'s so that $c_{\log n}=O(1)$. Specifically, for $k=1,2,\cdots,\log n$, we set 
$$
c_k \approx \sum_{j=1}^k \frac{1}{j^{1+\epsilon}}
$$
for some small $\epsilon>0$.
Note that $c_k\geq 1$ for each $k$, and the $c_k$'s are increasing with $k$. Moreover, we take $\epsilon$ large enough so that the sum $\sum_{j=1}^\infty 1/j^{1+\epsilon}$ converges. Hence, all the $c_k$'s are bounded from above by some constant~$\gamma$. In particular, $c_{\log n}\leq \gamma$, and thus $N=O(n)$. The fact that all $c_k$'s are bounded, together  with Equation \ref{eq:ci}, explains why we choose $$x_k\approx \frac{2^k}{d\cdot k^{1+\epsilon}}~.$$
This choice for the resolution parameter $x_k$ implies that the number of level-$k$ intervals is $$O(A_k\cdot B_k)=O(nd^2 k^{2(1+\epsilon)}/2^k),$$ yielding a total of $O(nd^2)$ intervals in $\cI$, as desired. In fact, in order to reduce the label size even further, by playing with the constant hidden in the big-$O$ notation, we actually choose $\epsilon$ less than a constant. Indeed, we   will later pick $$c_k\approx \sum_{j=1}^k \frac{1}{j\log^2 j }\;\;\mbox{and}\;\; x_k\approx \frac{2^k}{d\cdot k\log^2 k}~.$$

\subsection{The ancestry scheme}
We now turn  to formally describe the desired ancestry-labeling scheme $(\cM,\cD)$ for $\cF(n,d)$. For simplicity,  assume without loss of generality that  $n$ is a power of 2.

\subsubsection{The marker algorithm $\cM$}\label{section:marker}
We begin by defining the set $\cI=\cI(n,d)$ of intervals.
For integers $a,b$ and $k$, let
$$ I_{k,a,b}=[a\;x_k,\;(a+b) \; x_k)$$
where $$ x_1=1 \;\; \mbox{and} \;\; x_k= \left\lceil\frac{2^{k-1}}{(d+1)k\log^2 k}\right\rceil \;\;\mbox{for $k>1$.}$$
For integer $1\leq k \leq \log n$, let $c_k$ be defined as follows.  Let $c_1=1$, and, for any $k$, $2< k \leq \log n$,
let $$c_k=c_{k-1}+1/k\log^2 k = 1+\sum_{j= 2}^k 1/{j\log^2 j}~.$$
Note that the sum $\sum_{j\geq 2} 1/{j\log^2j}$ converges, and hence all the $c_k$'s are bounded from above by some constant 
$$\gamma= 1+\sum_{j= 2}^\infty 1/{j\log^2 j}~.$$ Let us set:
$$
N=\gamma n.
$$
Then let $A_1=N$, $B_1=0$, and, for $k=2,\dots,\log n$, let us set
\begin{eqnarray*}
A_k =  1+ \left\lceil  \frac{N  (d+1)  k \log^2 k}{2^{k-1}} \right\rceil
\;\;\;\mbox{and} \;\;\;
B_k  = \lceil 2 c_k (d+1)    k \log^2 k\rceil
 \end{eqnarray*}
Next, define the set of level-$k$ intervals:
\[
\cS_k=\{I_{k,a,b}\mid  a\in [1,A_k], \mbox{~and~}~ b \in [1,B_k]\}.
\]
Finally, define the set of intervals of level at most $k$ as
\[
\cI_k=\bigcup_{i=1}^k \cS_i~,
\]
and let
\[
\cI=\cI_{\log n}~.
\]





\begin{definition}\label{def:legal}
Let $F\in\cF(n,d)$. We say that a one-to-one mapping $I:F\rightarrow \cI$ is a {\em legal-containment} mapping if,
for every two nodes $u,v\in F$, we have
\[\mbox{$u$ is an ancestor of $v$ in $F$} \iff  I(v)\subseteq I(u).\]
\end{definition}

Note that since a legal-containment mapping is one-to-one, we get that if $u$ is a strict ancestor of $v$ in $F$, then $I(v)\subset I(u)$, and vice-versa.



We first wish to show that there exists a legal-containment mapping from every forest in $\cF(n,d)$
into $\cI$. For this purpose, we introduce the concept of a {\em bin}, which is simply an interval of integers. For a bin $J$, and  for any integer $k$, $1\leq k \leq \log n$, we use the following notation: 
$$\cI_k(J)= \left\{I_{i,a,b}\in \cI_k\mid  I_{i,a,b} \subseteq J\right\}.$$
I.e., $\cI_k(J)$ is the set of all intervals of level at most $k$ which are contained in the bin $J$.

\begin{claim}\label{claim:obs2}
Let $F$ be a forest, and let
$F_1,F_2, \cdots, F_t$ be  pairwise-disjoint forests such that $\cup_{i=1}^t F_i=F$.  Let $J$ be a bin and let
$J_1,J_2, \cdots, J_t$  be a partition of $J$ into~$t$ pairwise-disjoint bins, i.e.,
 $J=\cup_{i=1}^t J_i$ with $J_i\cap J_j=\emptyset$ for any $1\leq i < j \leq t$. For any level $k$,  $1\leq k\leq\log n$,
if there exists a legal-containment mapping from $F_i$ to $\cI_k(J_i)$ for every $i$, $1\leq i\leq t$, then there
exists a legal-containment mapping from $F$ to $\cI_k(J)$.
\end{claim}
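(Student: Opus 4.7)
The plan is to construct the legal-containment mapping $I:F\to \cI_k(J)$ by simply gluing together the given mappings. Concretely, for each $i$ let $I_i: F_i \to \cI_k(J_i)$ denote the hypothesized legal-containment mapping, and define $I(u) = I_i(u)$ whenever $u \in F_i$. Because $F_1,\dots,F_t$ partition the nodes of $F$ this is well-defined, and because $J_i \subseteq J$ we immediately have $\cI_k(J_i) \subseteq \cI_k(J)$, so $I$ takes values in $\cI_k(J)$ as required.

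The bulk of the proof will then be a routine verification that $I$ is a legal-containment mapping. First I will check that $I$ is one-to-one: within any single $F_i$ injectivity is inherited from $I_i$, and for $u\in F_i$, $v\in F_j$ with $i\neq j$ the images $I(u)\subseteq J_i$ and $I(v)\subseteq J_j$ lie in disjoint bins, so $I(u)\cap I(v)=\emptyset$ and in particular $I(u)\neq I(v)$. Next I will verify the two directions of the biconditional. For $u,v$ lying in the same $F_i$, the ancestry relation in $F_i$ coincides with the ancestry relation in $F$ (because the unique path from any node of $F_i$ to its tree-root in $F$ stays inside $F_i$), so the equivalence follows directly from the assumed property of $I_i$. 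For $u\in F_i$ and $v\in F_j$ with $i\neq j$, the disjointness $J_i\cap J_j=\emptyset$ gives $I(v)\not\subseteq I(u)$, and the structural setup gives that $u$ is not an ancestor of $v$ in $F$, so the biconditional holds vacuously.

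The only step requiring a bit of care is the assertion that ancestry relations in $F$ do not cross between different $F_i$'s. This will be justified by the observation that writing $F=\cup_i F_i$ as a disjoint union of rooted forests forces each connected component (tree) of $F$ to lie entirely inside some single $F_i$: no tree-edge of $F$ can straddle two $F_i$'s since both endpoints of an edge must lie in the same sub-forest. Consequently, two nodes in different $F_i$'s belong to different trees of $F$ and hence are incomparable in the ancestry order of $F$. Once this structural observation is made explicit, the rest of the proof is purely bookkeeping, and no properties of the specific intervals in $\cI$ (beyond nonemptiness and the nesting $J_i\subseteq J$) are invoked.
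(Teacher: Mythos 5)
Your proposal is correct and follows essentially the same route as the paper's own proof: glue the mappings $I_i$, use the hypothesis for pairs inside one $F_i$, and use $J_i\cap J_j=\emptyset$ for pairs in different sub-forests. The only difference is that you make explicit the observation (left implicit in the paper) that no ancestry relation of $F$ crosses between distinct $F_i$'s, which is a fine addition.
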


\begin{proof}
The proof follows directly from the definitions above. More specifically, for every integer~$i$, $1\leq i\leq t$, we embed the forest $F_i$ into $\cI_k(J_i)$
using a legal-containment mapping. For two nodes $v$ and $u$ in the same forest $F_i$, the condition specified in Definition \ref{def:legal}, namely, that $u$ is an ancestor of $v$ in $F$ if and only if $I(v)\subseteq  I(u)$, holds by the fact that each $F_i$ is embedded using a legal-containment mapping. On the other hand, if $v$ and $u$ are in two different forests $F_i$ and $F_j$, then the condition holds simply because $J_i\cap J_j=\emptyset$.
\end{proof}

We are now ready to state the main technical lemma of this section.
\begin{lemma}\label{lem:main-bounded}
For every $k$, $1\leq k\leq \log n$, every forest $F\in\cF(2^{k},d)$, and every bin $J\subseteq[1,N)$, 
such that $|J|=\left\lfloor c_k|F|\right\rfloor$,
there exists a legal-containment mapping from $F$ into $\cI_k(J)$. Moreover this mapping can be computed in $O(|F|)$ time.
\end{lemma}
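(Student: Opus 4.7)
The plan is to prove the lemma by strong induction on $k$, with Claim~\ref{claim:obs2} as the assembly tool: whenever $F$ decomposes into $F_1,\dots,F_t$ on disjoint sub-bins $J_1,\dots,J_t$ of $J$, and each pair $(F_j,J_j)$ admits a legal-containment mapping, the pieces combine into a legal-containment mapping of $(F,J)$. The base case $k=1$ is a direct check: $|F|\le 2$ and $|J|=|F|$, and only finitely many configurations arise, each placed by a direct choice of level-$1$ intervals.

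For the inductive step $k\to k+1$, given $F\in\cF(2^{k+1},d)$ and $J$ with $|J|=\lfloor c_{k+1}|F|\rfloor$, I would split on whether $F$ contains a tree of size exceeding $2^k$. In the \emph{easy case}, every tree of $F$ has at most $2^k$ nodes, so I would greedily pack the trees of $F$ into sub-forests $G_1,\dots,G_t$ with $|G_j|\le 2^k$, lay down consecutive sub-bins $J_j\subseteq J$ with $|J_j|=\lfloor c_k|G_j|\rfloor$ (which fit since $\sum_j\lfloor c_k|G_j|\rfloor\le c_k|F|\le c_{k+1}|F|$), apply the inductive hypothesis at level $k$ to each $(G_j,J_j)$, and combine via Claim~\ref{claim:obs2}. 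Noting that $\cI_k(J_j)\subseteq\cI_{k+1}(J)$ closes this case.

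The substantive \emph{hard case} is when $F$ contains a (necessarily unique) tree $T$ with $|T|>2^k$; write $F=T\uplus F'$ where $F'$ is the rest, handled by the easy case inside a leftmost sub-bin of $J$ of length $\lfloor c_k|F'|\rfloor$. This leaves a right portion of $J$ of length at least $\lfloor c_{k+1}|T|\rfloor$ in which to embed $T$. Inside that portion I would compute the spine $(v_1,\dots,v_s)$ of $T$ (with $s\le d$) and its residual forests $F_1,\dots,F_s$; Property~P2 guarantees that every tree of every $F_i$ has at most $2^k$ nodes, so the easy case applies inside each $F_i$. Starting at the leftmost multiple of $x_{k+1}$ of this portion, I would place consecutive bins $J_1,\dots,J_s$ whose lengths are the smallest multiples of $x_{k+1}$ strictly exceeding $\lfloor c_k|F_i|\rfloor$; this \emph{strict} excess guarantees that the spine interval $I(v_i):=[\,\mbox{left endpoint of }J_i,\ \mbox{right endpoint of }J_s\,)$ strictly contains every interval placed inside $J_i\cup\cdots\cup J_s$. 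Each $I(v_i)$ begins and ends at a multiple of $x_{k+1}$, so it belongs to $\cS_{k+1}$ provided its length does not exceed $B_{k+1}\,x_{k+1}$. Properties~P1 and~P3 of the spine decomposition together with Claim~\ref{claim:obs2} then certify the ancestry-containment correspondence.

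The main obstacle is the parameter accounting in the hard case: one must verify that the layout fits inside the $T$-portion of $J$, which reduces to checking that $(c_{k+1}-c_k)|T|$ absorbs the total alignment slack of at most $(s+1)\,x_{k+1}$. With $|T|>2^k$, $s\le d$, and the prescribed $c_{k+1}-c_k=1/((k+1)\log^2(k+1))$ and $x_{k+1}=\lceil 2^k/((d+1)(k+1)\log^2(k+1))\rceil$, this closes thanks to the factor-of-two margin built into the choice of $c_k$: the dominant term of $(d+1)\,x_{k+1}/2^{k+1}$ is roughly half of $c_{k+1}-c_k$. A parallel check bounds $|I(v_1)|\le c_k|T|+(s+1)x_{k+1}$ above by the maximum level-$(k+1)$ length $B_{k+1}\,x_{k+1}\approx c_{k+1}\cdot 2^{k+1}$, so every spine interval indeed lies in $\cS_{k+1}$. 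The linear construction time follows since the spine decomposition, the greedy packing, and all interval placements perform $O(1)$ work per node after a one-pass computation of subtree weights.
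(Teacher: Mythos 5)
Your construction follows the paper's proof essentially step for step: induction on $k$ with Claim~\ref{claim:obs2} as the assembly tool, the all-trees-small case handled at level $k$ inside sub-bins of length $\lfloor c_k\cdot|\cdot|\rfloor$, and the big-tree case handled via the spine, with bins aligned to multiples of $x_{k+1}$ and each spine interval running from the left end of its bin to a common right endpoint. However, the crux of the lemma is the parameter accounting in the hard case, and your justification of it does not stand as written. You bound the total slack by $(s+1)x_{k+1}$ and argue it is absorbed because ``$(d+1)x_{k+1}/2^{k+1}$ is roughly half of $c_{k+1}-c_k$.'' But the big tree is only guaranteed to satisfy $|T|>2^k$, not $|T|\approx 2^{k+1}$, so the available margin is $(c_{k+1}-c_k)|T|$, which can be barely more than $2^k/((k+1)\log^2(k+1))\approx(d+1)x_{k+1}$: there is no factor-of-two cushion, the inequality is essentially tight. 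Worse, your choice of slots as the smallest multiple of $x_{k+1}$ \emph{strictly} exceeding $\lfloor c_k|F_i|\rfloor$ costs up to $x_{k+1}$ per spine node, whereas the paper's slots (smallest multiple that is at least $\lfloor c_k|F_i|\rfloor$) cost at most $x_{k+1}-1$; with the crude bound $\sum_i\lfloor c_k|F_i|\rfloor\le\lfloor c_k|T|\rfloor$ the required inequality $\lfloor c_k|T|\rfloor+(s+1)x_{k+1}-1\le\lfloor c_{k+1}|T|\rfloor$ can then fail by up to about $d$ units (e.g.\ when the ceiling in $x_{k+1}$ is nearly inactive, $s=d$, and $|T|=2^k+1$). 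The scheme is repairable — either revert to the paper's non-strict slot sizes and the $(s+1)(x_{k+1}-1)$ bound, or keep your strict slots and recover the missing $s$ units from $\sum_i|F_i|=|T|-s$ together with $c_k\ge 1$ — but as stated the key inequality is not established, and it is exactly the inequality for which the parameters $c_k,x_k$ were tuned.

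A second, smaller omission: membership of a spine interval in $\cS_{k+1}$ requires both $b\le B_{k+1}$ and $a\le A_{k+1}$, and you verify only the length condition. The $a$-bound does follow quickly once the layout is known to fit inside $J\subseteq[1,N)$, since $a\,x_{k+1}<N$ and $N/x_{k+1}\le A_{k+1}-1$ by the definitions of $x_{k+1}$ and $A_{k+1}$, but this check has to be made (the paper devotes a separate computation to it). On the positive side, your insistence on strict excess addresses a genuine subtlety (preventing a recursively placed interval from coinciding with a spine interval, which would break the one-to-one and iff requirements of a legal-containment mapping) that the paper's write-up glosses over; it just has to be paid for in the accounting as indicated above.
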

\begin{proof}
We prove the lemma  by induction on $k$. The case $k=1$ is simple and can be verified easily.
Assume now that the claim holds for $k$ with $1\leq k< \log n$, and let us show that it also
holds for $k+1$.
Let $F$ be a forest  of size $|F|\leq 2^{k+1}$, and let $J\subseteq[1,N)$ be  a bin, such that $|J|=\left\lfloor c_{k+1}|F|\right\rfloor$. Our goal is to show that
there exists a legal-containment mapping of $F$ into $\cI_{k+1}(J)$. We consider two cases.

\noindent $\bullet$ {\bf The simpler case:} when all the trees in $F$ are of size at most
$2^{k}$. For this case, we show that  there exists a legal-containment mapping of $F$ into $\cI_k(J)$ for every bin $J\subseteq[1, N)$
such that $|J|=\left\lfloor c_k|F|\right\rfloor$. (Note that this claim is slightly stronger than what is stated in Lemma~\ref{lem:main-bounded})\footnote{Indeed, we show that the size of Bin $J$ can be only 
$\left\lfloor c_k|F|\right\rfloor$, which is  smaller than $\left\lfloor c_{k+1}|F|\right\rfloor$ (that is, the size required to prove the lemma) by an additive term of $-|F|/(k+1)\log^2(k+1)$ .}.

Let $T_1,T_2,\cdots T_t$ be an arbitrary enumeration of the trees in $F$.
We  divide the given bin  $J$ of size $\left\lfloor c_k|F|\right\rfloor$ into $t+1$ disjoint sub-bins $J=J_1\cup J_2\cdots \cup J_t \cup J'$,
where $|J_i|=\left\lfloor c_k|T_i|\right\rfloor $ for every $i$,  $1\leq i\leq t$.
This can be done because $\sum_{i=1}^{t} \left\lfloor c_k|T_i|\right\rfloor \leq \left\lfloor c_k|F|\right\rfloor=|J|$.
By the induction hypothesis, we have a legal-containment mapping of  $T_i$ into $\cI_k(J_i)$ for every $i$, $1\leq i\leq t$.
The stronger claim thus follows by Claim~\ref{claim:obs2}.

Observe that, in the above, the enumeration of the trees  $T_1,T_2,\cdots T_t$ in $F$ was  \emph{arbitrary}. In the context of our general scheme described in the next section, it is important to enumerate these trees in a specific order. Once this order is fixed, we can implement the mapping of $F$ by choosing  the disjoint sub-bins $J_1,\dots,J_t$ of $J$,  so that  $J_i$ is ``to the left'' of $J_{i+1}$, i.e., $J_i\prec J_{i+1}$, for $i=1,\dots,t-1$. This will guarantee that all the intervals associated with the nodes in $T_i$ are ``to the left'' of all the intervals associated with a nodes of $T_j$, for every $1\leq i<j\leq t$. We state this observation as a fact, for further reference in the next section. 

\begin{fact}\label{fact1}
Let $\ell$ be a positive integer. 
Let $T_1,T_2,\cdots T_t$ be an arbitrary enumeration of the trees in a forest $F$, all of size at most~$2^{\ell}$, and let $J\subseteq[1, N)$ be a bin with $|J|=\left\lfloor c_\ell |F|\right\rfloor$. Then, our legal-containment mapping from $F$ into $\cI_\ell (J)$ guarantees  that for every $u\in T_i$ and $v\in T_j$ where $j>i$, we have $I(u) \prec I(v)$.
\end{fact}

\noindent $\bullet$ {\bf The more involved case:} when one of the subtrees in $F$, denoted by $\widehat{T}$, contains more than $2^{k}$ nodes. 
Our goal now is to show that for every bin $\widehat{J}\subseteq[1,N)$, where $|\widehat{J}|=\lfloor c_{k+1}|\widehat{T}|\rfloor$,
there exists a legal-containment mapping of $\widehat{T}$ into $\cI_{k+1}(\widehat{J})$.
Indeed, once this is achieved we can complete the proof as follows. Let $F_1=F\setminus \widehat{T}$, and $F_2=\widehat{T}$. 
Similarly to the simple case above,  let $J_1$ and $J_2$ be two consecutive intervals in $J$ (starting at the leftmost point in $J$) such that  $|J_1|=\lfloor c_k|F_1|\rfloor$ and  $|J_2|=|\widehat{J}|$. 
Since we have a legal-containment mapping that maps $F_1$ into $\cI_{k}(J_1)$, and one that maps
$F_2$ into $\cI_{k+1}(J_2)$,
we get the desired legal-containment mapping of $F$ into $\cI_{k+1}(J)$ by Claim~\ref{claim:obs2}. (The legal-containment mapping of $F_1$ into $\cI_{k}(J_1)$ can be done by the induction hypothesis, because $|F_1|\leq 2^k$.)

For the rest of the proof, our goal is thus to prove the following claim:

\begin{claim}
For every tree $T$ of size $2^k<|T|\leq 2^{k+1}$, and every bin $J\subseteq[1,N)$, where $|J|=\left\lfloor c_{k+1}|T|\right\rfloor$,
there exists a legal-containment mapping of $T$ into $\cI_{k+1}(J)$. 
\end{claim}

In order to prove the claim, we use the spine decomposition described in Subsection~\ref{sub-spine}.
Recall the spine  $S=(v_1,v_2,\cdots,v_s)$, and the corresponding forests $F_1,F_2,\cdots, F_s$. 
The given bin $J$ can be expressed as $J=[\alpha,\alpha+\left\lfloor c_{k+1}|T|\right\rfloor)$ for some integer $\alpha < N - \left\lfloor c_{k+1}|T|\right\rfloor$.
We now describe how we allocate the sub-bins $J_1,J_2\dots,J_s$ of $J$ so that, later, we will map each $F_i$ to $\cI_{k}(J_i)$.

\paragraph{The sub-bins $J_1,J_2\dots,J_s$ of $J$:}
For every  $i=1,\dots,s$, we now define a bin $J_i$ associated with~$F_i$. 
Let us first define $J_1$. Let $a_1$ be the smallest integer such that $\alpha\leq a_1x_{k+1}$.
We let
 $$J_1=[a_1x_{k+1},a_1x_{k+1} + \left\lfloor c_{k}|F_1|\right\rfloor)~.$$
  Assume now that we have defined
the interval $J_i=[a_i x_{k+1},a_i x_{k+1}+ \left\lfloor c_{k}|F_i|\right\rfloor)$ for $1\leq i< s$. We define
the interval $J_{i+1}$ as follows.
Let $b_{i}$ be the smallest integer such that $\left\lfloor c_{k}|F_{i}|\right\rfloor\leq b_{i}x_{k+1}$, that is
\begin{equation}\label{eq:bi}
(b_i-1)x_{k+1}< \left\lfloor c_{k}|F_{i}|\right\rfloor\leq b_{i}x_{k+1}~.
\end{equation}
Then, let $a_{i+1}=a_i+b_i$, and define $$J_{i+1}=[a_{i+1} x_{k+1},a_{i+1} x_{k+1}+ \left\lfloor c_{k}|F_{i+1}|\right\rfloor).$$ 
Hence, for $i=1,2,\cdots, s$, we have $|J_i|=\left\lfloor c_{k}|F_{i}|\right\rfloor$. Moreover, 
\begin{equation}\label{eq:Jiincluded}
J_{i}\subseteq [a_{i} x_{k+1},(a_{i} + b_i)x_{k+1}) = I_{k+1,a_i,b_i}.
\end{equation}
Also observe that, for every $i$, $1\leq i\leq s-1$, we have:
\begin{equation}\label{eq:prec}
J_i\prec J_{i+1}.
\end{equation}

%

 Since $a_1x_{k+1}<\alpha + x_{k+1}$, and since we have a ``gap'' of at most $x_{k+1}-1$ between any consecutive sub-bins $J_i$ and $J_{i+1}$, we get that
\[
 \bigcup_{i=1}^{s} I_{k+1,a_i,b_i} \subseteq \Big [\alpha, \; \alpha+(s+1)(x_{k+1}-1) + \left\lfloor c_{k}|T|\right\rfloor \Big).
\]
Now, since $s\leq d$ and $2^k<|T|$, and since $(d+1) (x_{k+1}-1)\leq \left\lfloor\frac{2^k}{(k+1)\log^2(k+1)}\right\rfloor$, it follows that,
\begin{equation}\label{eq:1}
\bigcup_{i=1}^{s} I_{k+1,a_i,b_i}  \subseteq  \left[\alpha,\alpha + \left\lfloor\frac{|T|}{(k+1)\log^2(k+1)}+c_{k}|T|\right\rfloor\right)
  =  [\alpha,\alpha + \left\lfloor c_{k+1}|T|\right\rfloor)
   =J~.
\end{equation}
Since $\bigcup_{i=1}^{s} J_i \subseteq \bigcup_{i=1}^{s} I_{k+1,a_i,b_i}$, we finally get that 
\[
\bigcup_{i=1}^{s} J_i  \subseteq J.
\]
On the other hand, since, for $1\leq i\leq s$, the length of $J_i$ is  $\left\lfloor c_{k}|F_i|\right\rfloor$, and since
 each tree in $F_i$ contains at most $2^k$ nodes, we get, by the induction hypothesis, that
 there exists a legal-containment mapping of each
$F_i$ into $\cI_k(J_i)$. We therefore  get a legal-containment mapping from  $\bigcup_i^s F_i$ to $\cI_k(J)$, by Claim~\ref{claim:obs2}.

By Equation~\ref{eq:prec}, we have $J_i\prec J_{i+1}$ for every $i$, $i=1,2,\cdots, s-1$. This  guarantees that all the intervals associated with the nodes in $F_i$ are ``to the left'' of all the intervals associated with a nodes of $F_j$, for every $1\leq i<j\leq s$. 
We state this observation as a fact, for further reference in the next section. 

\begin{fact} \label{fact2}
Let $\ell$ be a positive integer. 
Let $F_1,F_2,\cdots F_s$ be the forests of the spine $S=(v_1,v_2,\dots,v_s)$ of the tree $T$ with $2^{\ell-1}<|T|\leq 2^{\ell}$, and let $J\subseteq[1, N)$ be a bin satisfying $|J|=\left\lfloor c_{\ell}|T|\right\rfloor$. Our legal-containment mapping from $T$ into $\cI_{\ell}(J)$ guarantees that, for every $u\in F_i$ and $v\in F_j$ where $j>i$, we have $I(u)\in J_i$ and $I(v)\in J_j$, and hence $I(u) \prec I(v)$.
\end{fact}

It is now left to map the nodes in the spine $S$ into $\cI_{k+1}(J)$, in a way that respects the ancestry relation.

\paragraph{The mapping of spine nodes into level-$(k+1)$ intervals:}

 For every $i$, $1\leq i\leq s$, let $\widehat{b}_i=\sum_{j=i}^s b_j$ where the $b_j$s are defined by Equation~\ref{eq:bi}. 
We map the node $v_i$ of the spine to the interval $$I(v_i)=I_{k+1,a_i,\widehat{b}_i}~.$$
Observe that, by this definition, $$I(v_i)=\bigcup_{j=i}^s I_{k+1,a_i,b_i}~.$$
By Equations~\ref{eq:Jiincluded} and~\ref{eq:1}, we get $$\bigcup_{j=i}^s J_j \subseteq I(v_i)\subseteq J~.$$
To show  that $I(v_i)$ is indeed in $\cI_{k+1}(J)$, we still need to show that $a_i\in [1,A_{k+1}]$ and $\widehat{b}_i\in [1,B_{k+1}]$. Before showing that, let us make the following observation resulting from the combination of Equation~\ref{eq:prec} and Fact~\ref{fact2}, to be used for further reference in the next section. 

\begin{fact} \label{fact3}
Let $\ell$ be a positive integer. 
Let $F_1,F_2,\cdots F_s$ be the forests of the spine $S=(v_1,v_2,\dots,v_s)$ of the tree $T$ with $2^{\ell-1}<|T|\leq 2^{\ell}$, and let $J\subseteq[1, N)$ be a bin satisfying $|J|=\left\lfloor c_{\ell}|T|\right\rfloor$. Our legal-containment mapping from $T$ into $\cI_{\ell}(J)$ guarantees that, for every $u\in F_i$, $i\in\{1,\dots,s-1\}$, and for every $j>i$, we have $I(u) \prec I(v_j)$.
\end{fact}

Let us now show that $I(v_i)$ is
indeed in $\cI_{k+1}(J)$. 
It remains to show that $a_i\in [1,A_{k+1}]$ and that $ \widehat{b}_i \in [1,B_{k+1}]$. 
Recall that $J$ is of the form $J=[\alpha,\alpha+\left\lfloor c_{k+1}|T|\right\rfloor)$ with $1\leq \alpha < N-\left\lfloor c_{k+1}|T|\right\rfloor$. Note that,
$$N~\leq \left\lceil  N~ \frac{(d+1)(k+1)\log^2(k+1)}{2^k}\right\rceil \left\lceil\frac{2^k}{(d+1)(k+1)\log^2(k+1)}\right\rceil~= ~(A_{k+1}-1)x_{k+1}.$$
Therefore, we get that 
\begin{equation}\label{eq:alpha}
\alpha < (A_{k+1}-1)x_{k+1} -\left\lfloor c_{k+1}|T|\right\rfloor.
\end{equation}
On the other hand, by definition of the $a_i$'s, we get that, for every $i$, 
$$a_ix_{k+1} \leq a_1 x_{k+1} + i \cdot x_{k+1} + \sum_{j=1}^i \lfloor c_k |F_i| \rfloor \leq a_1 x_{k+1}+ d \cdot x_{k+1} + \lfloor c_k |T| \rfloor~.$$
Moreover, by the minimality of $a_1$, we have $a_1 x_{k+1} \leq \alpha + x_{k+1}$. Combining the latter two inequalities we get 
$$a_ix_{k+1} \leq \alpha + (d+1)\cdot x_{k+1} + \lfloor c_k |T| \rfloor ~.$$
Combining this with Equation~\ref{eq:alpha}, we get 
$$a_ix_{k+1} \leq A_{k+1} x_{k+1} +
(d \cdot x_{k+1} + \lfloor c_k |T| \rfloor - \left\lfloor c_{k+1}|T|\right\rfloor)~.$$
It follows directly from the definition of $x_{k+1}$ and $c_k$, and from the fact that $|T|>2^k$, that $d \cdot x_{k+1} + \lfloor c_k |T| \rfloor - \left\lfloor c_{k+1}|T|\right\rfloor \leq 0$, implying that $a_i  \leq A_{k+1}$, as desired. 

Let us now show that $ \widehat{b}_i \in [1,B_{k+1}]$. 
Recall that, by definition,   $ \widehat{b}_i \leq \sum_{j=1}^s b_j$ for every $i$, $1\leq i\leq s$. So it is enough to show that $\sum_{i=1}^s b_i\leq B_{k+1}$.  By construction, $$x_{k+1} \sum_{i=1}^s b_i \leq |J| = \lfloor c_{k+1} |T| \rfloor \leq c_{k+1} 2^{k+1}.$$
So, it suffices to show that $c_{k+1}2^{k+1}/x_{k+1}\leq B_{k+1}$. Fortunately, this holds by definition of the three parameters $c_{k+1}$, $x_{k+1}$, and $B_{k+1}$. 

The above discussion shows that for all $i=1,\dots,s$, we have $a_i  \leq A_{k+1}$ and $\widehat{b}_i\leq B_{k+1}$, which implies that $I(v_i)\in \cI_{k+1}(J)$. 

We now show that our mapping is indeed a legal-containment mapping.
Observe first that, for $i$ and $j$ such that $1\leq i<j\leq s$, we have $$I_{k+1,a_i,\widehat{b}_i}\supset I_{k+1,a_j,\widehat{b}_j}.$$
Thus,  $I(v_i)\supset I(v_j)$, as desired.

In addition,
recall  that, for every $j=1,\dots,s$, the forest $F_j$ is mapped into $\cI_k(J_j)$. Therefore, if $I(v)$ is the interval
of some node $v \in F_j$, then  we have $I(v)\subset J_j$. Since
 $J_j\subset I_{k+1,a_i,\widehat{b}_i}$ for every $i$ such that $1\leq i \leq j\leq s$, we obtain
 that $I(v)$ is contained in $I(v_i)$,  the interval associated with $v_i$. This establishes the fact that $I:F\rightarrow \cI_{k+1}(J)$ is a legal-containment mapping. Since the recursive spine decomposition of a forest $F$ takes $O(|F|)$ time, it follows that  this legal-containment mapping  also takes $O(|F|)$ time. This completes the proof of Lemma~\ref{lem:main-bounded}.
\end{proof}

Lemma~\ref{lem:main-bounded} describes the interval assignments to the nodes. We  now  describe the labeling process.

\paragraph{The label assignment:}
By Lemma~\ref{lem:main-bounded},  we get that  there exists a legal-containment mapping $I:F\rightarrow \cI$, for any $F\in\cF(n,d)$.
The marker $\cM$ uses this  mapping to label
the nodes in $F$.
Specifically, for every node $u\in F$, the interval $I(u)=I_{k,a,b}\in \cI$  is encoded in the label $L(u)$ of $u$ as follows.
The first $\lceil\log k \rceil$ bits in $L(u)$  are set to $0$ and the following bit is set to 1. The next
$\lceil\log k \rceil$ bits are used to encode $k$. The  next
$\log d+\log k+2\log\log k+O(1)$ bits are used to encode the value $b\in B_k$, and the next $\log n+\log d+\log k +2 \log\log k -k+O(1)$ bits to encode $a$.
Since $2(\log k +2 \log\log k) -k =O(1)$, we get the following.

\begin{lemma}\label{lem:label-size-bounded}
The marker algorithm $\cM$ assigns the labels of nodes in some forest $F\in\cF(n,d)$ in $O(n)$ time, and each such label is encoded using
 $\log n +2\log d +O(1)$ bits.
\end{lemma}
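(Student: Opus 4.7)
The plan is to verify the two assertions of the lemma separately: the bit length of a label, and the running time of the marker. Both are essentially direct consequences of the explicit bit-level construction of the label $L(u)$ described immediately before the lemma statement, together with the mapping produced by Lemma~\ref{lem:main-bounded}.

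For the label length, I would simply sum the four segments of $L(u)$. The self-delimiting prefix $0^{\lceil\log k\rceil}1$ contributes $\lceil\log k\rceil + 1$ bits, followed by $\lceil\log k\rceil$ bits encoding $k$ itself, so the ``header'' that recovers $k$ occupies $2\lceil\log k\rceil + 1$ bits. The component for $b$ is allocated $\lceil\log B_k\rceil$ bits; unpacking $B_k = \lceil 2 c_k(d+1)k\log^2 k\rceil$ and using that $c_k \leq \gamma = O(1)$ uniformly, this is $\log d + \log k + 2\log\log k + O(1)$. Similarly, the $a$-component uses $\lceil\log A_k\rceil = \log n + \log d + \log k + 2\log\log k - k + O(1)$ bits, via $N = \gamma n$. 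Summing the four contributions gives a total of $\log n + 2\log d + g(k) + O(1)$, where $g(k) := 4\log k + 4\log\log k - k$. I would finish by observing that $g$ is bounded above on the positive integers: the linear term $-k$ dominates any polylogarithmic expression as $k$ grows, so $g(k)\to -\infty$, while only finitely many small values of $k$ need be inspected by hand. This yields the desired $\log n + 2\log d + O(1)$ bound.

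For the running time, I would apply Lemma~\ref{lem:main-bounded} at level $k = \log n$ to the input forest $F$, choosing any bin $J \subseteq [1,N)$ of length $\lfloor c_{\log n}|F|\rfloor$; such a bin exists since $c_{\log n}|F| \leq \gamma n = N$. The lemma then furnishes a legal-containment mapping $I : F \to \cI_{\log n}(J) \subseteq \cI$ computable in $O(|F|) = O(n)$ time. It only remains to convert each interval $I_{k,a,b}$ into the bit layout above; since a label fits in $O(1)$ machine words of $\Theta(\log n)$ bits, this conversion is $O(1)$ per node in the RAM model, giving $O(n)$ time overall.

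I do not foresee a serious obstacle: the argument is pure bookkeeping on top of Lemma~\ref{lem:main-bounded}. The only mildly delicate point is checking that the polylogarithmic overhead $4\log k + 4\log\log k$ in the bit count is indeed absorbed uniformly by the $-k$ term for all $k\in\{1,\dots,\log n\}$, which boils down to a one-variable elementary estimate.
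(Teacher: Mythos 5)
Your proposal is correct and follows essentially the same route as the paper: the paper's proof is exactly this bookkeeping on the explicit label layout (header of $O(\log k)$ bits, $\log B_k$ bits for $b$, $\log A_k$ bits for $a$), absorbing the $O(\log k+\log\log k)-k$ surplus into the $O(1)$ term, with the $O(n)$ construction time inherited from Lemma~\ref{lem:main-bounded}. No gaps to report.
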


\subsubsection{The decoder  $\cD$}
Given a label $L(u)$ of some node $u$ in some forest $F\in \cF(n,d)$, the decoder $\cD$  extracts the interval $I(u)=I_{k,a,b}$ as follows. First,
since $F \in \cF(n,d)$, the decoder $\cD$  knows $d$, and thus knows $\log d$. By finding the first bit that equals 1 it can extract the value $\lceil\log k \rceil$. Then by inspecting the next $\lceil\log k \rceil$ bits it extracts the value $k$. 
Subsequently, the decoder  inspects the next $\log d+\log k+2\log\log k  +O(1)$ bits and extracts  $b$. Finally, $\cD$ inspects the remaining $\log n+\log d+\log k+2\log\log k -k+O(1)$ bits in the label to extract $a\in A_k$.
At this point the decoder has $k$, $a$, and $b$ and it can recover $I_{k,a,b}$ using $O(1)$ multiplication 
and division operations. Recall, in the context of this section, we assume that such operations  take constant time. We thus get the following.
\begin{observation}\label{obs:query-bounded}
Given a label $L(u)$ of some node $u$ in some forest $F\in \cF(n,d)$, the decoder $\cD$ can extract $I(u)=I_{k,a,b}$ in constant time.
\end{observation}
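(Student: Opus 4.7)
The plan is to show that each of the four fields of $L(u)$ described just before the observation can be read off by a constant number of RAM operations from the allowed toolkit (shifts, addition/subtraction, multiplication, division, less-than, and extraction of the index of the least significant $1$-bit), and that once $k$, $a$, and $b$ are in hand the interval $I_{k,a,b}=[a\,x_k,(a+b)\,x_k)$ can be recovered in constant time.

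First I would recover $\lceil \log k\rceil$. By construction, the label starts with a block of $\lceil\log k\rceil$ zeros followed by a single $1$. Reversing the bit order of the prefix (or equivalently reading the label from the low-order end, depending on the storage convention) turns this unary prefix into a string whose first $1$-bit sits at position $\lceil\log k\rceil$; applying the ``index of the least significant $1$-bit'' operation returns $\lceil\log k\rceil$ directly. Since $d$ is a parameter of the scheme and is known to the decoder, $\lceil\log d\rceil$ is a known constant from the decoder's point of view, and $\lceil\log\log k\rceil$ follows from $\lceil\log k\rceil$ by a second application of the same bit operation (or from a short precomputed table).

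Next, knowing the width of each of the four fields, I would use a constant number of left/right shifts and bitwise masks (implementable with shifts, subtractions, and multiplications by $2^j$) to isolate the binary representation of $k$, then of $b$, then of $a$. This takes $O(1)$ operations per field. Finally, with $k$ in hand I can evaluate
\[
x_k \;=\; \left\lceil \frac{2^{k-1}}{(d+1)\,k\,\log^2 k}\right\rceil
\]
in $O(1)$ time: the numerator $2^{k-1}$ is produced by a left shift of $1$, the factor $\log^2 k$ is computed as the square of $\lfloor\log k\rfloor$ (itself obtainable via the least-significant-bit operation applied to an appropriate power of two, or simply as $\lceil\log k\rceil$ up to an additive constant), the product $(d+1)\,k\,\log^2 k$ is a single pair of multiplications, and the ceiling is implemented via the standard identity $\lceil p/q\rceil = \lfloor (p+q-1)/q\rfloor$ using one division. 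Then $a\cdot x_k$ and $(a+b)\cdot x_k$ are one multiplication each, yielding the two endpoints of $I_{k,a,b}$.

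The main obstacle is really just a bookkeeping one: making sure the unary prefix encoding of $\lceil\log k\rceil$ can be decoded by the allowed ``index of least significant $1$-bit'' primitive rather than by scanning the label bit by bit (which would cost $\Theta(\log\log n)$ time). Once one is comfortable that this primitive, together with shifts, gives random access to each of the four subfields of the label in $O(1)$ time, and that the formula for $x_k$ involves only a constant number of arithmetic operations on $\Theta(\log n)$-bit words, the observation follows immediately from the label layout established in Section~\ref{section:marker}.
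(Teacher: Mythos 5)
Your proposal is correct and follows essentially the same route as the paper: the decoder, knowing $d$ (hence $\log d$) because $F\in\cF(n,d)$, locates the first $1$-bit to read off $\lceil\log k\rceil$, then peels off the fixed-width fields for $k$, $b$, and $a$ in turn, and finally reconstructs $I_{k,a,b}$ with a constant number of arithmetic (multiplication/division, shift) operations, each assumed to take constant time in the stated RAM model. The extra details you supply (using the least-significant-$1$-bit primitive for the unary prefix and the explicit $O(1)$ evaluation of $x_k$) are consistent with, and slightly more explicit than, the paper's own brief argument.
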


Given the labels $L(u)$ and $L(v)$ of two nodes $u$ and $v$ is some rooted forest $F\in\cF(n,d)$, the decoder
finds the ancestry relation between the nodes using a simple  
interval containment test between the corresponding intervals, namely $I(u)$ and $I(v)$.

The fact that the intervals are assigned by a legal-containment mapping ensures the following: 

\begin{lemma}\label{lem:correct}
$(\cM,\cD)$ is a correct ancestry-labeling scheme for $\cF(n,d)$.
\end{lemma}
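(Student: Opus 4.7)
The proof is essentially a routine verification that the pieces assembled in the previous subsections compose correctly, so my plan is simply to chain together the three main ingredients we have already established: Lemma~\ref{lem:main-bounded}, Observation~\ref{obs:query-bounded}, and the definition of the decoder.

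First, I would fix an arbitrary forest $F\in\cF(n,d)$ and two nodes $u,v\in F$. By Lemma~\ref{lem:main-bounded} applied with $k=\log n$ and a bin $J\subseteq[1,N)$ of size $\lfloor c_{\log n}|F|\rfloor$ (which fits in $[1,N)$ since $N=\gamma n\geq c_{\log n}|F|$), the marker $\cM$ produces a legal-containment mapping $I:F\to \cI$. Hence, by Definition~\ref{def:legal}, $u$ is an ancestor of $v$ in $F$ if and only if $I(v)\subseteq I(u)$. This equivalence, valid for every pair $u,v$, is what we ultimately need to reflect at the level of labels.

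Next, I would invoke Observation~\ref{obs:query-bounded}: from each of the labels $L(u)$ and $L(v)$, the decoder $\cD$ can recover the intervals $I(u)$ and $I(v)$ exactly (and in constant time). Since $\cD$ is defined to output $1$ precisely when the interval containment test $I(v)\subseteq I(u)$ succeeds, we obtain
\[
\cD(L(u),L(v))=1 \iff I(v)\subseteq I(u) \iff u \text{ is an ancestor of } v \text{ in } F,
\]
where the first equivalence is by the definition of $\cD$ and the second is by the legal-containment property of $I$. This is exactly the correctness condition required of an ancestry-labeling scheme, and the lemma follows.

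I do not anticipate any obstacle: all the technical work was already carried out in the proof of Lemma~\ref{lem:main-bounded} (correctness of the legal-containment mapping) and in the description of $\cM$ and $\cD$ (correctness of the encoding/decoding of $I_{k,a,b}$). The only thing the proof of Lemma~\ref{lem:correct} has to do is to note that the intervals recovered by $\cD$ are the same intervals $\cM$ assigned, and that interval containment was set up to match the ancestry relation.
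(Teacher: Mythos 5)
Your proposal is correct and follows the same route as the paper: the paper also obtains the correctness of $(\cM,\cD)$ by combining the legal-containment mapping guaranteed by Lemma~\ref{lem:main-bounded} with the fact that the decoder recovers $I(u)$ and $I(v)$ from the labels (Observation~\ref{obs:query-bounded}) and simply performs the interval containment test. No gap; this is exactly the one-step composition the paper intends.
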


Lemmas~\ref{lem:label-size-bounded} and~\ref{lem:correct} imply that there exists an ancestry-labeling scheme  for $\cF(n,d)$ with label size at most
$\log n +2\log d +O(1)$ bits. Recall that $\cF(n,d)$ denotes the set of all forests with at most $n$ nodes, and spine-decomposition depth at most $d$, while $\cF(n)$ denotes  the set of all forests with at most $n$ nodes. In general, an ancestry scheme that is designed for a family $\cF$ of forests may rely on a decoder that takes advantage from the fact that the labels are the ones of nodes belonging to a forest $F\in \cF$. For instance, in the case of $\cF=\cF(n,d)$, the decoder can rely on the knowledge of $n$ and $d$.  Although the ancestry-labeling scheme described above was designed for forests in $\cF(n,d)$, we show that a slight modification of it applies to all forests, at a very small cost in term of label size. 
This will establish our main theorem for this section. 

\subsubsection{Main theorem for forests with bounded spine-decomposition depth}

\begin{theorem}\label{thm:anc-bounded}
~
\begin{enumerate}
\item
There exists an ancestry-labeling scheme  for $\cF(n)$
such that any node in a forest of spine decomposition depth at most $d$ is labeled using at most
$\log n +2\log d +O(1)$ bits. 
\item There exists an ancestry-labeling scheme  for the family of all forests
such that any node in a forest of size at most $n$ and spine decomposition depth  at most $d$ is labeled using at most
$\log n +2\log d +2\log\log d+O(1)$ bits.
\end{enumerate}
In both schemes, the query time of the scheme is constant, and the time to construct the scheme for a given forest is linear.
\end{theorem}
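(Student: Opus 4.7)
The strategy is to reduce both parts to the ancestry scheme of Lemmas~\ref{lem:main-bounded}--\ref{lem:correct}, which produces labels of size $\log n + 2\log d + O(1)$ for $\cF(n,d)$ but assumes the decoder knows $n$ and $d$ explicitly in order to parse the widths of the $b$- and $a$-fields. The task is therefore to convey the missing parameters to the decoder without spending more than the allotted overhead. Given a forest $F$, I would first compute its spine-decomposition depth $d_F$ in linear time and round up to the nearest power of two, $d'_F=2^{\lceil\log d_F\rceil}<2d_F$; running the marker of Lemma~\ref{lem:label-size-bounded} with parameter $d'_F$ still yields labels of size $\log n+2\log d_F+O(1)$, and the integer $\log d'_F$ becomes a clean handle on the parameter used.

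For Part~1 the decoder knows $n$, so the label \emph{length} alone can encode $d'_F$. Concretely, I would right-pad every label (with arbitrary bits that lie past the $a$-field, and hence are never read by the underlying decoder) to the common length $L^\ast(n,d'_F):=\lceil\log n\rceil+2\log d'_F+c$, where $c$ is a single universal constant chosen large enough that $L^\ast(n,d'_F)$ dominates every label length produced by Lemma~\ref{lem:label-size-bounded}. Given any such label, the decoder reads $L^\ast$ from its bit-string length, extracts $\log d'_F=(L^\ast-\lceil\log n\rceil-c)/2$ in one arithmetic step, reconstructs $d'_F$, and then invokes the decoder of Observation~\ref{obs:query-bounded} with parameters $n$ and $d'_F$.

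For Part~2 the decoder is moreover oblivious to $n$, so I would further prepend a self-delimiting encoding of $j:=\log d'_F$ consisting of $\lceil\log(j+1)\rceil$ zeroes, a $1$, and the $\lceil\log(j+1)\rceil$-bit binary expansion of $j$, using $2\log\log d+O(1)$ bits in total. Using $n=|F|$ in the underlying scheme, the overall label size becomes $\log n+2\log d+2\log\log d+O(1)$, matching the claim. At decoding time, the prefix yields $d'_F$; the total length minus the prefix length equals $L^\ast(|F|,d'_F)$, which inverts to $\lceil\log |F|\rceil$ exactly as in Part~1, after which the suffix is handed off to Observation~\ref{obs:query-bounded}.

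Construction time remains linear (compute $d_F$, run the marker of Lemma~\ref{lem:label-size-bounded}, then pad and prepend in one pass over the labels), and query time stays $O(1)$ since both the length-based inversion and the self-delimiting parse use a constant number of RAM operations. The one delicate point I expect to have to verify is that the maximum label length from Lemma~\ref{lem:label-size-bounded} can indeed be upper-bounded by $\lceil\log n\rceil+2\log d'_F+c$ for a single constant $c$ that is uniform over $n$, $d'_F$, and $k$; this follows from the explicit expressions for $\lceil\log B_k\rceil$ and $\lceil\log A_k\rceil$ together with the observation, already used in the derivation of that lemma, that the $k$-dependent correction $4\log k+4\log\log k-k$ is bounded for all $k\geq 1$.
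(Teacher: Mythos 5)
Your proposal follows essentially the same route as the paper: run the scheme of Lemmas~\ref{lem:label-size-bounded}--\ref{lem:correct} with the spine-decomposition depth rounded up to a power of two, recover that parameter from the (padded) label length when $n$ is known (Part~1), and for Part~2 spend an extra self-delimiting $2\log\log d+O(1)$-bit encoding of $\log d$ so that the remaining length reveals the size parameter. The only adjustment needed is in Part~2: since the decoder can recover only $\lceil\log |F|\rceil$ from the length, the underlying marker should be run with $\widehat{n}=2^{\lceil\log |F|\rceil}$ (as the paper does) rather than with $n=|F|$ itself, so that the field widths the decoder assumes match those the marker used; this changes nothing in the stated bounds.
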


\begin{proof}
Lemma~\ref{lem:correct} together with  Lemma~\ref{lem:label-size-bounded} and Observation~\ref{obs:query-bounded}  
establishes the fact that there exists an ancestry-labeling scheme  for $\cF(n,d)$ with label size at most
$\log n +2\log d +O(1)$ bits. Moreover, the query time of the scheme is constant, and the time to construct the scheme for a given forest is linear.

To obtain a scheme for $\cF(n)$, we can no longer assume that the decoder of the scheme knows $d$. However, by applying a simple trick we show that the scheme for $\cF(n,d)$ can be easily transformed to a scheme for $\cF(n)$ with the desired bound.
%
%
We define a labeling scheme $(\cM,\cD)$ for $\cF(n)$.
Given a forest $F\in \cF(n)$ of spine-decomposition depth $d$, let $\widehat{d}=2^{\lceil \log d \rceil}$, i.e., $\widehat{d}$ is the smallest integer larger
than $d$ which is a power of 2. Obviously, $F\in \cF(n,\widehat{d})$.
The first part of the proof tells us that there exists an ancestry-labeling scheme $(\cM_{\widehat{d}},\cD_{\widehat{d}})$  for $\cF(n,\widehat{d})$
which uses labels each composed of precisely $L=\log n +2\log \widehat{d} +O(1)$ bits.
(By padding enough zeros to the left of the label,
we can assume that each label consists of precisely $L$ bits.) Moreover, the labels are assigned in $O(|F|)$ time.
The marker $\cM$ uses this scheme to label the nodes in $F$. 
The decoder $\cD$ operates as follows.
Given the labels of two nodes $u$ and $v$ in $F$, the decoder $\cD$
first finds out what is $\widehat{d}$ (this can be done, since $n$ and $L$ are known to $\cD$, and since $\widehat{d}$  is a power of 2),
and then uses the (constant time) decoder $\cD_{\widehat{d}}$ to interpret the relation between $u$ and $v$. The bound on the size of a label follows as
$L=\log n +2\log d +O(1)$.

Finally, we now show that, with a slight increase on the label size, one can have
an ancestry-labeling scheme for the family  of all forests (i.e., in such a scheme, given the labels of two nodes in a forest $F$,
the decoder doesn't have bounds on neither the size of $F$ nor on its spine-decomposition depth).
Let $F$ be a forest with $n$ nodes and depth $d$.
Let $\widehat{n}=2^{\lceil \log n \rceil}$.
We label the nodes of $F$ using the marker of the scheme $(\cM_{\widehat{n}},\cD_{\widehat{n}})$ for $\cF(\widehat{n})$ mentioned above.
By adding $2\lceil\log\log d\rceil$ bits to the label of each node in $F$,
one can assume that given  a label of a node in $F$, the decoder knows the value of $\log d$.
Now given a label of a node in $F$, the decoder can extract $\log \widehat{n}$ (using the size of the label, in a method similar to
one described above). Since $\widehat{n}=2^{\log \widehat{n}}$, we can assume that
the decoder knows $\widehat{n}$. Thus, to extract the ancestry relation between the two nodes in $F$, the decoder uses $\cD_{\widehat{n}}$.
\end{proof}

Note that the depth of a forest $F$ is bounded from above by the spine decomposition depth of~$F$. Hence, 
all our aforementioned results for forests with bounded  spine decomposition depth hold also restricted to bounded depth forests.
Hence, Theorem~\ref{thm:anc-bounded} translates to the following corollary.

\begin{corollary}\label{cor:ancestry3}~
\begin{enumerate}
\item
There exists an ancestry-labeling scheme  for $\cF(n)$
such that any node in a forest of depth at most  $d$ is labeled using at most
$\log n +2\log d +O(1)$ bits. 
\item There exists an ancestry-labeling scheme  for the family of all forests
such that any node in a forest of size $n$ and depth at most $d$ is labeled using at most
$\log n +2\log d +2\log\log d+O(1)$ bits.
\end{enumerate}
In both schemes, the query time of the scheme is constant, and the time to construct the scheme for a given forest is linear.
\end{corollary}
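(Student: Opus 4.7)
The plan is straightforward: this corollary is essentially a direct specialization of Theorem~\ref{thm:anc-bounded}, relying on the fact already noted in Subsection~\ref{sub-spine} that the spine decomposition depth $\SD(F)$ of any forest $F$ is bounded from above by the (ordinary) depth of $F$. So there is no new construction to devise; the whole content of the argument is a containment of forest families.

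More precisely, I would first recall why $\SD(F)\leq\mathrm{depth}(F)$. Any spine $S=(v_1,v_2,\dots,v_s)$ produced during the spine decomposition is a descending path in some subtree of $F$: by construction, $v_{j}$ is a child of $v_{j-1}$ for every $1<j\leq s$, and this subtree is itself obtained by repeatedly deleting spines from $F$, a process that only removes nodes. Hence the length $s$ of any such spine is bounded by the longest root-to-leaf path in $F$, i.e., by $\mathrm{depth}(F)$.

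Next, I would observe that this implies $\{F:\mathrm{depth}(F)\leq d,\;|F|\leq n\}\subseteq \cF(n,d)$. Consequently, the labeling scheme from the first part of Theorem~\ref{thm:anc-bounded}, applied to the family $\cF(n)$, already uses at most $\log n+2\log d+O(1)$ bits on every node of every forest of depth at most $d$ (since it does so on every forest of spine decomposition depth at most $d$). This gives part~(1) of the corollary. For part~(2), the same reasoning applied to the second scheme of Theorem~\ref{thm:anc-bounded} yields the bound $\log n+2\log d+2\log\log d+O(1)$ for nodes belonging to a forest of size at most $n$ and depth at most $d$. Query time and construction time are inherited verbatim from Theorem~\ref{thm:anc-bounded}.

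There is no real obstacle here; the only thing to be careful about is to use exactly the schemes of Theorem~\ref{thm:anc-bounded}, whose decoders do not assume knowledge of the spine decomposition depth (they only need $n$, respectively nothing, together with the label sizes and the padding trick already described in that proof). Thus the same decoders automatically work when the given promise is on the depth rather than on $\SD$, and the corollary follows with no additional machinery.
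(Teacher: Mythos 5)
Your proposal is correct and matches the paper's own argument: the paper likewise derives the corollary immediately from Theorem~\ref{thm:anc-bounded} by noting that the spine decomposition depth of a forest is bounded above by its depth, so forests of depth at most $d$ lie in $\cF(n,d)$ and inherit the label size, query time, and construction time verbatim. Your explicit justification that every spine is a descending path (hence of length at most the depth) is exactly the observation the paper records at the end of Subsection~\ref{sub-spine}.
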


\subsection{A parenthood-labeling scheme}

The ancestry-labeling scheme described in Corollary \ref{cor:ancestry3} can be advantageously transformed into a parenthood-labeling scheme
which is very efficient for trees of small depth. Recall that an parenthood-labeling
scheme  for the  family of rooted  forests $\cF$ is a pair $(\cM,\cD)$ of marker and decoder, satisfying that if $L(u)$ and $L(v)$ are the labels given
by the marker $\cM$ to
two nodes $u$ and $v$ in some forest $F\in\cF$, then:
$\cD(L(u),L(v))=1 \iff \mbox{$u$ is the parent of $v$  in $F$}$.

Similarly to the ancestry case, we evaluate a parenthood-labeling scheme
$(\cM,\cD)$ by its {\em label size}, namely
the maximum number of bits in a label assigned by the marker
algorithm $\cM$ to any node in any forest in $\cF$.

For  two nodes $u$ and $v$ in a rooted forest $F$, $u$ is a parent of $v$ if and only if  $u$ is an ancestor
of $v$ and $depth(u)=depth(v)-1$. It  follows that one can easily transform any ancestry-labeling scheme for
$\cF(n)$ to a parenthood-labeling scheme for
$\cF(n)$ with an extra additive term of $\lceil\log d\rceil$ bits to the label size (these bits are simply used to encode
the depth of a vertex).
The following theorem follows.
\begin{theorem}\label{thm:adj} ~
\begin{enumerate}
\item
There exists a parenthood-labeling scheme  for  $\cF(n)$
such that any node in a forest of depth at most $d$ is labeled using
$\log n +3\log d +O(1)$ bits. 
\item
There exists a parenthood  scheme  for the family of all rooted forests 
such that any node in an $n$-node forest of depth at most $d$ is labeled using
$\log n +3\log d +2\log\log d+ O(1)$ bits. 
\end{enumerate}
For both schemes, the query time is constant and the time to construct the scheme for a given forest is linear.
\end{theorem}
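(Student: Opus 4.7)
The plan is to derive the parenthood-labeling schemes from the ancestry-labeling schemes of Corollary~\ref{cor:ancestry3} by augmenting each label with the depth of the corresponding node. The key identity, already pointed out in the excerpt, is that in a rooted forest a node $u$ is the parent of a node $v$ if and only if $u$ is an ancestor of $v$ and the depth of $v$ equals the depth of $u$ plus one. Since depths lie in $[1,d]$, they can be encoded using $\lceil\log d\rceil$ bits.

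For Part~1, given a forest $F\in\cF(n)$ of depth at most $d$, the marker first applies the ancestry scheme of Corollary~\ref{cor:ancestry3}(1), producing an ancestry-label of size $\log n + 2\log d + O(1)$ bits per node. It then appends $\lceil\log d\rceil$ bits encoding the depth of the node, computed during a single $O(n)$-time traversal of $F$. The decoder, upon receiving two labels $L(u)$ and $L(v)$, invokes the ancestry decoder on the ancestry portions, reads the two depth fields, and returns~$1$ iff the ancestry test succeeds, $u\ne v$, and the depth of $v$ equals the depth of $u$ plus one. Correctness is immediate from the parenthood characterization above; the query runs in constant time since it combines the constant-time ancestry decoder with a constant-time arithmetic comparison. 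The total label size is $\log n + 3\log d + O(1)$ bits, and the construction time stays linear.

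For Part~2, I would apply the same transformation to the scheme of Corollary~\ref{cor:ancestry3}(2), which works for the family of all rooted forests and produces labels of size $\log n + 2\log d + 2\log\log d + O(1)$ bits. The decoder of that scheme already knows how to recover $d$ from the label itself (using the same self-delimiting prefix trick as in the proof of Theorem~\ref{thm:anc-bounded}(2)). Therefore, once the marker appends a fixed-width $\lceil\log d\rceil$-bit depth field to each ancestry label, the decoder can first recover $d$, then locate and parse the depth field without ambiguity. The resulting parenthood-labeling scheme has label size $\log n + 3\log d + 2\log\log d + O(1)$ bits, with constant query time and linear construction time.

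There is no genuine obstacle: this is a clean reduction, and the only piece needing a bit of care is the unambiguous separation of the appended depth bits from the ancestry label in the parameter-oblivious setting of Part~2. That separation is handled exactly as in the proof of Theorem~\ref{thm:anc-bounded}(2), so the proof reduces to straightforward bookkeeping on top of Corollary~\ref{cor:ancestry3}.
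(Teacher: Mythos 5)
Your proposal is correct and follows essentially the same route as the paper: the paper also converts the ancestry schemes of Corollary~\ref{cor:ancestry3} into parenthood schemes by appending a $\lceil\log d\rceil$-bit depth field and using the characterization that $u$ is the parent of $v$ iff $u$ is an ancestor of $v$ and $\mbox{depth}(u)=\mbox{depth}(v)-1$. Your extra remarks on parsing the depth field in the parameter-oblivious setting of Part~2 are just the routine bookkeeping the paper leaves implicit.
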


\section{The general  ancestry-labeling scheme}\label{sec:main}

This section is dedicated to the construction of 
an ancestry-labeling scheme for forests, which has label size $\log n+O(\log\log n)$ bits for $n$-node forests.
Moreover, given an $n$-node forest $F$, the labels can be assigned to the nodes of $F$ in $O(n)$ time,
and any ancestry query can be answered in constant time.

In Section \ref{sec:bounded}, we managed to construct an efficient ancestry labelling scheme for forests with bounded spine decomposition depth. 
 Unfortunately, the situation becomes more difficult if the input forests can have long spines. To handle this case, we introduce a new tree-decomposition, called the {\em folding} decomposition.
Applying this decomposition to a forest $F$ results in a forest~$F^*$, on the same set of vertices, whose spine decomposition is of depth at most 2. Moreover, this transformation partially preserves the ancestry relation in $F$ in the following sense: if $v$ is an ancestor of $u$ in $F^*$ then $v$ is also an ancestor of $u$ in $F$. Next, we apply the ancestry scheme from Section  \ref{sec:bounded} over $F^*$, resulting in labels of size $\log n+O(1)$ for the nodes of $F^*$. Giving the corresponding labels to the nodes of $F$ enables  the decoder to detect all ancestry relations in $F$ that are preserved in $F^*$.
At this point, it remains to deal with ancestry relations that are not preserved by this transformation. For this purpose, we shall provide nodes with additional  information which can be encoded using $O(\log\log n)$ bits per node. To explain how we do so, let us first describe the folding decomposition.

\subsection{The folding decomposition}
\label{sec:folding}

We construct the folding decomposition recursively, according to the recursive construction of the spine decomposition, as described in  Subsection~\ref{sub-spine}. In a given level of the recursion, we are given a tree $T$ with spine $S=(v_1,v_2,\cdots,v_s)$ (see Figure~\ref{fig:folding}).  In the folding decomposition $T^*$ of $T$, the apex node $v_1$ remains the root, and all its children in $T$ remain children of $v_1$ in $T^*$. In addition, all heavy nodes $v_2,v_3,\cdots,v_s$ also become children of $v_1$ in $T^*$. Furthermore, given a heavy node~$v_i$, $2\leq i\leq s$, all of its  non-heavy children  in $T$  remain children of $v_i$ also in $T^*$. The next level of the recursion now applies separately to all trees in all forests $F_1,F_2,\cdots, F_s$ resulting from removing the spine nodes from the tree (see Subsection~\ref{sub-spine}). Note that all these trees have at most half the number of nodes in $T$. Also note that the roots of these trees are going to be apex nodes in the next level of the recursion.

\begin{figure}
\begin{center}
\includegraphics[width=0.9\linewidth]{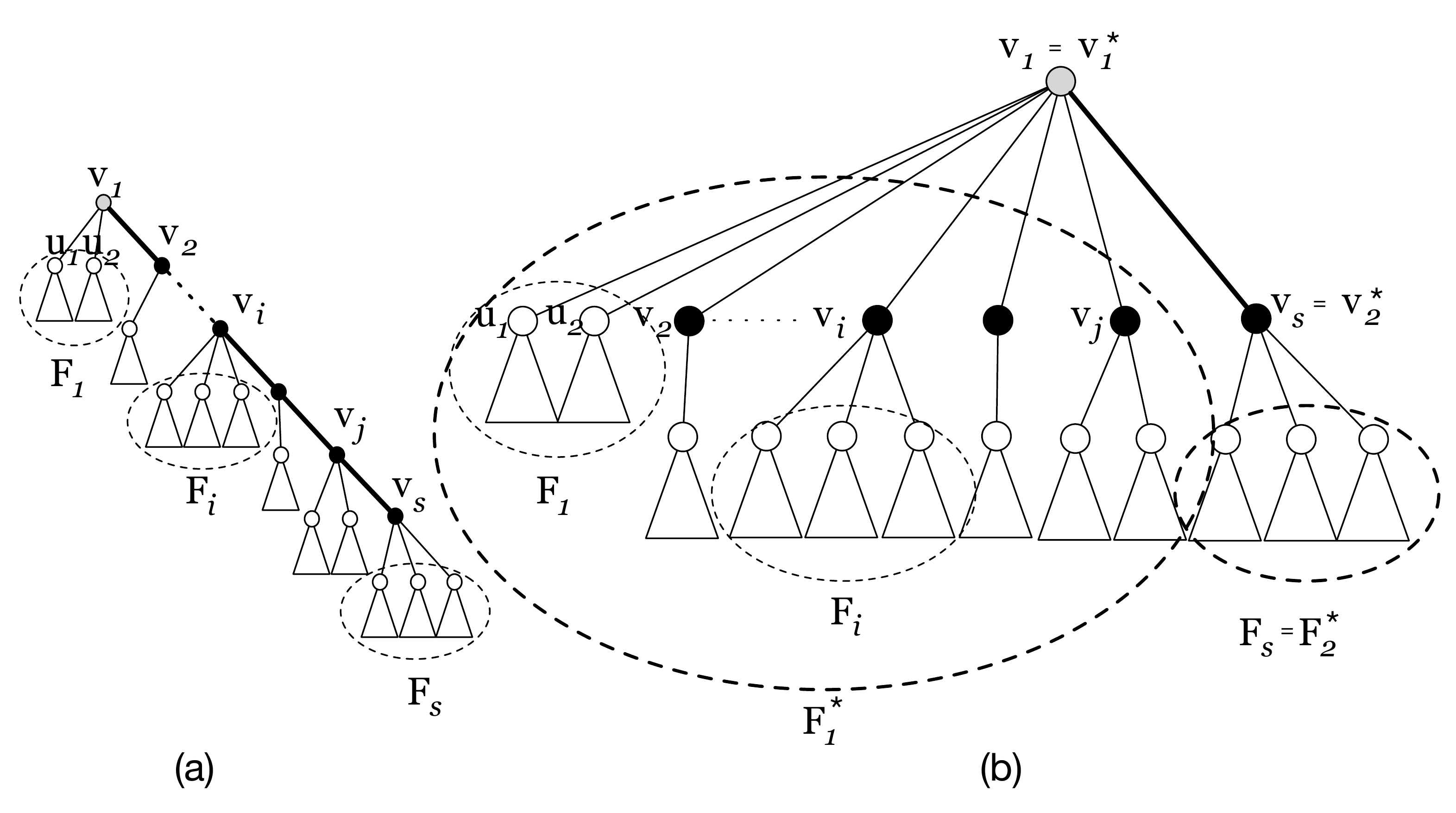}
\caption{Folding decomposition $T^*$ (b) from the spine decomposition of $T$ (a)}
\label{fig:folding}
\end{center}
\end{figure}

The following lemma provides a critical property of the folding decomposition. 

\begin{lemma}\label{lem:depth2}
Let $T$ be a tree, and let $T^*$ be its folding decomposition. 
The spine decomposition depth of $T^*$ is at most~2. Moreover, if $v$ is a strict ancestor of $u$ in $T^*$, then $v$ is an ancestor of $\apex(u)$ in $T^*$. 
\end{lemma}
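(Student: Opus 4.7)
The plan is to prove the two claims separately — the bound $\SD(T^*)\leq 2$ by induction on $|T|$, and the ancestry statement by a direct structural argument. The base case $|T|=1$ is immediate. For the inductive step, let $S=(v_1,\dots,v_s)$ be the spine of $T$ and $F_1,\dots,F_s$ the forests obtained by removing $S$. The children of $v_1$ in $T^*$ are the heavy nodes $v_2,\dots,v_s$ together with the roots of the (folded) trees in $F_1$. By property P2, each root of a folded tree in $F_1$ has weight at most $|T|/2$ in $T^*$. The weight of $v_i$ in $T^*$ equals $1+|F_i|$, since $v_i$'s subtree in $T^*$ contains $v_i$ and exactly the recursively folded descendants attached via $F_i$. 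Combining $\omega_T(v_i)=1+|F_i|+\omega_T(v_{i+1})$ with $\omega_T(v_{i+1})>|T|/2$ (true for each heavy $v_{i+1}$) gives $1+|F_i|<|T|/2$ for $2\leq i<s$, while $1+|F_s|=\omega_T(v_s)>|T|/2$. Therefore the only child of $v_1$ in $T^*$ whose weight can exceed $|T^*|/2=|T|/2$ is $v_s$ (when $s\geq 2$), and every child of $v_s$ in $T^*$ is a root of a folded tree in $F_s$ of weight at most $|T|/2$. Hence the top-level spine of $T^*$ has size at most~2.

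Removing this top-level spine leaves a forest of two kinds of subtrees: (i) folded versions of trees in $F_1\cup F_s$, each of size at most $|T|/2<|T|$, and (ii) for each $2\leq i<s$, the subtree of $T^*$ rooted at $v_i$, which by the recursive construction of the folding decomposition is itself the folding of a tree of size $1+|F_i|<|T|$. The induction hypothesis applies to every tree in both families, yielding spine-decomposition depth at most~2 for each, and combined with the top-level bound this gives $\SD(T^*)\leq 2$.

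For the second claim, if $u$ is an apex node then $\apex(u)=u$ and the statement is immediate. If $u$ is heavy, then by construction of $T^*$, $u$ is a direct child of $\apex(u)$ in $T^*$; hence any strict ancestor of $u$ in $T^*$ is either $\apex(u)$ itself or a strict ancestor of $\apex(u)$, and so is in particular an ancestor of $\apex(u)$. The main obstacle I anticipate is the identification in item~(ii) above: showing carefully that the subtree of $T^*$ rooted at an intermediate heavy node really is the folding of a smaller tree, so that the induction hypothesis can be invoked, requires unpacking the recursive construction; the weight computations that pin down $v_s$ as the unique possible continuation of the top-level spine form the other technical core of the proof.
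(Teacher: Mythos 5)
The weight computations in your first two paragraphs (only $v_s$ among the children of $v_1$ in $T^*$ can weigh more than $|T|/2$, and no child of $v_s$ can) are correct and match the paper's argument, and your proof of the second statement, via the observation that a heavy node $u$ has $\apex(u)$ as its parent in $T^*$, is correct as well. The genuine gap is item~(ii) of your induction. The subtree of $T^*$ rooted at an intermediate heavy node $v_i$, $2\leq i<s$, is \emph{not} the folding decomposition of a smaller tree: it consists of $v_i$ with the foldings of the trees of $F_i$ attached as separate children subtrees, whereas the folding of the tree $\{v_i\}\cup F_i$ would compute its own spine starting at $v_i$, possibly descend into a large tree of $F_i$, and flatten the spine nodes of that tree directly under $v_i$. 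Concretely, let $T$ have root $v_1$ whose only child is $v_2$; let $v_2$ have two children, $v_3$ and $r''$; let $r''$ have one child $w$ with two leaf children; and let $v_3$ have six leaf children ($13$ nodes in total). The spine of $T$ is $(v_1,v_2,v_3)$, $F_2$ is the single $4$-node tree $T''$ rooted at $r''$, and $T''^*=T''$. In $T^*$ the subtree rooted at $v_2$ is the path $v_2,r'',w$ with two leaves below $w$. It is not the folding of the $5$-node tree it spans (that folding would make $w$ a child of $v_2$), so the induction hypothesis cannot be applied to it; worse, when the spine decomposition of $T^*$ recurses on this subtree after removing the spine $(v_1,v_3)$ of $T^*$, its greedy spine is $(v_2,r'',w)$, of size~$3$, so no re-identification with a folded tree can recover the bound of~$2$ for these heavy-rooted trees. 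This is exactly the obstacle you flagged at the end, and it is where the proof breaks.

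Note also that this is a genuinely different route from the paper's: the paper does not induct over the spine decomposition of $T^*$; it argues level by level of the \emph{folding} recursion that, for every subtree $T'$ arising there, the spine of $T'^*$ has at most two nodes (the same weight computation you perform at the top level), and it does not treat the subtrees rooted at intermediate heavy nodes as separate trees of the spine decomposition of $T^*$. To complete an argument along your lines you would have to add these heavy-rooted trees ($v_i$ together with the separately folded trees of $F_i$) as a second family in the induction and bound their spines directly; as the example above shows, the bound obtainable for their greedy spines is $3$ rather than $2$, so you would in addition need to examine how the scheme of Section~\ref{sec:bounded} is actually applied to $T^*$ (namely, following the folding recursion) in order to justify the constant used later. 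In short, the second claim of the lemma and the top-level spine bound are fine in your write-up, but the inductive closure in item~(ii) rests on a false identification and does not go through as stated.
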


\begin{proof}
Consider a level $\ell$ of the folding decomposition of $T$, dealing with some subtree $T'$ of $T$ (see Figure~\ref{fig:folding}). 
Let $S=(v_1,\dots,v_s)$ be the spine of $T'$, and recall that $F'_i$ is the forest consisting of all subtrees  rooted at the apex children of $v_i$ in $T'$, for $i=1,\dots,s$.  For every $1\leq i<s$, we have $|F'_i|<|T'|/2$, because $v_i$ has a heavy child $v_{i+1}$ whose subtree contains at least $|T'|/2$ nodes. Thus, the spine of $T'^*$ contains the root $v_1$ of $T'$ and  the last heavy node $v_s$ on the spine $S$ (which can be~$v_1$ itself). Thus the spine of $T'^*$ contains at most two nodes. This completes the proof of the first part of the lemma. 
Assume now that $v$ is a strict ancestor of $u$ in $T^*$. If $v$ is an apex, then $v$ is an ancestor of $\apex(u)$ in $T^*$. If $v$ is heavy, then all its children are apexes, and therefore it follows that $v$ is an ancestor of $\apex(u)$ in $T^*$. 
\end{proof}

Observe that, as mentioned above, the ancestry relation in $T$ is partially preserved in $T^*$. To see why, notice first that no new ancestry relations are created in $T^*$, that is, if $v$ is an ancestor of $u$ in~$T^*$ then $v$ is also an ancestor of $u$ in $T$. Second, consider a node $v$ and one of its descendants $u$ in $T$. If $v$ is an apex node in $T^*$, then $v$ is also an ancestor of $u$ in $T^*$. If $v$ is heavy, namely, $v=v_i$,  for some $2\leq i\leq s$ in a spine $v_1,v_2,\cdots, v_s$, then there are two cases. If $u$ belongs to $F_i$ then $v$ is also an ancestor of $u$ in $T^*$.
However, if $u\in F_j$ for $j>i$, then the ancestry relation between $v$ and $u$  is not preserved in $T^*$. (Note, in this case $u$ is a descendant of $v_j$ in both $T$ and $T^*$ but $v_i$ and~$v_j$ are siblings in $T^*$ -- each being a child of the corresponding apex node $v_1$). 
In other words, the only case where the ancestry relation between a node $v$ and its descendant $u$ in $T$ is not preserved in~$T^*$ is when $v$  is heavy, i.e., $v=v_i$ in some spine $S=(v_1,v_2,\cdots, v_s)$ where $1<i< s$, and $u$ is a descendant in $T$ of $v$'s heavy child $v_{i+1}$ (including the case where $u=v_{i+1}$). In this case, $v$ is an ancestor of $u$ in $T$ but not in $T^*$.  

For a node $v$, let $\apex(v)$ be the closest ancestor of $v$ in $T$ that is an apex. Note that if $v$ is an apex then $\apex(v)=v$. Recall that every node is either heavy or apex. 
Consider a DFS traversal over $T$ that starts from the root and visits apex children first. For any node $u$, let $\dfs(u)$ be the DFS number of $u$ (where the DFS number of the root is~1). 

The following lemma provides a characterization of the ancestry relations in $T$ in terms of the structure of $T^*$.

\begin{lemma}\label{folding}
For any two different nodes $v$ and $u$, we have: $v$ is an ancestor of $u$ in $T$ if and only if at least one of the following two conditions hold
\begin{itemize}
\item
{\bf C1.} Node $v$ is an ancestor of $u$ in $T^*$;
\item
{\bf C2.} Node $\apex(v)$ is a strict ancestor of $u$ in $T^*$ and $\dfs(v)<\dfs(u)$.
\end{itemize}
\end{lemma}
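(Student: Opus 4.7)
The plan is to reason at the level of the folding-decomposition recursion at which $v$ is classified. Fix that level: $v$ lies on the spine $S=(v_1,\dots,v_s)$ of some subtree $T'$ of $T$, with associated forests $F_1,\dots,F_s$, and $v=v_i$ for some $i\ge 1$; note that $\apex(v)=v_1$ in every case. I will freely use the structural observation made in the paragraphs preceding the lemma: the folding transformation only deletes parent edges $(v_{\ell-1},v_\ell)$ for $\ell\ge 2$ and replaces them by $(v_1,v_\ell)$, without creating any new ancestries, so every ancestry in $T^*$ is also an ancestry in $T$. This immediately yields the ``if'' direction under C1.

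For the ``only if'' direction, suppose $v$ is an ancestor of $u$ in $T$. If $v$ is apex ($i=1$), then $T'^*$ is exactly the subtree of $v$ in $T^*$, and since $T'$ and $T'^*$ have the same node set, every descendant of $v$ in $T$ is a descendant of $v$ in $T^*$; hence C1 holds. If $v$ is heavy ($i\ge 2$), I split on the location of $u$: when $u$ lies in a tree of $F_i$, it remains a descendant of $v_i=v$ in $T^*$, giving C1; otherwise $u=v_j$ or $u\in F_j$ for some $j>i$, and then in $T^*$ such a $u$ is a descendant of $v_j$, which is itself a child of $\apex(v)=v_1$, so $\apex(v)$ is a strict ancestor of $u$ in $T^*$. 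Since $v=v_i$ is an ancestor of $u$ in $T$, we also have $\dfs(v)<\dfs(u)$, completing C2.

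For the ``if'' direction under C2, let $a=\apex(v)$ and work at the level where $a$ is classified as apex, so that $a=v_1$ and $v=v_i$ lie on a common spine $(v_1,\dots,v_s)$. Since $a$ is a strict ancestor of $u$ in $T^*$, the node $u$ lies in the subtree of $a$ in $T^*$, which (using that $T'^*$ is exactly the subtree of $a$ in $T^*$) has the same node set as $T'$; hence in $T$ the node $u$ belongs to $F_j$ for some $j\ge 1$ or equals $v_j$ for some $j\ge 2$. I then analyse the DFS on $T$ that visits apex children first: the children of $v_\ell$ in $T$ are exactly the roots of the trees of $F_\ell$ (each of which becomes an apex at the next recursion level, being the root of its own subtree) together with the single heavy child $v_{\ell+1}$ when $\ell<s$; the apex-children-first rule then forces the entire block $F_\ell$ to be explored before $v_{\ell+1}$, and iterating yields the global DFS order $v_1,F_1,v_2,F_2,\dots,v_s,F_s$ within $T'$. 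In particular $\dfs(v_i)<\dfs(v_j)$ iff $i<j$, and every node in $F_j\cup\{v_j\}$ has DFS number at least $\dfs(v_j)$. The assumption $\dfs(v)<\dfs(u)$ therefore forces $j\ge i$, whence $u$ lies in the subtree of $v_i=v$ in $T$, as required.

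The main obstacle is making the DFS-order claim $v_1,F_1,v_2,F_2,\dots,v_s,F_s$ airtight: this rests on confirming that at each recursion level the roots of the trees of $F_\ell$ really are apex nodes, so that the apex-children-first tie-breaking rule cleanly separates them from the heavy sibling $v_{\ell+1}$. Once this is settled, the two-case forward analysis and the DFS-based backward analysis together cover all configurations of $v$ and $u$.
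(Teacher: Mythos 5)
Your proposal is correct. The ``only if'' direction and the C1 converse are argued exactly as in the paper (case split on whether $v$ is apex or heavy, and on whether $u$ lies in $F_i$ or beyond $v_i$ along the spine, using that folding creates no new ancestries). The genuine difference is in the C2 converse. The paper proves it by contradiction: assuming $v$ is not an ancestor of $u$ in $T$, it takes the least common ancestor $w$ of $u$ and $v$ in $T$, uses the apex-children-first DFS rule together with the fact that $w$ has at most one heavy child to conclude that $v$ (and hence $\apex(v)$) descends from an apex child of $w$, so $\apex(v)$ cannot be an ancestor of $u$, contradicting C2. You instead argue directly at the level of the spine of $\apex(v)$: you locate $u$ inside the subtree of $\apex(v)$ (using that folding preserves the descendant set of an apex) and establish the DFS block order $v_1,F_1,v_2,F_2,\dots,v_s,F_s$, from which $\dfs(v)<\dfs(u)$ forces $u$ into the subtree of $v=v_i$. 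Your ``main obstacle'' is not actually an obstacle: the roots of the trees of each $F_\ell$ are apex nodes by construction of the spine decomposition (they are roots of the trees handled at the next recursion level, as the paper states in Section~\ref{sec:folding}), and each $v_\ell$ has at most one heavy child, so the block order is sound. The trade-off: the paper's LCA argument is shorter and purely local, while your block-order fact is stronger and is essentially the same DFS ordering the paper establishes anyway later on, in the proof of Lemma~\ref{lem:characterization}, so your route front-loads machinery that is reused there.
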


\begin{proof}
Consider first the case that $v$ is an ancestor of $u$ in $T$. If $v$ is an apex node in $T^*$, then, by construction, $v$ is also an ancestor of $u$ in $T^*$, and thus Condition C1 holds. If $v$ is heavy, namely, $v=v_i$,  for some $2\leq i\leq s$ in some spine $S=(v_1,v_2,\cdots, v_s)$, then there are two cases. If $u$ belongs to $F_i$ then $v$ is also an ancestor of $u$ in $T^*$, and thus Condition C1 holds.
If $u\in F_j$ or $u=v_j$, for $j>i$, then we show that Condition C2 holds. First,  since $v$ is an ancestor of $u$ in $T$, we immediately get that $\dfs(v)<\dfs(u)$. 
Second, $\apex(v)$ is the apex $v_1$ which is, by definition, an ancestor of all nodes in the subtree of $T$ rooted at $v_1$. Therefore, since $\apex(v)$ is an apex, $\apex(v)$ is  
 an ancestor of $u$ in $T^*$. In fact,  $\apex(v)$ is  
 a strict ancestor of $u$ in $T^*$ since $u=v_j$ for $j>i >1$, and thus $u \neq v_1=\apex(v)$.

Conversely, consider the case that either Condition C1 or Condition C2 holds.
If Condition C1 holds, i.e., $v$ is an ancestor of $u$ in $T^*$, then the fact that $v$ is an ancestor of $u$ in $T$ is immediate since no  new ancestry relations are created in $T^*$. (Indeed, for any two nodes $w$ and $w'$, if $w$ is a parent of $w'$ in $T^*$ then $w$ is an ancestor or $w'$ in $T$).
  Consider now the case that Condition C2 holds. Assume, by way of contradiction, that $v$ is not an ancestor of $u$ in $T$.  Since  $\dfs(v)<\dfs(u)$ it follows that  $u$ is not an ancestor of $v$, and hence $v$ and $u$ have a least common ancestor $w$ in~$T$ which is neither $v$ or $u$. Since the DFS traversal visits apex nodes first, and since $v$ is visited before~$u$, we get that $v$ is a descendant of one of $w$'s apex children $z$ (because $w$ has at most one heavy child). Hence, $\apex(v)$ is either the apex $z$ or a strict descendant of it in $T$. In either case, $\apex(v)$ is  not an ancestor of $u$ in $T$. This contradicts Condition C2 since no  new ancestry relations are created in $T^*$.
\end{proof}

\subsection{The labeling scheme}

\subsubsection{Informal description}
An important property of the folding decomposition is that its spine decomposition depth is at most~2 (cf. Lemma~\ref{lem:depth2}). 
Essentially, our ancestry scheme is based on Lemma \ref{folding}. We first apply the scheme described in Section \ref{sec:bounded} on $T^*$, resulting in each node $u$ having a label $I(u)$ which is an interval encoded using $\log n+O(1)$ bits. Given the intervals $I(u)$ and $I(v)$ of two nodes $u$ and $v$, we can then detect whether or not Condition C1 holds. Indeed, given two nodes $u$ and $v$, if $I(u)\subseteq I(v)$ holds, then $v$ is an ancestor of $u$ in~$T^*$, and hence also in $T$. This takes care of all  ancestry relations preserved under the folding decomposition, and in particular the case where $v$ is an apex. To handle unpreserved ancestry relations, it is sufficient, by Lemma \ref{folding}, to check whether or not Condition C2 holds. For this purpose, we would like our decoder to reconstruct not only the interval $I(v)$, but also the interval $I(\apex(v))$. Indeed, having $I(\apex(v))$ and $I(u)$ already enables to detect whether or not  the first part of Condition C2 holds, namely, whether   $\apex(v)$ is a strict ancestor of $u$ in~$T^*$. 
To detect whether or not $\dfs(v)<\dfs(u)$ we shall actually use a specific implementation of our ancestry labelling scheme from Section~\ref{sec:bounded}, which will relate the interval relation $\prec$ (defined in Section \ref{sec:preliminaries}) to the DFS numbering in $T$. This implementation will guarantee that, for any heavy node $v$, $\dfs(v)<\dfs(u)$ if and only if $I(v)\prec I(u)$.

It is now left to explain how to obtain, for every node $u$, the interval $I(\apex(u))$, given the label of $u$ resulting from the scheme described in Section \ref{sec:bounded}, and using few additional bits (apart from the ones used to encode $I(u)$).
First, we use one additional  bit of information in the label of each node $u$, for indicating whether or not $u$ is an apex.
In the case $u$ is an apex, we already have $u=\apex(u)$, and hence, no additional information is required to reconstruct $I(\apex(u))$. For the case that $u$ is heavy, we  use additional $O(\log\log n)$ bits of information at the label of $u$. Specifically, in addition to its own interval $I(u)$, every node $u$ stores the level of the interval $I(\apex(u))$, consuming $\lceil \log k \rceil$ bits. Now, notice that $I(u)\subset I(\apex(u))$. Moreover, we will later prove that, in the scheme described in Section \ref{sec:bounded}, for every level $k$, the number of intervals $J$ used by the scheme on level $k$, such that $I(u)\subset J$, is at most $B_k^2$ (where $B_k$ is defined in Subsection \ref{section:marker}).
Thus, once we know level $k$ of $I(\apex(u))$, and the interval $I(u)$, additional 
$2\log B_k$ bits are sufficient to  reconstruct $I(\apex(u))$.   Note that $2\log B_k=2\log k +O(\log\log k)$  because $B_k=O(k\log^2 k)$. Since $k\leq \log n$, the total information per node (stored in its label) amounts to  $\log n + 3\log \log n +O(\log\log \log n)$ bits.

We are now ready to describe our ancestry labeling scheme formally. 

\subsubsection{The marker algorithm $\cM$}

Given a forest $F$, we describe the labels assigned to the nodes of each tree $T$ in $F$, so that to enable detecting ancestry relations. First, we apply the spine decomposition to $T$. Next, we consider a DFS traversal over $T$ that starts from the root, numbered~1, and visits apex children first. For any node $u$, let $\dfs(u)$ be the DFS number of $u$. (These DFS numbers will not be directly encoded in the labels, since doing that would require the consumption of too many bits; however, these numbers will be used by the marker to assign the labels). From the spine decomposition of $T$, we construct the folding decomposition $T^*$ of $T$, as described in Subsection~\ref{sec:folding}. Recall from Lemma~\ref{lem:depth2} that the spine decomposition depth of $T^*$ is at most~2. 

Next, we apply to $T^*$ the ancestry scheme defined in Section~\ref{sec:bounded}. More specifically, we perform a particular implementation of this ancestry scheme, as described now. Consider a level $k$ of the recursive procedure applied by the marker for assigning labels to the nodes in $T^*$. At this level, the marker is dealing with some subtree $T^*{'}$ of $T^*$ with size at most $2^k$. We revisit the assignment of the intervals and bins considered by the marker at this level, for $T^*{'}$. Note that the nodes in $T^*{'}$ are the nodes of a subtree $T'$ of $T$. In fact $T^*{'}$ is the folding decomposition of $T'$. Thus, in order to avoid cumbersome notations, in the remaining of this description, we replace $T'$ and $T'^*$ by $T$ and $T^*$, respectively. The reader should just keep in mind that we are dealing with the general case of a given tree of a given level $k$ in the recursion, and not necessarily  with the whole tree in the first level of this recursion. So, given $T$, and $T^*$, both of size at most $2^k$, we show how to implement our ancestry scheme in a special manner on $T^*$. 

Recall the DFS traversal over the whole tree, which assigns a DFS number $\dfs(u)$ to every node~$u$ in the tree. This DFS traversal  induces a DFS traversal on (the subtree) $T$ that starts from the root of $T$ and visits apex children first. 

Let $S=(v_1,\dots,v_s)$ be the spine of $T$, and $F_i$ be the forest consisting of all subtrees rooted at the apex children of $v_i$ in $T$, for $i=1,\dots,s$. Hence, $T^*$ consists of a tree rooted at $v_1$, where $v_1$ has $s-1$ children $v_2,\dots,v_s$, plus $v_1$'s apex children $u_1,\dots,u_t$ in $T$ (see  Figure~\ref{fig:folding}). We order these nodes~$u_i$ so that $\dfs(u_i) < \dfs(u_{i+1})$. Therefore, we get the following important ordering implied by the DFS traversal: 
\begin{equation}\label{hahaha}
\dfs(u_1)<\dfs(u_2)<\dots < \dfs(u_t) < \dfs(v_2) < \dfs(v_3) < \dots < \dfs(v_{s-1}) < \dfs(v_s).
\end{equation}
As mentioned in the proof of Lemma~\ref{lem:depth2}, in $T^*$, $v_s$ is the heavy child of $v_1$, and all the other children of $v_1$ are apexes. Therefore, the spine of $T^*$ is $(v_1^*,v_2^*)=(v_1,v_s)$. (Recall, it maybe the case that $v_1=v_s$, in which case the spine would consist of a single node, namely $v_1$). Moreover, the forest~$F^*_1$ consists of all trees in $T^*$ hanging down from $u_1,\dots,u_t$, as well as all trees  in $T^*$ hanging down  from $v_2,\dots,v_{s-1}$. Similarly, the forest $F^*_2$ (if it exists) consists of all trees in $T^*$ hanging down from the children of $v_s$ in $T$. 

In this level $k$ of the recursion of our ancestry scheme, we are given a bin $J$ of size $\left\lfloor c_k|T^*|\right\rfloor$, and we need to allocate intervals to the nodes in $T^*$ contained in $J$. For this purpose, we assign intervals to $v_1^*$ and $v_2^*$, and bins to the forests $F^*_1$ and $F^*_2$. These bins are $J_1$ and $J_2$, and, by Equation~\ref{eq:prec}, we have $J_1\prec J_2$.  Moreover, if $v_1\neq v_s$, then, by Facts~\ref{fact2} and~\ref{fact3}, we get that:
\begin{equation}\label{eq:xxx}
\mbox{for every $v\in F_1^*$ and every $u\in F_2^*\cup\{v_2^*\}$, we have $I(v)\prec I(u)$.}
\end{equation}

Furthermore, observe that all trees in $F_1^*$ are of size at most $2^{k-1}$. Therefore, the assignment of intervals to the nodes of $F_1^*$ are performed according to the simpler case described in the proof of Lemma~\ref{lem:main-bounded}. We order the trees $T_i$, $i=1,\dots,t+s-2$, in $F_1^*$ according to the DFS numbers of their roots. In particular, for $i=1,\dots,t$, the root of $T_i$ is $u_i$, and, for $i=1,\dots,s-2$, the root of~$T_{t+i}$ is $v_{i+1}$. By Fact~\ref{fact1}, we get that the interval assignment satisfies the following. 
\begin{equation}\label{eq:yyy}
\mbox{For every $v\in T_i$ and $u\in T_j$ where $j>i$, we have $I(v) \prec I(u)$.}
\end{equation}
We are now ready to state the crucial relationship between the ancestry relations in the whole tree and the assignment of intervals to nodes in its folding decomposition. Essentially, this lemma replaces the dependencies on the DFS numbering mentioned in the statement of Lemma~\ref{folding} by a dependency on the relative positions of the intervals according to the partial order $\prec$. This is crucial because the DFS numbers are not part of the labels. 

\begin{lemma}\label{lem:characterization}
For any two different nodes $v$ and $u$, we have: $v$ is an ancestor of $u$ in $T$ if and only if at least one of the following two conditions hold
\begin{itemize}
\item  Node $v$ is an ancestor of $u$ in $T^*$;
\item Node $\apex(v)$ is a strict ancestor of $u$ in $T^*$ and $I(v) \prec I(u)$.
\end{itemize}
\end{lemma}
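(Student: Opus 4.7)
My plan is to deduce Lemma~\ref{lem:characterization} from Lemma~\ref{folding} by replacing the DFS comparison $\dfs(v)<\dfs(u)$ in condition C2 by the interval comparison $I(v)\prec I(u)$. The central technical claim I will establish is the following \emph{monotonicity property} of the specific implementation described above: for any two distinct nodes $w,w'$,
\[
I(w)\prec I(w') \;\;\Longrightarrow\;\; \dfs(w)<\dfs(w').
\]

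To prove this monotonicity, I would proceed by induction on the recursion level of the ancestry scheme. At a given level processing some tree with folding image $T^*$ and spine $(v_1,\dots,v_s)$, the implementation enumerates the trees $T_1,\dots,T_{t+s-2}$ of $F_1^*$ in DFS order of their roots, as expressed by Equation~\ref{hahaha}. Because the DFS on $T$ visits the apex-children subtrees $T_1,\dots,T_t$ (which coincide as node sets with the $T$-subtrees of $u_1,\dots,u_t$) consecutively in order, and subsequently visits each heavy spine node $v_i$ immediately followed by the forest $F_i$ (whose node set coincides with the $T^*$-subtree $T_{t+i-1}$ of $v_i$ for $i<s$, and with $F_2^*$ for $i=s$), the DFS numbers of $T$ traverse the blocks $T_1,\dots,T_{t+s-2},\{v_s\},F_2^*$ in exactly this left-to-right order. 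Equations~\ref{eq:yyy} and~\ref{eq:xxx} (derived from Facts~\ref{fact1}--\ref{fact3}) then guarantee that the intervals of these blocks are placed in the same left-to-right order inside $J$, so for any two nodes lying in distinct blocks we have $I(w)\prec I(w')$ and $\dfs(w)<\dfs(w')$ simultaneously. Intra-block comparisons within each $T_j$ or within $F_2^*$ reduce to the induction hypothesis applied to the corresponding subtree. The only remaining kind of pair is $(v_s,w')$ with $w'\in F_2^*$, for which $I(w')\subset I(v_s)$, so the hypothesis $I(\cdot)\prec I(\cdot)$ fails and the implication is vacuous.

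The second ingredient is that the intervals produced by our explicit construction form a \emph{laminar family}: the recursion of Lemma~\ref{lem:main-bounded} builds intervals either nested inside the same bin or inside disjoint sub-bins, so any two distinct intervals are either nested or $\prec$-disjoint. Combined with legal-containment (Definition~\ref{def:legal}), nesting corresponds to ancestor-descendant pairs in $T^*$ and $\prec$-disjointness to incomparable pairs. Given monotonicity and laminarity, the lemma follows from Lemma~\ref{folding} by a short case analysis. Suppose first that $v$ is a strict ancestor of $u$ in $T$; by Lemma~\ref{folding}, either C1 holds and we are done, or $\apex(v)$ is a strict ancestor of $u$ in $T^*$ with $\dfs(v)<\dfs(u)$. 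In the latter case, laminarity leaves four possibilities for $(I(v),I(u))$. The case $I(v)\subsetneq I(u)$ would make $u$ a strict ancestor of $v$ in $T^*$, hence in $T$, contradicting $v$ being a strict ancestor of $u$. The case $I(u)\prec I(v)$ would give $\dfs(u)<\dfs(v)$ by monotonicity, contradicting $\dfs(v)<\dfs(u)$. Hence either $I(u)\subseteq I(v)$ (so $v$ is an ancestor of $u$ in $T^*$ and C1 holds), or $I(v)\prec I(u)$ (so the new C2 holds). Conversely, C1 immediately yields $v$ ancestor of $u$ in $T$ via Lemma~\ref{folding}, and the new C2 combined with monotonicity yields the old C2 of Lemma~\ref{folding}, which again gives $v$ ancestor of $u$ in $T$.

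The main obstacle will be carrying out the monotonicity induction cleanly, since the DFS order is defined on $T$ whereas the intervals arise from the scheme applied to $T^*$. One has to verify, at the interface between the blocks $T_1,\dots,T_t$ (apex-children subtrees, whose $T$- and $T^*$-subtrees coincide as node sets) and the blocks $T_{t+1},\dots,T_{t+s-2}$ (rooted at heavy spine nodes, whose $T^*$-subtrees are \emph{strict} subsets of the corresponding $T$-subtrees), that the DFS on $T$ really does visit these node sets in the announced left-to-right order. This relies crucially on the convention of visiting apex children first: after finishing the apex-children subtrees of the apex $v_1$, the traversal visits $v_2$, then the $T$-subtrees of $v_2$'s apex children (which form $F_2=T_{t+1}$), then $v_3$, and so on, until $v_s$ and finally $F_s=F_2^*$.
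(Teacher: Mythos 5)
Your proposal is correct, but it is organized around two auxiliary structural claims that the paper never states, so the route is recognizably different even though the raw ingredients (Lemma~\ref{folding}, the apex-first DFS order of Equation~\ref{hahaha}, and the placement facts of Equations~\ref{eq:xxx} and~\ref{eq:yyy}) are the same. The paper localizes the whole argument at the subtree rooted at $\apex(v)$: since condition C1 of Lemma~\ref{folding} fails, $v$ is forced to be a heavy spine node $v_j$ with $j\geq 2$ and $u$ is forced into one of the other blocks $T_i$ or $F_2^*\cup\{v_s\}$, and then the equivalence $\dfs(v)<\dfs(u)\iff I(v)\prec I(u)$ is read off directly from Equations~\ref{hahaha}, \ref{eq:xxx}, \ref{eq:yyy} for exactly those pairs. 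Because the equivalence is only needed for these pairs, the paper needs neither an induction across recursion levels nor any statement about how intervals of arbitrary incomparable nodes relate. You instead prove a global one-sided monotonicity ($I(w)\prec I(w')\Rightarrow\dfs(w)<\dfs(w')$ for all pairs) plus laminarity of the interval family, and then recover both directions of the lemma by a case analysis on the four laminar configurations; this is sound — laminarity does hold, since the construction of Lemma~\ref{lem:main-bounded} places descendants inside pairwise disjoint sub-bins, makes the spine intervals a nested chain containing exactly the later bins, and Fact~\ref{fact3} gives disjointness from the earlier bins — but both auxiliary claims need their own inductions over the recursion, which is precisely the bookkeeping (DFS on $T$ versus intervals on $T^*$, and the heavy-rooted blocks $\{v_{i+1}\}\cup F_{i+1}$) that you correctly flag as the delicate point and that the paper's localization sidesteps. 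Two small points: in the monotonicity induction the pairs involving the apex $v_1$ of the current subtree are also vacuous (its interval contains all intervals of its descendants), not only the pairs $(v_s,w')$ with $w'\in F_2^*$; and in your forward case $I(u)\subseteq I(v)$ the inclusion is automatically strict because the legal-containment mapping is one-to-one. What your approach buys is a reusable order-consistency statement about the labeling (useful, e.g., if one wanted other predicates decoded from the same intervals); what the paper's buys is brevity, since no global structural lemma has to be verified.
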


\begin{proof}
Given Lemma~\ref{folding}, we only need to prove that if $v$ is not an ancestor of $u$ in $T^*$, but $\apex(v)$ is a strict ancestor of $u$ in $T^*$, then 
\[
\dfs(v) < \dfs(u) \iff I(v) \prec I(u).
\]
Let us consider the subtrees of $T$ and~$T^*$, rooted at $\apex(v)$. By slightly abusing notation, we reuse the same notation as before. Recall the spine $S=(v_1,\dots,v_s)$, whose apex is $v_1=\apex(v)$. The children of $\apex(v)$ in~$T^*$ are the apex nodes $u_1,\dots,u_t,v_2,\dots,v_{s-1}$, plus the heavy node $v_s$. Each apex child  is the root of a tree $T_i$ for some $i\in\{1,\dots,t+s-2\}$. All these trees belong to~$F_1^*$. All the trees in $F_2^*$ are rooted at children of $v_s$. 

Since $v$ is not an ancestor of $u$ in $T^*$, but $\apex(v)$ is an ancestor of $u$ in $T^*$, we get that $v\neq \apex(v)$. It follows that $v$ must be one of the spine nodes $v_2,\dots,v_s$, say $v=v_j$ with $j\geq 2$. Node $u$ is a strict descendent of $\apex(v)$, but is not a descendent of $v$. Therefore, $u$ belongs either to one of the trees $T_i$, $i=1,\dots,t+s-2$, $i\neq j$, in $F_1^*$, or to $F_2^*\cup\{v_s\}$ in case $v\neq v_s$. 

Assume first that $\dfs(v)<\dfs(u)$. In that case, $v$ cannot be $v_s$. Indeed, if $v=v_s$ then $u$ cannot be in $F^*_2\cup \{v_s\}$ because $v$ is not an ancestor of $u$. Hence, $u$ belongs to one of the trees~$T_i$. This contradicts the fact that $\dfs(v)<\dfs(u)$, by Equation~\ref{hahaha}. So, $v=v_j$, $2\leq j < s$. Since $\dfs(v)<\dfs(u)$, we know that  $u$ belongs either to one of the trees $T_i$, $i=t+j+1,\dots,t+s-2$, or to $F_2^*\cup\{v_s\}$. In the former case, we get $I(v)\prec I(u)$ by applying Equation~\ref{eq:yyy}; In the latter case, we get $I(v)\prec I(u)$ by applying Equation~\ref{eq:xxx}.

Conversely, assume that $I(v)\prec I(u)$. In that case, again $v$ cannot be $v_s$. Indeed, if $v=v_s$ then~$u$ cannot be in $F_2^*\cup\{v_s\}$ because $v$ is not a ancestor of $u$. Hence, $u$ belongs to one of the trees~$T_i$. This contradicts the fact that $I(v)\prec I(u)$ by Equation~\ref{eq:yyy}. So $v=v_j$, $2\leq j < s$. In that case, $u$ belongs  to one of the trees $T_i$, $i=t+j+1,\dots,t+s-2$, of $F_1^*$ or to $F^*_2\cup \{v_s\}$, which implies that $\dfs(v)<\dfs(u)$ by Equation~\ref{hahaha}. 
\end{proof}

\begin{figure}
\begin{center}
\includegraphics[width=0.9\linewidth]{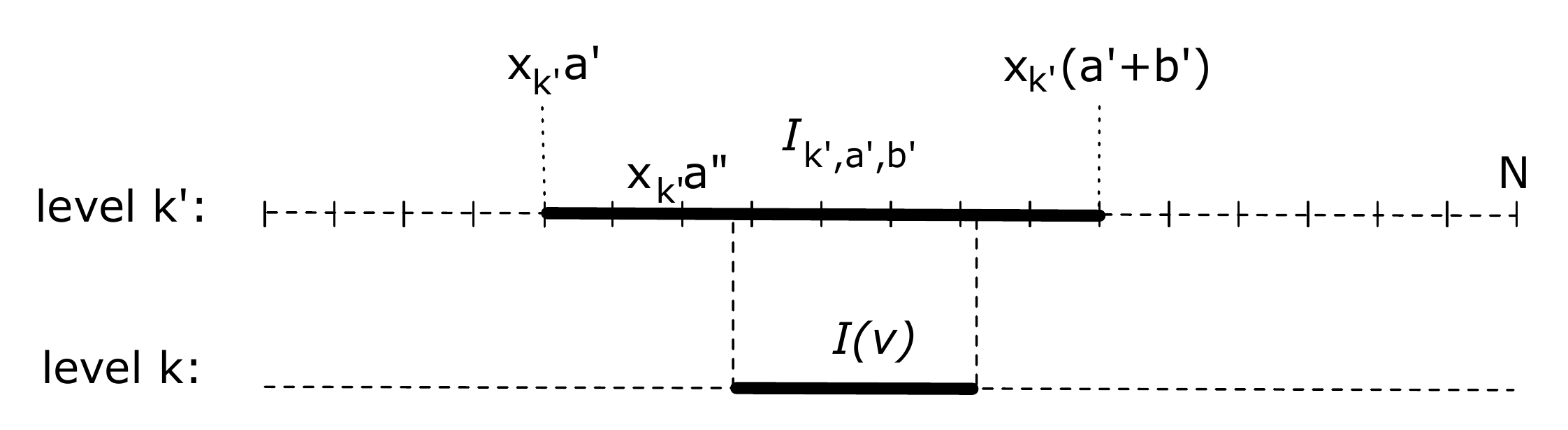}
\caption{Encoding $I(\apex(v))$ using $I(v)$ and few additional bits}
\label{fig:inclusion}
\end{center}
\end{figure}

Lemma~\ref{lem:characterization} provides a characterization for ancestry in $T$ with respect to properties of the intervals assigned to the nodes of $T^*$. More  specifically, the fact that $v$ is an ancestor of $u$ can be detected using the intervals $I(v)$ and $I(u)$, as well as the interval $I(\apex(v))$ of $\apex(v)$. The interval~$I(v)$ is encoded directly in the label of $v$ using $\log n+O(1)$ bits. Directly encoding $I(\apex(v))$ would consume yet another $\log n+O(1)$ bits, which is obviously undesired. We now show how to encode~$I(\apex(v))$ in the label of $v$ using few  bits of information in addition to the ones used to encode~$I(v)$. We use the following trick (see Figure~\ref{fig:inclusion}).
Let $k',a'$, and $b'$ be such that
$$I(\apex(v))=I_{k',a',b'}=[x_{k'}a',x_{k'}(a'+b')].$$ Since $1\leq k \leq \log n$ and $1\leq b'\leq B_{k'}$, we get that $2\log\log n +O(\log\log\log n)$ bits suffice to encode both  $k'$ and $b'$.
To encode $a'$, the marker algorithm acts as follows. Let $I(v)=[\alpha,\beta]$, and let $a''$ be the largest integer such that $x_{k'} a''\leq \alpha$.
 We have, $a''- B_{k'}\leq a''-b'$ because $b'\leq B_{k'}$. Since $I(v)\subseteq I(\apex(v))$, we also have $x_{k'}(a'+b')\geq \beta \geq \alpha \geq x_{k'} a''$. Thus $a''-b'\leq a'$. Finally, again since $I(v)\subseteq I(\apex(v))$, we have $x_{k'}a'\leq \alpha$, and thus $a''\geq a'$. Combining the above inequalities, we get that $a'\in[a''-B_{k'},a'']$. The marker algorithm now encodes the integer $t\in [0,B_{k'}]$ such that $a'=a''-t$. This is done in consuming another $\log B_{k'}=\log\log n+O(\log\log\log n)$ bits.
Hence, we obtain:

\begin{lemma}\label{lemma:labelsize}
The marker algorithm $\cM$ assigns labels to the nodes of $n$-node trees in linear time,
and each label is encoded using $\log n+3\log\log n+O(\log\log\log n)$ bits.
\end{lemma}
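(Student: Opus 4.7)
The plan is to bound the label size by separately accounting for the interval $I(v)$ coming from Section~\ref{sec:bounded} applied to $T^*$, and for the extra bits needed to recover $I(\apex(v))$ from the label of $v$. I would first invoke Lemma~\ref{lem:depth2}, which states that the spine decomposition depth of $T^*$ is at most $2$. Feeding this into Lemma~\ref{lem:label-size-bounded} (with $d=2$), the Section~\ref{sec:bounded} marker encodes $I(v)$ in $\log n + 2\log 2 + O(1) = \log n + O(1)$ bits, and does so in $O(|T^*|)=O(|T|)$ total time. One extra bit per node is reserved to flag whether $v$ is an apex; when $v$ is an apex, $\apex(v)=v$ and nothing more is needed.

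For the harder case where $v$ is a heavy node, I would write $I(\apex(v)) = I_{k',a',b'}$ and encode the three integers $k', b', a'$ separately. Since $1 \leq k' \leq \log n$, a self-delimiting encoding of $k'$ (of the kind used in Section~\ref{section:marker}, with a unary prefix indicating the bit-length) fits in $2\lceil\log\log n\rceil + O(1)$ bits; since $1\leq b' \leq B_{k'}$ and $B_{k'} = O(k' \log^2 k')$ (with $d=2$ plugged into the definition), we have $\log B_{k'} = \log\log n + O(\log\log\log n)$, so $b'$ takes another $\log\log n + O(\log\log\log n)$ bits. This gives the stated $2\log\log n + O(\log\log\log n)$ bits jointly for $k'$ and $b'$.

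For the coordinate $a'$, the idea is to piggyback on the fact that $I(v)$ is already stored in the label and that $I(v)\subseteq I(\apex(v))$. Writing $I(v)=[\alpha,\beta]$ and letting $a''$ be the largest integer with $x_{k'}a''\leq\alpha$, the containment $I(v)\subseteq I(\apex(v))$ together with $b'\leq B_{k'}$ forces $a'\in[a''-B_{k'},a'']$. Since the decoder can reconstruct $a''$ from the already-encoded $I(v)$ and the value of $k'$ it has just read, it suffices to store the offset $t:=a''-a'\in[0,B_{k'}]$, consuming another $\log B_{k'} = \log\log n + O(\log\log\log n)$ bits. Adding up the contributions yields $\log n + O(1) + 1 + 3\log\log n + O(\log\log\log n) = \log n + 3\log\log n + O(\log\log\log n)$ bits per label.

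For the running time, I would argue that the spine decomposition of $T$ and the derived folding decomposition $T^*$ are both computed in $O(|T|)$ time by a single bottom-up traversal that records subtree weights and classifies each node as apex or heavy. The marker of Section~\ref{sec:bounded} runs on $T^*$ in $O(|T^*|)=O(|T|)$ time by Lemma~\ref{lem:main-bounded}. Finally, a second traversal of $T$ computes, for each heavy node $v$, the interval $I(\apex(v))$ of its closest apex ancestor (carried along the traversal) and appends the $O(\log\log n)$-bit encoding above. The expected main subtlety is not any single step but keeping the bookkeeping honest: making sure that the specific implementation of the Section~\ref{sec:bounded} scheme exploited here is the one justified by Facts~\ref{fact1}--\ref{fact3} (so that the $\prec$ condition from Lemma~\ref{lem:characterization} will later be available to the decoder), and verifying that the bound $B_{k'} = O(k'\log^2 k')$ used for the offset really comes from the parameter choice with $d=2$ in $T^*$ rather than from any parameter tied to the original tree $T$.
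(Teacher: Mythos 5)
Your proof follows the paper's argument essentially verbatim: the label is $I(v)$ (from the Section~\ref{sec:bounded} scheme applied to $T^*$, whose spine-decomposition depth is at most $2$ by Lemma~\ref{lem:depth2}) plus the triple $(k',b',t)$, where $a'$ is recovered as an offset $t$ from the largest $a''$ with $x_{k'}a''\leq\alpha$, using $I(v)\subseteq I(\apex(v))$ and $b'\leq B_{k'}$ to force $a'\in[a''-B_{k'},a'']$; this is exactly the paper's trick, and your linear-time argument is fine. The one genuine slip is in the bit accounting for $k'$: the Section~\ref{section:marker}-style self-delimiting code with a \emph{unary} prefix of length equal to the bit-length of $k'$ costs $2\lceil\log\log n\rceil+O(1)$ bits for $k'$ alone, so your own figures sum to $\log n+4\log\log n+O(\log\log\log n)$, and the sentence claiming ``$2\log\log n+O(\log\log\log n)$ bits jointly for $k'$ and $b'$'' contradicts the line just before it. Since the constant $3$ in the lemma propagates to Theorem~\ref{theo:main} and to the $\log^{3k+o(k)}n$ factor in Theorem~\ref{theorem:poset}, this is worth fixing, and the fix is exactly what the paper does: each of $t$, $k'$, $b'$ fits in $\log\log n+O(\log\log\log n)$ bits, and the three fields are delimited not by a unary prefix of length $\Theta(\log\log n)$ but by writing their positions (or bit-lengths, each at most $\lceil\log\log n\rceil$) in binary, which costs only $O(\log\log\log n)$ extra bits per field, giving $3\log\log n+O(\log\log\log n)$ for the auxiliary part and the stated bound overall.
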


\begin{proof}
Let $v$ be a node of an $n$-node tree $T$. The label of $v$ consists of the interval $I(v)$ plus the additional information that will later be used to extract $I(\apex(v))$, namely, according to the notations above, the values of $t$, $k'$, and $b'$. The leftmost part of the label $L(v)$ will be dedicated to encode these latter values $t$, $k'$, and $b'$. Each of these values can be separately encoded using $\log\log n+O(\log\log\log n)$ bits. By adding $O(\log\log\log n)$ bits per value, we can describe where the binary encoding of each of these values starts and ends. Therefore, in total, encoding these three values together with their positioning in the label requires $3\log\log n+O(\log\log\log n)$ bits. Now, since $I(v)$ is encoded using $\log n+O(1)$ bits, each label uses $\log n+3\log\log n+O(\log\log\log n)$ bits. 
\end{proof}

Note that the bound on the label size in Lemma~\ref{lemma:labelsize} includes the encoding of the positioning  of the different fields in the label, each of which being encoded using $O(\log\log\log n)$ bits. 

\subsection{The decoder algorithm $\cD$}\label{sub:decoder}

Now, we describe our decoder algorithm $\cD$. Given the labels $L(v)$ and $L(u)$ assigned by $\cM$ to two different nodes $v$ and $u$ in some tree $T$, the decoder algorithm $\cD$ needs to find whether $v$ is an ancestor of $u$ in $T$. (Observe that since each node receives a distinct label, the decoder algorithm can easily find out
if $u$ and $v$ are in fact the same node, and, in this trivial case, it simply outputs~1.)

The decoder algorithm first extracts the three values $t$, $k'$, and $b'$ by inspecting the first  $3\log\log n+O(\log\log\log n)$ bits of $L(v)$,  and then the interval $I(v)$ by inspecting the remaining  $\log n+O(1)$ bits of $L(v)$.  Using this information, it computes $I(\apex(v))$. Now, given the two intervals $I(v)$ and $I(\apex(v))$, as well as the interval $I(u)$ of a node $u$, the decoder acts according to the characterization stated in Lemma~\ref{lem:characterization}, that is:
$D(L(v),L(u))=1$ if and only if at least one of the following two conditions holds:
\begin{itemize}
\item $I(u)\subseteq I(v)$;
\item $I(u) \subset I(\apex(v))$ and $I(v) \prec I(u)$.
\end{itemize}
The fact that $\cD(L(v),L(u))=1$ if and only if  $v$ is an ancestor of $u$ in $T$ follows from Lemma~\ref{lem:characterization} and the correctness of the interval assignment to the nodes of $T^*$ given by Lemma~\ref{lem:correct}. This establishes the following result. 

\begin{lemma}\label{lem:correct2}
$(\cM,\cD)$ is a correct ancestry-labeling scheme for the family of all forests. 
\end{lemma}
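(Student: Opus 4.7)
My plan is to argue correctness by combining three ingredients already prepared in the section: (i) the explicit inversion of the encoding that lets the decoder recover both $I(v)$ and $I(\apex(v))$ from the bits of $L(v)$; (ii) the characterization of ancestry in $T$ given by Lemma~\ref{lem:characterization}, which the decoder's test mirrors verbatim; and (iii) the fact from Lemma~\ref{lem:correct} that the interval assignment applied to $T^*$ is a legal-containment mapping. The overall structure will be: first handle the single-tree case, then lift to forests.

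First I would verify the decoding of $I(v)$ and $I(\apex(v))$. From the leftmost $3\log\log n + O(\log\log\log n)$ bits of $L(v)$, the decoder extracts the triple $(t,k',b')$; from the remaining $\log n + O(1)$ bits it extracts $I(v) = [\alpha,\beta]$ exactly as in Observation~\ref{obs:query-bounded}. Then it sets $a'' = \lfloor \alpha / x_{k'} \rfloor$ and declares $I(\apex(v)) = I_{k',a''-t,b'}$. The key point, already verified in the paragraph preceding Lemma~\ref{lemma:labelsize}, is that $a' \in [a''-B_{k'}, a'']$, so the marker's choice $t = a''-a' \in [0,B_{k'}]$ faithfully encodes $a'$ in its allotted $\log B_{k'} = \log\log n + O(\log\log\log n)$ bits. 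Hence the decoder reconstructs $I(\apex(v))$ exactly.

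Next I would argue correctness for two distinct nodes $u,v$ lying in the same tree $T$ of the forest. The decoder outputs $1$ iff either $I(u) \subseteq I(v)$, or both $I(u) \subset I(\apex(v))$ and $I(v) \prec I(u)$. Lemma~\ref{lem:correct} applied to $T^*$ (whose spine decomposition depth is at most $2$ by Lemma~\ref{lem:depth2}) guarantees that the interval map $I$ is a legal-containment mapping, so $I(u) \subseteq I(v)$ is equivalent to ``$v$ is an ancestor of $u$ in $T^*$'' and $I(u) \subset I(\apex(v))$ is equivalent to ``$\apex(v)$ is a strict ancestor of $u$ in $T^*$''. Substituting into the decoder's test and invoking Lemma~\ref{lem:characterization} yields exactly the condition ``$v$ is an ancestor of $u$ in $T$''.

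Finally I would extend the argument to forests. The marker processes each tree $T$ of the forest $F$ independently, drawing intervals from disjoint sub-ranges of $[1,N)$ assigned to the trees (in the same spirit as the simple case of Lemma~\ref{lem:main-bounded}); thus for nodes $u,v$ lying in different trees, $I(u)$ and $I(v)$ are incomparable under $\subseteq$ and $I(\apex(v))$ likewise does not contain $I(u)$, so both clauses of the decoder's test fail and it correctly outputs $0$. The only obstacle worth flagging is bookkeeping: one must check that the additional $3\log\log n + O(\log\log\log n)$ header bits can be parsed unambiguously before the $I(v)$ field, which is immediate from the self-delimiting encoding described in Lemma~\ref{lemma:labelsize}. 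With this, $(\cM,\cD)$ is a correct ancestry-labeling scheme for all forests, completing the proof.
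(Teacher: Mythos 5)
Your proposal is correct and follows essentially the same route as the paper: recover $I(v)$ and $I(\apex(v))$ from the label, translate the decoder's interval tests into ancestry in $T^*$ via the legal-containment mapping (Lemma~\ref{lem:correct}), and conclude via the characterization of Lemma~\ref{lem:characterization}. Your explicit treatment of the cross-tree case and of the inversion of the $(t,k',b')$ encoding only spells out details the paper leaves implicit.
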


By combining Lemmas \ref{lemma:labelsize} and \ref{lem:correct2}, we obtain  
 the main result of the paper.

\begin{theorem}\label{theo:main} 
There exists an ancestry-labeling scheme for the family of all forests, using label size $\log n+3\log\log n+O(\log\log\log n)$  bits for $n$-node forests. Moreover, given an $n$-node forest $F$, the labels can be assigned to the nodes of $F$ in $O(n)$ time,
and any ancestry query can be answered in constant time.
\end{theorem}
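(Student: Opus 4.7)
My plan is to assemble the theorem directly from the machinery already developed in Sections 3 and 4. Given an $n$-node forest $F$, the marker will proceed tree by tree: for each tree $T$ in $F$, it first computes the spine decomposition of $T$ and performs the DFS traversal that visits apex children first (to get $\dfs$ values used only internally); then it builds the folding decomposition $T^*$; then it applies the bounded-spine-depth scheme from Section~\ref{sec:bounded} to $T^*$. By Lemma~\ref{lem:depth2} the spine depth of $T^*$ is at most $2$, so Theorem~\ref{thm:anc-bounded} (or the refined implementation used in Section~\ref{sec:main}) gives intervals $I(v)$ encoded in $\log n + O(1)$ bits. The label $L(v)$ is formed by prepending to $I(v)$ a short header describing $I(\apex(v))$, which is what Lemma~\ref{lemma:labelsize} bounds at $3\log\log n + O(\log\log\log n)$ bits thanks to the trick of writing $I(\apex(v))$ as an offset $t\in[0,B_{k'}]$ from the unique multiple of $x_{k'}$ just below $I(v)$, together with the level $k'$ and width $b'$.

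Correctness of the resulting decoder is immediate from Lemma~\ref{lem:characterization}: having extracted $I(v)$, $I(\apex(v))$ and $I(u)$ in constant time via RAM operations, the decoder outputs $1$ iff either $I(u)\subseteq I(v)$ (Condition~C1) or $I(u)\subset I(\apex(v))$ together with $I(v)\prec I(u)$ (Condition~C2). Lemma~\ref{lem:correct2} certifies that this is equivalent to $v$ being an ancestor of $u$ in $T$. Since we only ever compare labels coming from the same tree in $F$ (nodes in distinct trees have disjoint interval ranges, so neither condition will fire), the scheme works at the forest level without modification.

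For the complexity bounds, the spine decomposition, the DFS, and the construction of $T^*$ each take $O(|T|)$ time; the legal-containment mapping is linear by Lemma~\ref{lem:main-bounded}; and the additional header fields $(t,k',b')$ are obtained for each node in constant time once its interval and that of its apex are known. Summed over all trees, this gives linear total construction time. Query time is $O(1)$ because decoding a label needs only a constant number of shifts, comparisons and arithmetic operations, and checking Conditions~C1--C2 amounts to two interval containments and one $\prec$ comparison. The only real obstacle lurking inside the construction is making sure that the interval assignment on $T^*$ is carried out in the specific order that produces the ordering property used in Lemma~\ref{lem:characterization} (namely that $\dfs(v)<\dfs(u)\Leftrightarrow I(v)\prec I(u)$ for the relevant heavy nodes), which is why the marker is forced to enumerate apex children of each spine node in DFS order when invoking Fact~\ref{fact1} and Fact~\ref{fact2}; once that bookkeeping is respected, Lemmas~\ref{lemma:labelsize} and~\ref{lem:correct2} combine to yield the claimed $\log n + 3\log\log n + O(\log\log\log n)$ label size with linear construction and constant-time queries.
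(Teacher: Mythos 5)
Your proposal is correct and follows essentially the same route as the paper: fold each tree to get spine-decomposition depth at most~2 (Lemma~\ref{lem:depth2}), run the Section~\ref{sec:bounded} scheme on the folded forest with the DFS-consistent ordering so that $\prec$ stands in for DFS comparisons (Facts~\ref{fact1}--\ref{fact3}, Lemma~\ref{lem:characterization}), and encode $I(\apex(v))$ via the $(t,k',b')$ offset trick, which is exactly how the paper combines Lemmas~\ref{lemma:labelsize} and~\ref{lem:correct2}. Just make sure the bounded scheme is applied to the folded forest $F^*$ as a whole (disjoint bins per tree), not to each $T^*$ independently, which your disjointness remark already implicitly assumes.
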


\paragraph{Remark:}
The ancestry-labeling scheme described in this section uses labels of optimal size $\log n+O(\log\log n)$ bits, to the price of a decoding mechanism based of an interval condition slightly more complex than the simple interval containment condition. Although this has no impact on the decoding time (our decoder still works in constant time), the question of whether there exists an ancestry-labeling scheme with labels of size $\log n+O(\log\log n)$ bits, but using solely the  interval containment condition, is intriguing. In Section \ref{sec:bounded}, we have shown that, at least in the case of trees with bounded spine-decomposition depth, extremely small labels can indeed  be coupled with an interval containment decoder.

\section{Small universal posets}\label{sec:poset}

%
%

Recall that an ancestry-labeling scheme  for $\cF(n)$ is \emph{consistent} if  the decoder satisfies the anti-symmetry and transitivity conditions (see Section~\ref{subsec:consistent}).
Our next lemma relates compact consistent ancestry-labeling schemes with small universal posets.

\begin{lemma}\label{labeling-poset}
There exists a consistent ancestry-labeling scheme for $\cF(n)$ with  label size $\ell$ if and only if for every integer $k$,
there exists a universal poset of size $2^{k\ell}$ for $\cP(n,k)$.
\end{lemma}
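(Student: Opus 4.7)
The plan is to prove both directions of the equivalence by exploiting the natural bijection between $\cF(n)$ and the family $\cP(n,1)$ of tree posets: a forest $F$ corresponds to the tree poset whose order is ``$x \leq y$ iff $x$ is an ancestor of $y$ in $F$'', and conversely the Hasse diagram of a tree poset is a rooted forest.

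For the forward direction (consistent scheme $\Rightarrow$ universal posets), I would take $U = \{0,1\}^{k\ell}$, viewed as $(\{0,1\}^{\ell})^k$, and endow it with the coordinate-wise relation: $u \preceq v$ iff for every coordinate $i\in[1,k]$, either $u_i = v_i$, or $\cD(u_i, v_i) = 1$. The first delicate point is to verify that $\preceq$ is actually a partial order on the entire set $U$, whose elements are arbitrary $k$-tuples of labels (not coming from any particular forest). Reflexivity is immediate; anti-symmetry and transitivity reduce, after a small case analysis on which coordinates coincide, to the corresponding properties of $\cD$ guaranteed by consistency. The only slightly subtle case is when $u_i$, $v_i$, $w_i$ are pairwise distinct in transitivity: here $\cD(u_i,v_i)=\cD(v_i,w_i)=1$ rules out $u_i=w_i$ via anti-symmetry and then forces $\cD(u_i,w_i)=1$. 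Once $(U,\preceq)$ is a poset of size $2^{k\ell}$, I would prove universality as follows: given $(X, \unlhd)\in\cP(n,k)$, write $\unlhd$ as the intersection of $k$ tree extensions $\unlhd_1, \ldots, \unlhd_k$, view each $\unlhd_i$ as a forest $F_i\in\cF(n)$, and label the nodes of each $F_i$ with the scheme to obtain injections $L_i:X\to\{0,1\}^\ell$. The map $f:x\mapsto(L_1(x),\ldots,L_k(x))$ is then injective (since any single $L_i$ already is) and, by correctness of the scheme applied coordinate by coordinate, $x\unlhd y \iff f(x)\preceq f(y)$.

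For the backward direction (universal posets $\Rightarrow$ consistent scheme), I only need the case $k=1$. Given a universal poset $(U_1,\leq_1)$ of size $2^\ell$ for $\cP(n,1)$ and any $F\in\cF(n)$, view $F$ as an element of $\cP(n,1)$ and use universality to obtain an injective order-preserving embedding $\phi:F\to U_1$. Identifying $U_1$ with $\{0,1\}^\ell$ via an arbitrary bijection, the marker assigns $L(u)=\phi(u)$ and the decoder outputs $\cD(\ell_1,\ell_2)=1$ iff $\ell_1 \leq_1 \ell_2$. Correctness follows directly from the order-preserving property of $\phi$, and consistency comes for free because $\leq_1$ is itself a partial order on $U_1$.

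The main obstacle, as I see it, is the forward direction: one must check that the coordinate-wise rule defines a partial order on all of $U$, not merely on the image of some embedding. This is precisely where consistency is indispensable, since without the anti-symmetry and transitivity of $\cD$ on pairwise distinct labels, the induced relation on tuples could fail to be transitive or anti-symmetric, and $U$ would not even be a poset. Everything else is routine bookkeeping via the $\cF(n)\leftrightarrow\cP(n,1)$ correspondence and the definition of tree-dimension.
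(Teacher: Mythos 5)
Your proof follows essentially the same route as the paper's: the backward direction (embed any forest, viewed as a tree poset via its Hasse diagram, into a universal poset for $\cP(n,1)$ and let the decoder test the poset order) is identical, and the forward direction is the paper's product construction, namely the set of $k$-tuples of labels ordered coordinatewise through $\cD$, combined with writing a poset of tree-dimension at most $k$ as an intersection of $k$ tree extensions and labeling their Hasse forests.

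One step does not go through exactly as you state it. Consistency, as defined in the paper, constrains $\cD$ only on labels in the output domain of $\cM$, i.e., on bit strings actually assigned to some node of some forest in $\cF(n)$; on strings that $\cM$ never uses, the decoder's answers are arbitrary. So your claim that the coordinatewise relation is a partial order on the \emph{entire} set $\{0,1\}^{k\ell}$ is not justified: two unused strings $a,b$ with $\cD(a,b)=\cD(b,a)=1$ would already break anti-symmetry. The paper sidesteps this by setting $\cD(a,b)=\bot$ on pairs involving unassigned values; equivalently, you can restrict $U$ to $k$-tuples of assigned labels (universality only needs those, since your maps $L_i$ take values there) and, if you want size exactly $2^{k\ell}$, pad with isolated elements. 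With that one-line fix your argument is complete; your added equality clause (giving reflexivity) is a harmless, arguably cleaner, variant of the paper's relation, and it does not disturb the embedding because, as you note, distinct nodes of the same forest necessarily receive distinct labels.
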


\begin{proof}
Assume first that for every integer $k$ there exists a universal poset $(U,\preceq)$ of size $2^{k\ell}$ for $\cP(n,k)$.
In particular, there exists a universal poset $(U,\preceq)$ of size $2^{\ell}$ for $\cP(n,1)$, i.e., for the family of $n$-element posets with tree-dimension $1$.
We construct the following ancestry-labeling scheme $(\cM,\cD)$ for $\cF(n)$.
The marker algorithm  $\cM$ first considers some bijective mapping $\phi:U\to[1,|U|]$.
Given a rooted forest $F\in\cF(n)$, view $F$ as a poset  whose Hasse diagram is $F$. I.e., $F$ is a poset with tree-dimension 1. Now, let $\rho:F\to U$ be a mapping that preserves the partial order of $F$.
The marker algorithm assigns the label $$L(u)=\phi(\rho(u))$$ to each node $u\in F$. Given the labels $L(u)$ and $L(v)$
of two nodes $u$ and $v$ in some $F\in \cF(n)$, the decoder algorithm acts as follows:  $$\cD(L(u),L(v))=1 \iff \phi^{-1}(L(u))\leq \phi^{-1}(L(v))~. $$
By construction,  $(\cM,\cD)$ is a consistent ancestry-labeling scheme for $\cF(n)$ with label size $\log|U|=\ell$.

For the other direction, assume that there exists a consistent ancestry-labeling scheme $(\cM,\cD)$ for $\cF(n)$ with label size $\ell$. Note that it may be the case that $ \cD(a,b)$ is not defined for some $a,b\in [1,2^{\ell}]$, which may happen if the marker $\cM$ is not using all values in $[1,2^{\ell}]$. In that case, we set $ \cD(a,b)=\bot$, that is $\cD$ has now three possible outputs $\{0,1,\bot\}$.
Let $U=[1,2^{\ell}]^k$. We define a relation $\preceq$ between pairs of elements in $U$.
For two elements $u,v\in U$, where $u=(u_1,u_2,\cdots, u_k)$ and $v=(v_1,v_2,\cdots, v_k)$, we set $$u \preceq v \iff 
\forall i\in [1,k],  \; \cD(u_i,v_i)  = 1~.$$
Recall that every poset in $\cP(n,k)$ is the intersection of $k$ tree-posets, each of which having a Hasse diagram being a forest. Since the ancestry relations in those forests precisely captures the ordering in the forest, it follows that $(U,\preceq)$ is a universal poset  for $\cP(n,k)$.
The lemma follows.
\end{proof}

For the purpose of constructing a small universal poset for $\cP(n,k)$, let us revise the ancestry-labeling scheme $(\cM,\cD)$ for $\cF(n)$ given by Theorem~\ref{theo:main}, in order to make it consistent.
We  define a new decoder algorithm~$\cD'$ as follows. Recall that the marker $\cM$ assigns to each node $u$ a label $L(u)$ that encodes the pair $(I(u),I(\apex(u))$. Given two labels $L(v)$ and $L(u)$, we set the modified decoder as follows: $\cD'(L(v),L(u))=1$ if and only if
at least one of the following three conditions holds:
\begin{itemize}
\item \textbf{[D0]:} $L(u) = L(v)$;
\item \textbf{[D1]:}  $I(u)\subset I(v)$ and $I(\apex(u))\subseteq I(v)$;
\item \textbf{[D2]:} $I(u) \subset I(\apex(v))$, $I(v) \prec I(u)$, and $I(\apex(u))\subseteq I(\apex(v))$.
\end{itemize}

\begin{lemma}\label{lemma:consistent}
$(\cM,\cD')$ is a consistent ancestry-labeling scheme for $\cF(n)$ whose label size is $\log n +3\log\log n +O(\log\log\log n)$ bits.
\end{lemma}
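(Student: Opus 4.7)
The label size bound follows verbatim from Lemma~\ref{lemma:labelsize}, since $(\cM,\cD')$ uses the same marker as the scheme of Theorem~\ref{theo:main}: only the decoder is modified. Thus the work splits into two independent verifications: that $(\cM,\cD')$ is still a correct ancestry-labeling scheme for $\cF(n)$, and that the modified decoder $\cD'$ satisfies anti-symmetry and transitivity on pairwise different labels.

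For correctness, the plan is to couple Lemma~\ref{lem:characterization} with the following consequence of Lemma~\ref{lem:depth2}: if $w$ is a strict ancestor of $u$ in $T^*$, then $w$ is an ancestor of $\apex(u)$ in $T^*$, hence $I(\apex(u))\subseteq I(w)$ by the legal-containment property of the marker. If $u=v$, the case is handled by the new clause D0. If $v$ is a strict ancestor of $u$, Lemma~\ref{lem:characterization} leaves two sub-cases. Either $v$ itself is an ancestor of $u$ in $T^*$, giving $I(u)\subsetneq I(v)$, and taking $w=v$ in the observation above yields $I(\apex(u))\subseteq I(v)$, which is exactly D1; or $\apex(v)$ is a strict ancestor of $u$ in $T^*$ with $I(v)\prec I(u)$, and taking $w=\apex(v)$ yields $I(\apex(u))\subseteq I(\apex(v))$, which is exactly D2. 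Conversely, D0 forces $u=v$, while D1 and D2 respectively entail the two clauses of Lemma~\ref{lem:characterization}, so ancestry follows.

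For consistency, the plan is a short case analysis using only three elementary interval properties: $\subset$ is a strict partial order, $\prec$ is a strict transitive relation, and $\prec$ is incompatible with both equality and containment. Anti-symmetry is immediate: if $\cD'(\ell_1,\ell_2)=1$ via D1 then $I(\ell_2)\subsetneq I(\ell_1)$, which simultaneously blocks D1 and D2 for $(\ell_2,\ell_1)$; if via D2 then $I(\ell_1)\prec I(\ell_2)$ similarly blocks both clauses in the reverse direction. Transitivity decomposes into four sub-cases D1--D1, D1--D2, D2--D1, D2--D2; in each, chaining the containment and $\prec$ relations via transitivity, together with the apex inclusions from the premises, delivers either D1 or D2 for $(\ell_1,\ell_3)$.

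The main obstacle, and the motivation for the added apex clauses, lies in the mixed cases D1--D2 and D2--D1 of transitivity. In D1--D2 for instance, the chains $I(\ell_3)\subsetneq I(\apex(\ell_2))\subseteq I(\ell_1)$ and $I(\apex(\ell_3))\subseteq I(\apex(\ell_2))\subseteq I(\ell_1)$ both hinge on the added apex inclusion $I(\apex(\ell_2))\subseteq I(\ell_1)$ from D1; without it one could not conclude D1 for $(\ell_1,\ell_3)$. Symmetrically, D2--D1 exploits the incompatibility between $\prec$ and containment to preserve the $\prec$ ordering from $\ell_1$ to $\ell_3$, using that $I(\ell_1)\prec I(\ell_2)$ and $I(\ell_3)\subsetneq I(\ell_2)$ imply $I(\ell_1)\prec I(\ell_3)$. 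Once these chains go through, the conclusion in each of the four sub-cases is automatic, yielding consistency.
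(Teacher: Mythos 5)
Your proposal is correct and follows essentially the same route as the paper's proof: label size inherited from Lemma~\ref{lemma:labelsize}, correctness via Lemma~\ref{lem:characterization} combined with the apex property of Lemma~\ref{lem:depth2} (which yields the extra inclusions $I(\apex(u))\subseteq I(v)$ and $I(\apex(u))\subseteq I(\apex(v))$ making the added clauses redundant within a single forest), and consistency by the same anti-symmetry argument and four-case transitivity analysis over \textbf{D1}/\textbf{D2} using the order properties of $\subset$ and $\prec$. The only differences are cosmetic: you check both directions of correctness explicitly and chain the apex inclusions directly from the premises in the mixed cases, whereas the paper argues one direction and routes one chain through $I(\apex(v))$.
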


\begin{proof}
The required label
size of $(\cM,\cD')$ follows directly from Theorem~\ref{theo:main}.
The correctness of the ancestry-labeling scheme $(\cM,\cD')$ follows from the observation that the additional conditions to the definition of the decoder $\cD$ are redundant in the case $v$ and  $u$ belong to the same forest. 
To see why this observation holds, we just need to consider the situation in which $v$ in an ancestor of  $u$ in some tree, and check that $\cD'(L(v),L(u))=1$. By the correctness of~$\cD$, since $v$ is an ancestor of $u$, we get that either
\begin{itemize} 
\item  $I(u)\subseteq I(v)$, or
\item  $I(u) \subset I(\apex(v))$ and $I(v) \prec I(u)$. 
\end{itemize} 

-- If $I(u)\subseteq I(v)$, then either $I(u) = I(v)$ or $I(u)\subset I(v)$. In the former case, we actually have $u=v$ because $u$ and $v$ are in the same tree, and thus \textbf{D0} holds. In this latter case, $u$ is a strict descendant of $v$ in the folding decomposition $T^*$.  By Lemma~\ref{lem:depth2}, $\apex(u)$ is a descendent of $v$ in $T^*$, implying that \textbf{D1} holds. 

-- If $I(u) \subset I(\apex(v))$ and $I(v) \prec I(u)$, then just the strict inclusion  $I(u) \subset I(\apex(v))$ already implies  $I(\apex(u)) \subseteq I(\apex(v))$, and hence \textbf{D2} holds. 

It remains to show that $(\cM,\cD')$ is consistent. To establish the anti-symmetry property, let $\ell_1$ and $\ell_2$ be two different labels in the domain of $\cM$, and assume that $\cD'(\ell_1,\ell_2)=1$. We show that  $\cD'(\ell_2,\ell_1)=0$. Let $u$ and $v$ be two nodes, not necessarily in the same forest, such that $L(v)=\ell_1$ and $L(u)=\ell_2$. Since $\ell_1\neq \ell_2$, we have that either \textbf{D1} or \textbf{D2} holds. Therefore, either $I(u)\subset I(v)$ or $I(v)\prec I(u)$. This implies that $\cD'(L(u),L(v))=0$, by the anti-symmetry of both relations $\subset$ and $\prec$. 

To establish the transitivity property, let $\ell_1$, $\ell_2$ and $\ell_3$ be three pairwise different labels in the domain of $\cM$, and assume that $\cD'(\ell_1,\ell_2)=1$ and $\cD'(\ell_2,\ell_3)=1$. We show that  $\cD'(\ell_1,\ell_3)=1$. Let $u$, $v$, and $w$ be three nodes, not necessarily in the same forest, such that $L(w)=\ell_1$, $L(v)=\ell_2$, and $L(u)=\ell_3$. Since the three labels are pairwise different,  \textbf{D0} does not hold, and thus we concentrate the discussion on \textbf{D1} and \textbf{D2}. In other words, we must have the situation: 
\begin{itemize}
\item \textbf{[D1$(v,u)$]:}  $I(u)\subset I(v)$ and $I(\apex(u))\subseteq I(v)$; or 
\item \textbf{[D2$(v,u)$]:} $I(u) \subset I(\apex(v))$, $I(v) \prec I(u)$, and $I(\apex(u))\subseteq I(\apex(v))$.
\end{itemize}
and
\begin{itemize}
\item \textbf{[D1$(w,v)$]:}  $I(v)\subset I(w)$ and $I(\apex(v))\subseteq I(w)$; or
\item \textbf{[D2$(w,v)$]:} $I(v) \subset I(\apex(w))$, $I(w) \prec I(v)$, and $I(\apex(v))\subseteq I(\apex(w))$.
\end{itemize}
We examine each of the four combinations of the above conditions, and show that, for each of them, at least one of the following two conditions holds: 
\begin{itemize}
\item \textbf{[D1$(w,u)$]:}  $I(u)\subset I(w)$ and $I(\apex(u))\subseteq I(w)$; 
\item \textbf{[D2$(w,u)$]:} $I(u) \subset I(\apex(w))$, $I(w) \prec I(u)$, and $I(\apex(u))\subseteq I(\apex(w))$.
\end{itemize}

\noindent -- \textbf{Case 1.1:} \textbf{[D1$(v,u)$]} and \textbf{[D1$(w,v)$]} hold. We get immediately that $I(u)\subset I(w)$, and $I(\apex(u))\subseteq I(w)$ by transitivity of $\subseteq$ and $\subset$, and thus \textbf{[D1$(w,u)$]} holds. 

\noindent -- \textbf{Case 1.2:} \textbf{[D1$(v,u)$]} and \textbf{[D2$(w,v)$]} hold. We show that \textbf{[D2$(w,u)$]} holds. First, we have  $I(u)\subset I(v) \subset I(\apex(w))$. Second, $I(u)\subset I(v)$  and $I(w) \prec I(v)$, so  $I(w) \prec I(u)$. Finally, $I(\apex(u))\subseteq I(v)\subseteq I(\apex(v))\subseteq I(\apex(w))$. Thus \textbf{[D2$(w,u)$]} holds. 

\noindent -- \textbf{Case 2.1:} \textbf{[D2$(v,u)$]} and \textbf{[D1$(w,v)$]} hold. We show that \textbf{[D1$(w,u)$]} holds. First, $I(u) \subset I(\apex(v)) )\subseteq I(w)$. Second, $I(\apex(u))\subseteq I(v))\subseteq I(w)$. Thus \textbf{[D1$(w,u)$]} holds.

\noindent -- \textbf{Case 2.2:} \textbf{[D2$(v,u)$]} and \textbf{[D2$(w,v)$]} hold. We show that \textbf{[D2$(w,u)$]} holds. First, $I(u) \subset I(\apex(v))\subseteq I(\apex(w))$. Second, $I(w) \prec I(v) \prec I(u)$. Finally, $I(\apex(u))\subseteq I(\apex(v))\subseteq I(\apex(w))$. Thus \textbf{[D2$(w,u)$]} holds.

This completes the proof of the lemma.
\end{proof}

By combining Lemma~\ref{labeling-poset} and Lemma~\ref{lemma:consistent}, we obtain the following.

\begin{theorem}\label{theorem:poset}
For every integer $k$, there exists a universal poset of size $O(n^k\log^{3k+o(k)} n)~$ for $~\cP(n,k)$.
\end{theorem}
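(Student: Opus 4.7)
The plan is to simply chain together the two previously established lemmas, then perform a short computation to reconcile the resulting exponent with the claimed bound. First, I would invoke Lemma~\ref{lemma:consistent} to obtain a \emph{consistent} ancestry-labeling scheme $(\cM,\cD')$ for $\cF(n)$ whose label size is $\ell = \log n + 3\log\log n + O(\log\log\log n)$ bits. Second, I would feed this scheme into the ``only if'' direction of Lemma~\ref{labeling-poset}: for any fixed integer $k$, that lemma produces a universal poset for $\cP(n,k)$ of size $2^{k\ell}$. No new combinatorial construction is required here; the two lemmas already carry the entire load.

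The only remaining step is the arithmetic, which I expect to be the easiest (but most error-prone) part. Writing
\[
2^{k\ell} \;=\; 2^{k(\log n + 3\log\log n + O(\log\log\log n))} \;=\; n^k \cdot (\log n)^{3k} \cdot 2^{O(k\log\log\log n)},
\]
I would absorb the last factor into the polylogarithmic term using the identity
\[
2^{O(k\log\log\log n)} \;=\; (\log\log n)^{O(k)} \;=\; (\log n)^{O(k\log\log\log n/\log\log n)} \;=\; (\log n)^{o(k)},
\]
where the last equality uses the fact that $\log\log\log n/\log\log n \to 0$ with $n$ (for any fixed $k$). Plugging this back gives $2^{k\ell} = O(n^k \log^{3k+o(k)} n)$, matching the claimed bound.

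The main obstacle I anticipate is purely notational: being careful about what ``$o(k)$'' means in the exponent, since $k$ is a fixed constant while $n$ grows. Concretely I want the $o(k)$ term in the exponent to vanish as $n\to\infty$ for each fixed $k$, and the chain above does exactly that. Nothing else is in the way, so once the arithmetic is laid out cleanly, the proof reduces to one sentence invoking Lemmas~\ref{lemma:consistent} and~\ref{labeling-poset} followed by the bound above.
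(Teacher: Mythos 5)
Your proposal is correct and is essentially identical to the paper's own proof, which likewise obtains Theorem~\ref{theorem:poset} by combining Lemma~\ref{lemma:consistent} with Lemma~\ref{labeling-poset} and absorbing the factor $2^{O(k\log\log\log n)}=(\log\log n)^{O(k)}$ into the $\log^{o(k)}n$ term. Your arithmetic justifying that absorption is sound, so nothing further is needed.
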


\section{Further work}\label{sec:conclusion} 

In this paper, we solved the ancestry-labeling scheme problem.
In general, by now, the area of informative labeling-schemes is quite well understood in the deterministic setting. In particular, for many of the classical problems, tight bounds for the label size are established (see Section~\ref{sec:related}).  
In contrast, the randomized framework, initiated in \cite{FK-spaa09}, remains widely open. We conclude this paper by mentioning one open problem in the framework of randomized ancestry-labeling schemes in order to illustrate the potential of randomization in this domain.

We describe a simple one-sided error
ancestry-labeling scheme with label size $\lceil \log n\rceil$ bits.  The scheme guarantees that if $u$ is an ancestor of $v$, then, given the corresponding labels $L_u$ and $L_v$, the decoder is correct, that is, $\cD(L_u,L_v)=1$,  with probability 1; and if $u$ is not an ancestor of $v$ then the decoder is correct, that is, $\cD(L_u,L_v)=0$, with probability at least $1/2$. 
The scheme operates as follows.
Given a tree $T$, the marker first randomly chooses, for  every node $v$, an ordering of $v$'s children in a uniform manner, i.e., every ordering of the children of $v$ has the same probability to be selected by the marker. Then, according to the resulted orderings, the marker performs a  DFS traversal that starts at the root $r$, and labels each node it visits by its DFS number. Given two labels $i,j$, the decoder outputs:
\[
\cD(i,j)=\left\{
\begin{array}{ll}
1 & \mbox{if $i<j$;}\\
0 & \mbox{otherwise. }
\end{array}
\right.
\]
Let $u$ and $v$ be two nodes in $T$, and let $L_u=\dfs(u)$ and $L_v=\dfs(v)$ denote their labels. If $u$ is an ancestor of $v$, then $L_u<L_v$, no matter which orderings were chosen by the marker. Thus, in this case, $\cD(L_u,L_v)=1$ with probability~1. Now,
if $u$ is not an ancestor of $v$, we consider two cases. First, if $u$ is a descendant of $v$ then $L_u>L_v$ and therefore $\cD(L_u,L_v)=0$. If, however, $u$ and $v$ are non-comparable, i.e., neither one is an ancestor of the other, then the probability that the DFS tour visited $u$ before it visited $v$ is precisely $1/2$, i.e., $\Pr[L_u<L_v]=1/2$. Hence the guarantee for correctness in this case is $1/2$. 
It was also shown in \cite{FK-spaa09} that a one-sided error ancestry-labeling scheme with constant  guarantee  must use labels of size $\log n - O(1)$. 
An interesting open question  is whether for every constant $p$, where $1/2\leq p<1$, there exists a one-sided error ancestry-labeling scheme with guarantee $p$ that uses labels of size $\log n +O(1)$.




\bigskip
\noindent{\bf Acknowledgments.} The authors are very thankful to William T. Trotter, Sundar Vishwanathan  and Jean-Sebastien S\'ereni for helpful discussions.




\begin{thebibliography}{99}
\bibitem{AAKMT01}
S.~Abiteboul, S.~Alstrup,  H.~Kaplan, T.~Milo  and T.~Rauhe.
\newblock  Compact labeling schemes for ancestor queries.
\newblock \emph{SIAM Journal on Computing }{\bf 35}, (2006), 1295--1309.

\bibitem{ABS99}
S.~Abiteboul, P.~ Buneman and D.~ Suciu.
 Data on the Web: From Relations to Semistructured Data and XML.
 \emph{Morgan Kaufmann}, (1999).

 \bibitem{AKM01}
S. Abiteboul, H.  Kaplan,  and T. Milo. 
\newblock Compact labeling schemes for ancestor queries.
\newblock In {\em Proc. 12th ACM-SIAM Symp. on Discrete Algorithms}  (SODA), 2001.

 \bibitem{Noy-ICALP}
D. Adjiashvili and N. Rotbart.
\newblock Labeling Schemes for Bounded Degree Graphs. 
\newblock ICALP, 375-386, 2014.


\bibitem{ABR05}
S. Alstrup, P. Bille and T. Rauhe.
\newblock Labeling Schemes for Small Distances in Trees.
\newblock In {\em Proc. 14th ACM-SIAM Symp. on Discrete Algorithms}  (SODA), 2003.


\bibitem{AGKR01}
S.~Alstrup, C.~Gavoille, H.~Kaplan and T.~Rauhe.
\newblock Nearest Common Ancestors: A Survey and a new Distributed Algorithm.
\newblock {\em Theory of Computing Systems }{\bf 37}, (2004), 441--456.

\bibitem{Alstrup-FOCS}
S. Alstrup, S. Dahlgaard, M. Baek, and T. Knudsen.
\newblock Optimal induced universal graphs and adjacency labeling for trees. 
\newblock Arxiv report: CoRR abs/1504.02306 (2015).


\bibitem{Alstrup-NCA}
S. Alstrup, E. B. Halvorsen, and K. G. Larsen.
\newblock Near-optimal labeling schemes for nearest common ancestors. 
\newblock In  {\em Proc. 25th ACM-SIAM Symposium on Discrete Algorithms} (SODA), 972-982, 2014.

\bibitem{Alstrup-STOC}
S. Alstrup, H. Kaplan, M. Thorup and U. Zwick.
\newblock Adjacency labeling schemes and induced-universal graphs. 
\newblock In {\em 47th Annual Symposium on the Theory of Computing} (STOC), 2015.


\bibitem{AR02+}
S. Alstrup and T. Rauhe.
\newblock Small induced-universal graphs and compact implicit graph representations.
\newblock In {\em Proc. 43rd IEEE Symp. on Foundations of Computer Science} (FOCS), 2002.

\bibitem{ARSODA02}
S. Alstrup and T. Rauhe.
\newblock Improved labeling scheme for ancestor queries.
\newblock In  {\em Proc. 13th ACM-SIAM Symposium on Discrete Algorithms} (SODA), pages 947-953, 2002.

\bibitem{AS88}
N. Alon and E. R. Scheinerman.
\newblock Degrees of Freedom Versus Dimension for Containment Orders.
\newblock In {\em Order }{\bf 5} (1988), 11--16.

\bibitem{B93}
G. Behrendt.
\newblock On trees and tree dimension of ordered sets.
\newblock {\em Order }{\bf 10(2)}, (1993), 153--160.


\bibitem{CKM02}
Cohen, E.,  Kaplan, H., and  Milo, T.
\newblock Labeling dynamic XML trees.
\newblock  In Proc. 21st ACM Symp. on Principles of Database Systems (PODS), 2002.

\bibitem{Ancestry-noy}
S. Dahlgaard, M. Baek, T. Knudsen, and Noy Rotbart. 
\newblock Improved ancestry labeling scheme for trees. 
\newblock Arxiv report: CoRR abs/1407.5011 (2014).


\bibitem{Den09}
L. Denoyer.
\newblock Personnal communication, 2009.

\bibitem{DG2006}
L. Denoyer and P. Gallinari.
\newblock The Wikipedia XML corpus.
\newblock SIGIR Forum Newsletter 40(1): 64-69 (2006)

\bibitem{DFFLD99}
A.~ Deutsch, M.~ Fern·ndez, D.~ Florescu, A.~ Levy and D.~ Suciu.
\newblock A Query Language for XML.
\newblock {\em Computer Networks} {\bf 31}, (1999), 1155-1169.

\bibitem{F53}
R. Fraisse
\newblock Theory of relations.
\newblock {\em North-Holland}, Amsterdam, 1953.

\bibitem{FG01}
P.~Fraigniaud and C.~Gavoille.
\newblock Routing in trees.
\newblock In Proc. 28th Int. Colloq. on Automata, Languages, and Programing (ICALP), LNCS 2076, pages 757--772, Springer, 2001.


\bibitem{FK-spaa09}
P.~Fraigniaud and A.~Korman.
\newblock On randomized representations of graphs using short labels.
\newblock In Proc. Proc. of the 21st Ann. ACM Symp. on Parallel Algorithms and Architectures (SPAA), 2009, pages 131--137.


\bibitem{FK09}
P.~Fraigniaud and A.~Korman.
\newblock Compact Ancestry Labeling Schemes for XML Trees.
\newblock In Proc. 21st ACM-SIAM Symp. on Discrete Algorithms (SODA), 2010, pages 458--466.

\bibitem{FK10}
P.~Fraigniaud and A.~Korman.
\newblock An optimal ancestry-labeling scheme and small universal posets.
\newblock In {\em Proc. of the 42nd ACM Symp. on Theory of Computing (STOC)}, 2010, pages 611--620.

\bibitem{Gavoille-Planar}
 C. Gavoille and A. Labourel. 
 \newblock Shorter implicit representation for planar graphs and bounded treewidth graphs. 
\newblock In {\em 15th Annual European Symposium on Algorithms} (ESA), pages 582--593, 2007.

\bibitem{GP01a}
C.~Gavoille and C.~Paul.
\newblock Split decomposition and distance labelling: an optimal
scheme for distance hereditary graphs.
\newblock In {\em Proc.  European Conf. on Combinatorics, Graph Theory and
  Applications}, Sept. 2001.

\bibitem{GPPR01}
C.~Gavoille, D.~Peleg, S.~P\'erennes and R.~Raz.
\newblock Distance labeling in graphs.
\newblock In Proc. 12th ACM-SIAM Symp. on Discrete Algorithms  (SODA), pages 210--219, 2001.

\bibitem{HL09}
 T. Hsu and H. Lu.
 \newblock An Optimal Labeling for Node Connectivity.
\newblock In Proc.  20th International Symposium on Algorithms and Computation (ISAAC), 303--310, 2009.

\bibitem{HL05}
J. Hubicka and J. Neöetril.
\newblock Universal partial order represented by means of oriented trees and other simple graphs.
\newblock {\em Euro. J. of Combinatorics } {\bf 26(5)},  (2005), 765 -- 778.

\bibitem{J56}
B. Jonson.
\newblock Universal relational systems.
\newblock {\em Math Scan. }{\bf 4}, (1956), 193--208.

\bibitem{J57}
J. B. Johnston.
\newblock Universal infinite partially ordered sets.
\newblock {\em Proc. Amer. Math. Soc. } {\bf 7}, (1957), 507--514.

\bibitem{KKKP04}
M.~Katz, N.A.~Katz, A.~Korman, and D.~Peleg.
\newblock Labeling schemes for flow and connectivity.
\newblock {\em SIAM Journal on Computing} {\bf 34} (2004),23--40.

\bibitem{K07}
A.~Korman.
\newblock  Labeling Schemes for Vertex Connectivity.
\newblock {\em ACM Transactions on Algorithms}, {\bf 6(2)} (2010).

\bibitem{KK06}
A.~Korman and S.~Kutten.
\newblock Distributed Verification of Minimum Spanning Trees.
\newblock {\em Distributed Computing} {\bf 20}(4): 253-266 (2007).

\bibitem{KM01}
H.~Kaplan and T.~Milo.
\newblock Short and simple labels for small distances and other functions.
\newblock In {\em Workshop on Algorithms and Data Structures}, Aug. 2001.

\bibitem{KMS02}
H.~Kaplan, T.~Milo and R.~Shabo.
\newblock A Comparison of Labeling Schemes for Ancestor Queries.
\newblock In Proc. 19th ACM-SIAM Symp. on Discrete Algorithms (SODA), 2002.

\bibitem{KNR92}
S.~Kannan, M.~Naor, and S.~Rudich.
\newblock Implicit representation of graphs.
\newblock {\em SIAM J. on Discrete Math }{\bf 5}, (1992), 596--603.

\bibitem{KPR06}
A.~Korman, D. Peleg, and Y. Rodeh.
\newblock Constructing Labeling Schemes Through Universal Matrices.
\newblock {\em Algorithmica} {\bf 57(4)}: 641--652 (2010).

\bibitem{MBV05}
L. Mignet, D. Barbosa and P. Veltri.
Studying the XML Web: Gathering Statistics from an XML Sample.
\emph{ World Wide Web} {\bf 8(4)}, (2005), 413--438.

\bibitem{MTP06}
I.~Mlynkova, K.~Toman and J.~ Pokorny. Statistical Analysis of Real XML Data Collections.
In \emph{Proc.  13th Int. Conf. on Management of Data}, (2006), 20--31.

\bibitem{Peleg00:lca}
D.~Peleg.
\newblock Informative labeling schemes for graphs.
\newblock \emph{Theoretical Computer Science} {\bf 340(3)}, (2005), 577--593.

\bibitem{Sleator}
D. D. Sleator and R. E. Tarjan. 
\newblock A data structure for dynamic trees. 
\newblock J. Comp. and Sys. Sci., {\bf 26(3)}, 362--391, 1983.

\bibitem{T01}
M.~Thorup.
\newblock Compact oracles for reachability and approximate distances in planar digraphs.
\newblock {\em J. of the ACM }{\bf 51}, (2004), 993--1024.

\bibitem{TZ01}
M.~Thorup and U.~Zwick.
\newblock Compact routing schemes.
\newblock In Proc. 13th ACM Symp. on Parallel Algorithms and Architecture (SPAA), pages 1--10, 2001.

\bibitem{T92}
W. T. Trotter.
\newblock Combinatorics and partially ordered sets: Dimension theory.
\newblock {\em The Johns Hopkins University Press}, (1992).

\bibitem{TM77}
W.T. Trotter and  J. I. Moore.
\newblock The dimension of planar posets
\newblock  {\em J. Comb. Theory B } {\bf 22}, (1977), 54--67.



\bibitem{W62}
E.S. Wolk.
\newblock The comparability graph of a tree.
\newblock  {\em Proc. Amer. Math. Soc. } {\bf 3}, (1962), 789--795.

\bibitem{xml}
W3C.Extensive markup language (XML) 1.0. http://www.w3.org/TR/REC-xml.

\bibitem{xsl}
W3C. Exensive stylesheet language (xsl) 1.0. http://www.w3.org/Style/XSL/.

\bibitem{xslt}
W3C. Xsl transformations (xslt) specification. http://www.w3.org/TR/WD-xslt

\end{thebibliography}
\end{document}